\newcommand{%
  \input{}%
}[1]{%
  \input{#1}%
}
\newcommand{\li}{{\lhd}}
\newcommand{\re}{{\rhd}}
\newcommand{\subtermeq}{\trianglelefteq}
\newcommand{\ZZ}{\mathbb{Z}}
\newcommand{\NN}{\mathbb{N}}
\newcommand{\Tiles}{\operatorname{\textsf{tiles}}}
\newcommand{\Tiled}{\operatorname{\textsf{tiled}}}
\newcommand{\Bord}{\operatorname{\textsf{bord}}}
\newcommand{\Btiled}{\operatorname{\textsf{btiled}}}
\newcommand{\Btiles}{\operatorname{\textsf{btiles}}}
\newcommand{\Forw}{\operatorname{\textsf{forw}}}
\newcommand{\Backw}{\operatorname{\textsf{backw}}}
\newcommand{\Lang}{\operatorname{\textsf{Lang}}}
\newcommand{\CC}{\operatorname{\textsf{CC}}}
\newcommand{\CCL}{\operatorname{\textsf{CC}}^\curvearrowleft}
\newcommand{\PA}{\operatorname{\textsf{Shift}}}
\newcommand{\PAL}{\operatorname{\textsf{Shift}}^\curvearrowleft}
\newcommand{\parto}{\operatorname{\rightharpoonup}} %
\newcommand{\Value}[1]{\llbracket#1\rrbracket}
\newcommand{\Alphabet}{\operatorname{\textsf{alphabet}}}
\newcommand{\Prefix}{\textsf{prefix}}
\newcommand{\Suffix}{\textsf{suffix}}
\newcommand{\shift}{\textsf{shift}}
\newcommand{\Term}{\textsf{Term}}
\newcommand{\Var}{\textsf{Var}}
\newcommand{\SN}{\textsf{SN}}
\newcommand{\OC}{\textsf{OC}}
\newcommand{\FC}{\textsf{FC}}
\newcommand{\ROC}{\textsf{ROC}}
\newcommand{\RFC}{\textsf{RFC}}
\newcommand{\TRFC}{\textsf{TRFC}}
\newcommand{\TROC}{\textsf{TROC}}
\newcommand{\TRFCU}{\textsf{TRFCU}}
\newcommand{\TROCU}{\textsf{TROCU}}
\newcommand{\Inf}{\textsf{Inf}}
\newcommand{\lhs}{\textsf{lhs}}
\newcommand{\rhs}{\textsf{rhs}}
\newcommand{\To}[1]{\stackrel{#1}{\to}}
\newcommand{\lab}{\textsf{lab}}
\newcommand{\aalg}{\mathcal{A}}
\newcommand{\dom}{\textsf{dom}}
\newcommand{\Deepee}{\stackrel{\textsf{DP}}{\to}}
\newcommand{\ignore}[1]{} %
\newcommand{\emptystring}{\epsilon}
\newcommand{\sem}[1]{\langle{#1}\rangle}
\newcommand\TTTT{%
 \textsf{T\kern-0.2em\raisebox{-0.3em}T\kern-0.2emT\kern-0.2em\raisebox{-0.3em}2}%
}
\newcommand{\tft}{\longrightarrow} %
\newcommand{\TileAllRFC}[1]{%
  \displaystyle%
  \operatornamewithlimits{\tft}^{\TRFC(#1)}
}
\newcommand{\TileRemoveRFC}[1]{%
  \displaystyle%
  \operatornamewithlimits{\tft}^{\TRFCU(#1)}
}
\newcommand{\TileAllROC}[1]{%
  \displaystyle%
  \operatornamewithlimits{\tft}^{\TROC(#1)}
}
\newcommand{\TileRemoveROC}[1]{%
  \displaystyle%
  \operatornamewithlimits{\tft}^{\TROCU(#1)}
}
\newcommand{\Weight}{\stackrel{\textsf{W}}{\tft}}
\newcommand{\Mirror}{\stackrel{\textsf{M}}{\tft}}
\newcommand{\Natural}{\NN}
\newcommand{\Mat}{\textsf{Mat}}
\newcommand{\Matrix}[2]{%
  \displaystyle%
  \operatornamewithlimits{\tft}^{\Mat(#2)}_{#1}
}
\newcommandx{\reminder}[2][1=]{}
\newcommand{\eex}{\qed} %
\begin{document}

\title{Sparse Tiling through  Overlap Closures %
  for Termination of String Rewriting 
}

\author[A.~Geser]{Alfons Geser}
\address{HTWK Leipzig, Germany}
\email{alfons.geser@htwk-leipzig.de}

\author[D.~Hofbauer]{Dieter Hofbauer}
\address{ASW -- Berufsakademie Saarland, Germany}
\email{d.hofbauer@asw-berufsakademie.de}

\author[J.~Waldmann]{Johannes Waldmann}
\address{HTWK Leipzig, Germany}
\email{johannes.waldmann@htwk-leipzig.de}

\newenvironment{theorem}{\thm}{\endthm}
\newenvironment{lemma}{\lem}{\endlem}
\newenvironment{corollary}{\cor}{\endcor}
\newenvironment{definition}{\defi}{\enddefi}
\newenvironment{proposition}{\prop}{\endprop}
\newenvironment{example}{\exa}{\endexa}
\newenvironment{algorithm}{\algo}{\endalgo}
\newenvironment{conjecture}{\conj}{\endconj}

\begin{abstract}
We over-approximate reachability sets in string rewriting
by languages defined by admissible factors, called tiles.
A sparse set of tiles contains only those
that are reachable in derivations,
and is constructed by completing an automaton.
Using the partial algebra defined by a sparse tiling
for semantic labelling, we obtain a transformational method
for proving local termination.
With a known result on forward closures,
and a new characterisation of overlap closures,
we obtain  methods for proving termination and relative termination, respectively.
We report on experiments showing the strength of these methods.
\end{abstract}

\maketitle

\section{Introduction}

Methods for proving  termination of rewriting (automatically)
can be classified
into syntactical (using a precedence on letters),
semantical (map each letter to a function on some domain),
or transformational, cf.~\cite{terese:termination}.
By applying a transformation, one hopes to
obtain an equivalent termination problem
that is easier to handle.

Another method for proving global termination, i.~e., termination of all derivations,
uses local termination, i.~e., termination of derivations
starting in a suitably restricted set of strings.
For example, termination on
the set $\RFC(R)$ of right-hand sides of forward closures of $R$
implies global termination.

Semantic labelling~\cite{DBLP:journals/fuin/Zantema95}
is a transformational termination proof method.
It was adapted to local termination via the concept
of partial model~\cite{DBLP:journals/corr/abs-1006-4955}.
In particular Section~8 in that paper
applies it to the RFC method, and states
that ``the challenge is to find a partial model such that the resulting
labelled total termination problem is easier than the original one.''
We now answer this challenge by Algorithm~\ref{imp:alg}
that computes a partial model in the \emph{$k$-shift algebra}.
Its domain consists of strings of length $k-1$.
The shift operation adds a letter at the right end,
and drops the letter at the left end.

If we semantically label a rewrite system $R$ over $\Sigma$
with respect to a $k$-shift algebra,
we obtain a rewrite system over $\Sigma^k$.
The elements of $\Sigma^k$ are the strings of fixed length $k$,
called \emph{tiles}.
The alphabet of the labelled system
contains factors (contiguous sub-strings)
of the original system's right-hand sides,
and possibly some more tiles due to a closure property.
Tiling is described in Section~\ref{sec:tiling}.

Even though our goal is termination of string rewriting,
we will occasionally use the language of terms
if it is necessary or convenient, for instance when we refer
to facts about partial algebras (Section~\ref{sec:pa}) or
when we represent a set of tiles by a deterministic automaton,
which is, in fact, a partial algebra (Section~\ref{sec:auto}).

We apply the tiling method to derivations
starting from right-hand sides of forward closures (Section~\ref{sec:imp})
and overlap closures (Section~\ref{sec:rel})
since local termination on these languages
implies global termination (a known result),
and relative termination, respectively.

Tiles encode information on adjacent letters,
a capability that may considerably increase
the power of other termination proof methods.
For instance, using tiling and weights only,
we obtain several automated termination proofs for Zantema's Problem
$\{a^2 b^2\to b^3a^3\}$, a classical benchmark, see Example~\ref{z001:ex}.
Our implementation is part of the Matchbox termination prover,
and it easily solves
several termination problems from the
Termination Problems Database\footnote{%
  The Termination Problems Database, Version~10.6,
  see~\url{http://termination-portal.org/wiki/TPDB}
}
that appear hard for other approaches,
e.~g., Examples~\ref{rbeans:ex}, ~\ref{r4:ex}, and ~\ref{w16:ex}.

Full, i.~e., non-sparse $2$-tiling has been employed by 
Jambox~\cite{jambox}
and Matchbox~\cite{DBLP:conf/rta/Waldmann04}
in the Termination Competition\footnote{%
  For the history and the results of the annual Termination Competition
  since 2004 see \url{http://termination-portal.org/wiki/Termination_Competition}
},
2006 followed in 2007 by Torpa~\cite{DBLP:journals/jar/Zantema05},
MultumNonMulta ~\cite{Hofbauer2016}, 
and \TTTT~\cite{DBLP:conf/rta/KorpSZM09}.
Sternagel and Middeldorp~\cite{DBLP:conf/rta/SternagelM08}
call \emph{root labelling}
the generalization of full $2$-tiling to term rewriting,
and combine it with the dependency pairs approach.
We note that  \emph{Self labelling}~\cite{MiddeldorpOhsakiZantema96}
can be seen as unrestricted shifting.

MultumNonMulta ranked first place in both categories
Standard and Relative String Rewriting of the Termination Competition 2018
mainly due to the use of (non-sparse) 2-tiling~\cite{Hofbauer2018}.
Sparse tiling contributed to Matchbox winning these two categories
of the Termination Competition 2019, see Section~\ref{sec:experiments}.

A preliminary version of this paper appeared as~\cite{DBLP:conf/rta/GeserHW19}
in the Proceedings of the 4th International Conference on Formal Structures for
Computation and Deduction, FSCD 2019.

\section{Motivating Examples}

We motivate our choice of the type of tiles for rewriting.

\subsection{Tiled Rewriting}
The 2-\emph{tiling} of the string $w=aaaba$ over alphabet $\Sigma= \{a,b\}$
is the string $\Tiled_2(w)=[aa, aa, ab, ba]$ over $\Sigma^2$,
where each letter corresponds to two adjacent letters over $\Sigma$. 
For the rewrite system $R=\{aa\to aba\}$ over $\Sigma$,
let $\Tiled_2^?(R)=\{[aa] \to [ab,ba]\}$ over $\Sigma^2$.
The question mark indicates that 
this is not the construction that we will actually use.
But let us pretend, and see what happens.
We then have $u\to_R v$ if and only if $\Tiled_2(u)\to_{\Tiled_2^?(R)}\Tiled_2(v)$,
so each $R$-derivation corresponds to a $\Tiled_2^?(R)$-derivation.
Also, we see that $aa$ disappears in $\Tiled_2^?(R)$,
therefore $\Tiled_2^?(R)$ is terminating, 
and we conclude that $R$ terminates as well.

\subsection{Context Closure}
Let us now consider $S=\{ab\to bbaa\}$.
Here, $\Tiled_2^?(S)=\{ [ab] \to [bb,ba,aa] \}$.
The letter (tile) $ab$ disappears, so $\Tiled_2^?(S)$ is terminating --- but
$S$ is not: There is an infinite derivation
$a\underline{ab} \to \underline{ab}baa \to bba\underline{ab}aa\to \dots$. 
By 2-tiling this derivation,
we obtain the derivation $[aa,ab]\to [ab,bb,ba,aa] \to \dots$
and already the first step is not represented by $\Tiled_2^?(S)$.

Naive tiling worked for $R=\{aa\to aba\}$
since the left-hand side and the right-hand side of that rule
have a common prefix and a common suffix of length 1.
We can make tiling work in general,
by padding rules with common prefixes and suffixes.
For $S$, we should use the 4 rules of
\[
  S' = \{[xa,ab,by]\to {}[xb,bb,ba,aa,ay] \mid x,y\in\{a,b\} \}.
\]
Then indeed each $S$-derivation corresponds to an $S'$-derivation
on tiled strings.

This method of tiling with context closure is a correct transformation
for proving termination, known as root labeling~\cite{DBLP:conf/rta/SternagelM08},
and it can be useful because it enlarges the signature.

\subsection{End Markers}

Now we want to tile derivations that start
in a given language $L\subseteq \Sigma^*$.
For instance, for $S=\{ab\to bbaa\}$ from above,
consider derivations starting with $w=aab$.
Here the approach does not work: The string $\Tiled_2(w)=[aa,ab]$
does not contain any redex for $S'$
as this set of redexes is $\{ [xa,ab,by]\mid x,y\in\{a,b\} \}$.
We repair this by introducing left and right end markers,
i.~e., new border symbols $\li$ and $\re$.
Then $\Bord_2(w)=\li a a b \re$
and $\Btiled_2(w)=[\li a,  aa, ab, b\re]$.
When we extend $S$ by contexts, we must also include end markers
and obtain the system with 9 rules
\[ \Btiled_2(ab\to bbaa)=
  \{ [xa,ab,by] \to [xb,bb,ba,aa,ay] \mid x\in\{\li,a,b\}, y\in\{a,b,\re\} \}.
\]
With this definition, we have restored the correspondence
between derivations of $R$ on $\Sigma^*$
and derivations of $\Btiled_2(R)$ on $\li\Sigma^*\re$.

In previous applications of 2-tiling (root labeling) for proving termination,
these end markers were not needed: if we have an infinite derivation,
then we can add contexts that contain letters from $\Sigma$
in order to obtain an infinite derivation that can be tiled.
Returning to our example,
the infinite $S$-derivation from $aab$
has no corresponding $S'$-derivation from $\Tiled_2(aab)$,
but we can add any right context, for instance, $a$,
to get an infinite $S$-derivation from $aaba$ that can be simulated by
an infinite derivation of $S'$ from $\Tiled_2(aaba)=[aa,ab,ba]$.

2-tiling had been applied after the dependency pairs transformation.
There it is important to consider rewrite steps at the top of a term,
that is, at one end of a string. In that case, end markers are already there,
in the form of marked top symbols.

\subsection{Sparse Tiling}
Adding contexts increases the number of rules substantially.
However, many of these rules are not needed in infinite derivations.
So we restrict to those tiled rules that use only tiles that appear in
derivations starting from $L$. This concept is called sparse tiling.
More formally, we want to use
\[
  \Btiled_T(R)= \Btiled_k(R)\cap T^*\times T^*
\]
such that $T$ can tile $L$, and $T$ is closed under $R$-rewriting.
This seems to have the problem that ``the set of tiles appearing
in some derivation'' is not computable.
However, we can compute an approximation.

Consider the problem of proving termination of
\[R=\{ba\to ac, cc\to bc, b\re\to ac\re, c\re\to bc\re\}\]
on the language $L=\{bc\re,ac\re\}$.
Let $T_0=\Btiles_2(L)= \{\li b, bc, c\re, \li a, ac, c\re\}$, the set
of tiles appearing in the right-bordered version 
$\{\li b c \re, \li a c \re\}$ of $L$. 
We consider all ways to cover the left-hand sides
of rules of $R$ by tiles from $T_0$.
For instance, $c\re \to bc\re $ can be left-extended by $a$ and $b$,
but not by $c$, since $cc\notin T_0$. We don't need right-extension
since the end marker is already present. 
So we obtain the two rules $[ac,c\re]\to [ab,bc,c\re]$
and $[bc,c\re]\to [bb,bc,c\re]$.
Note that $ab$ and $bb$ are fresh tiles,
appearing in a right-hand side, but not in $T_0$.
Now let $T_1=T_0\cup\{ab,bb\}$.
No further tiles can be added, so $T=T_1$ is a sparse set of tiles
for the given $R$ and $L$.

\subsection{Automata}
The next idea
is to represent a set of tiles $T$ by a deterministic automaton.
Then  ``what redexes can be $T$-tiled''
as well as ``what tiles are needed for the reduct''
are realized as tracing, or adding, paths in the automaton.
For example, in Figure~\ref{motiv:fig}, the left automaton represents $T_0$,
and the right automaton represents $T_1$.
\begin{figure}[ht!]
  \[    \begin{tikzpicture}[auto, on grid, node distance=20mm, 
  inner sep=2pt, semithick, >=stealth,  
  every state/.style={minimum size=20pt, inner sep=0pt, 
  initial distance=4mm, accepting distance=4mm}, 
  initial/.style=initial by arrow, initial text=, 
  accepting/.style=accepting by arrow] 
  \node (li) [state, initial] {$\li$};
  \node (b) [state, right of=li] {$b$};
  \node (a) [state, above of=li] {$a$};
  \node (c) [state, right of=a] {$c$};
  \node (re) [state, accepting, right of=c] {$\re$};
  \path[->] (li) edge node {$a$} (a);
  \path[->] (li) edge node [swap] {$b$} (b);
  \path[->] (a) edge node {$c$} (c);
  \path[->] (b) edge node [swap] {$c$} (c);
  \path[->] (c) edge node {$\re$} (re);
\end{tikzpicture} %
    \qquad    \begin{tikzpicture}[auto, on grid, node distance=20mm, 
  inner sep=2pt, semithick, >=stealth,  
  every state/.style={minimum size=20pt, inner sep=0pt, 
  initial distance=4mm, accepting distance=4mm}, 
  initial/.style=initial by arrow, initial text=, 
  accepting/.style=accepting by arrow] 
  \node (li) [state, initial] {$\li$};
  \node (b) [state, right of=li] {$b$};
  \node (a) [state, above of=li] {$a$};
  \node (c) [state, right of=a] {$c$};
  \node (re) [state, accepting, right of=c] {$\re$};
  \path[->] (li) edge node {$a$} (a);
  \path[->] (li) edge node [swap] {$b$} (b);
  \path[->] (a) edge node {$b$} (b);
  \path[->] (a) edge node {$c$} (c);
  \path[->] (b) edge node [swap] {$c$} (c);
  \path[->] (b) edge [loop right] node {$b$} (b);
  \path[->] (c) edge node {$\re$} (re);
\end{tikzpicture}
 \]
\caption{Automata for
  $T_0=\{\li b, bc, c\re, \li a, ac, c\re\}$ (left)
  and $T_1=T_0\cup\{ab,bb\}$ (right)}
\label{motiv:fig}
\end{figure}

In all, we obtain a procedure for completing an automaton with respect to
a rewrite system---that is  guaranteed to halt
since the set of tiles of a fixed length is finite.

Automaton are deterministic, but not necessarily complete.
Indeed we hope they are sparse!

\section{Preliminaries}

Given a set of \emph{letters} $\Sigma$, called an \emph{alphabet},
a \emph{string} is a finite sequence of letters
over $\Sigma$. 
The number of its components is the \emph{length} of the string,
and the string of length zero, the \emph{empty string},
is denoted by $\emptystring$. 
If there is no ambiguity, we denote the string composed of the letters $a_1, \dots, a_n$
by $a_1 \dots a_n$. We deal, however, also with strings of strings,
and then use the list notation $[a_1,\dots, a_n]$.
Let $\Alphabet(w)$ denote the set of letters that occur in the string $w$.
By $\Prefix(S)$ and $\Suffix(S)$ we denote
the set of prefixes and suffixes, resp., of strings from the set $S$,
and by $\Prefix_k(S)$ and $\Suffix_k(S)$ we denote their restriction
to strings of length $k$. 

We use standard concepts and notation, see, e.~g., Book and Otto~\cite{BookOtto}. 
A \emph{string rewrite system} $R$ over an alphabet $\Sigma$
is a set of rewrite rules. 
It defines a rewrite relation $\to_R$ on $\Sigma^*$. 
For a relation $\rho$ on $\Sigma^*$ 
and a set $L \subseteq \Sigma^*$, 
let $\rho(L) = \{ y \mid \exists x \in L : (x,y) \in \rho\}$. 
Hence the set of \emph{$R$-reachable} strings from $L$ is ${\to_R^*}(L)$, 
or $R^*(L)$ for short.
A language $L \subseteq \Sigma^*$ is \emph{closed with respect to $R$}
if ${\to_R}(L) \subseteq L$. 

\begin{example}\label{reach:ex:1}
  For $R=\{ ba\to ac, cc\to bc, b\re \to ac\re, c\re \to bc\re \}$,
  the reachability set $R^*(\{ ac\re, bc\re \})$ equals $(a+b)b^* c\re$.
  By definition, this set is closed with respect to $R$.
  \eex
\end{example}

A rewrite system $R$ over $\Sigma$ is called
\emph{terminating on} $L\subseteq\Sigma^*$,
in symbols $\SN(R, L)$,
if for each $w\in L$, there is no infinite $R$-derivation starting at $w$,
and $R$ is called \emph{terminating}, written $\SN(R)$,
if there is no infinite $R$-derivation at all, i.~e.,
if $\SN(R, \Sigma^*)$.
For another rewrite system $S$ over $\Sigma$,
we say that $R$ is \emph{terminating relative to} $S$ \emph{on} $L$,
in symbols $\SN(R/S, L)$,
if for each $w\in L$, there is no infinite $(R\cup S)$-derivation starting at $w$
that has infinitely many $R$ rule applications.
By $R$ is \emph{terminating relative to} $S$,
in symbols $\SN(R/S)$, we mean $\SN(R/S, \Sigma^*)$. 
In~\cite{DBLP:journals/corr/abs-1006-4955},
termination on some language is called \emph{local} termination, 
in contrast to \emph{global} termination.

\section{Tiled Rewriting}\label{sec:tiling}

In order to approximate the language of reachable strings 
we consider prefixes, factors, and suffixes of fixed length, 
called \emph{tiles}. 
Left and right end markers $\li,\re\notin\Sigma$ allow for a uniform description. 
A similar formalization is employed for two-dimensional tiling in~\cite{hfl-2dl}.

\begin{definition}
  Let $\Gamma$ be an alphabet, and let $k \geq 1$.
  The factors of length $k$ of a given string over $\Gamma$ are called its
  \emph{$k$-tiles}.
  For $n\geq 0$ and $a_1,\dots,a_n \in \Gamma$,
  the \emph{$k$-tiled version} of the string $a_1\dots a_n$
  is the string over $\Gamma^k$ of all its $k$-tiles:
  \[ \Tiled_k(a_1\dots a_n) = [a_1\dots a_k, a_2\dots a_{k+1}, \dots, a_{n-k+1}\dots a_n] \]
  This string is empty in case $n < k$.
  Slightly more formally, 
  $\Tiled_k(w) = \emptystring$ for $|w| < k$, otherwise
  $\Tiled_k(a_1\dots a_n) = [t_1, \dots, t_{n-k+1}]$ 
  with $t_1 = \Prefix_k(w)$ and $t_{i+1} = \shift(t_i, a_{i+k})$, 
  using $\shift(w,a) = \Suffix_{|w|}(wa)$,
  i.~e., $t_i$ denotes the factor of length $k$ at position $i$. 
\end{definition}

\begin{definition}
  For $k \geq 0$ and $w\in\Sigma^*$, the \emph{$k$-bordered version} of $w$ is 
  $\Bord_k(w) = \li^{k}w\re^{k}$ 
  over $\Sigma \cup \{\li, \re\}$. 
  For $k \geq 1$ and $w\in\Sigma^*$,
  by $\Btiled_k(w)$ we abbreviate $\Tiled_k(\Bord_{k-1}(w))$,
  and $\Btiles_k(w)$ stands for $\Alphabet(\Btiled_k(w))$.
\end{definition}

\begin{example}
  $\Btiled_2(abbb) = \Tiled_2(\Bord_1(abbb))=
  \Tiled_2(\li abbb \re) = [\li a, ab, bb, bb, b\re]$,
  thus $\Btiles_2(abbb) = %
  \{\li a, ab, bb, b\re\}$.
  Further, 
  $\Btiles_2(\epsilon)=\{\li\re\}$,
  $\Btiles_2(a)=\{\li a, a\re\}$, 
  and 
  $\Btiled_3(a)=[\li\li a, \li a\re, a \re \re]$. 
  \eex
\end{example}

\begin{definition}\label{tiling:def:lang}
  For $k \geq 1$, the language defined by
  a set of tiles $T \subseteq \Btiles_k(\Sigma^*)$ is
 \[
  \Lang(T) =
    \{ w \in \Sigma^* \mid \Btiles_k(w) \subseteq T \} .
 \]
\end{definition}
This is a characterization of the class of 
strictly locally $k$-testable languages~\cite{CFA,ZALCSTEIN1972151},
a subclass of regular languages.

\begin{example}\label{tiling:ex}
  For $k = 2$ and $T = \{\li a, \li b, ab, ac, bb, bc, c\re \},$
  we obtain $\Lang(T) = (a+b)b^*c$.
  This is the language of Example~\ref{reach:ex:1}.
\eex
\end{example}

The central concept is the transformation 
from a rewrite system over $\Sigma$ 
to a rewrite system over tiles over $\Sigma$.
This is first defined  for the full set of tiles,
then for tiles from a subset.

\begin{definition}\label{trs:def:btiled}
  For a rule $\ell\to r$ over signature $\Sigma$
  we define a set of rules over signature $\Btiles_k(\Sigma^*)$ by
  \begin{align*}
    \Btiled_k(\ell\to r) &= 
    \{ \Tiled_k(x\ell y)\to \Tiled_k(xry) \mid 
    x \in \Tiles_{k-1}(\li^*\Sigma^*), 
    y \in \Tiles_{k-1}(\Sigma^*\re^*) \} , 
  \end{align*}
  and for a set of tiles $T \subseteq \Btiles_k(\Sigma^*)$ let
\[
 \Btiled_T(\ell\to r) = \Btiled_k(\ell\to r) \cap T^* \times T^*, 
\]
the set of tiled rules that use tiles from $T$ only.
Both $\Btiled_k$ and $\Btiled_T$ are extended to sets of rules.
\end{definition}

\begin{example}\label{trs:expl:btiled-k}
  The set 
  $\Btiled_2(ba \to ac)$ contains 16 rules, among them
  \begin{align*}
  [\li b,  ba, a \re] \to [\li a, ac, c \re],
  [\li b, ba, aa] \to [\li a, ac, ca], \dots, \\
  [ab, ba, a \re] \to [aa, ac, c \re], \dots, 
    [cb, ba, ac] \to [ca, ac, cc].
  \end{align*}
  For $S=\{ac,a\re,ba,bb,c\re\}$,
  we get $\Tiled_S(ba\to ac)=\{ [ bb,ba,a\re]\to [ba,ac,c\re] \}$.
  \eex
\end{example}

To obtain a correct method of proving termination of $R$ on $L$,
derivations of $R$ must be reflected faithfully in derivations of $\Btiled_T(R)$.
We need to construct $T$ in such a way
that $L$ is contained in $\Lang(T)$,
and $\Lang(T)$ is closed under $R$-rewriting (Theorem~\ref{trs:thm:main}).

\section{Tiles as Semantic Labels}\label{sec:pa}

We now present tiled rewriting as an instance of semantic labelling
with respect to a partial model~\cite{DBLP:journals/corr/abs-1006-4955}.
A set of tiles will be seen as a subset of the  \emph{$k$-shift algebra}.

To use concepts and results from local termination,
we need a translation to term rewriting.
We view strings as terms with unary symbols, 
a nullary symbol (representing $\emptystring$), and variables, 
where the rightmost (!) position in the string is the topmost position in the term.
As in~\cite{DBLP:journals/corr/abs-1006-4955}, we choose this order
(left to right in the string means bottom to top in the term)
since we later use deterministic automata,
working from left to right on the string,
realising bottom-up evaluation in the algebra. 
Thus a string $a_1 a_2\dots a_n$ is translated to 
the term $a_n(\dots a_2(a_1(z))\dots)$, 
denoted by $(z)a_1\dots a_n$ for short,
where $z$ is a variable symbol.
Analogously, we write $(\emptystring)a_1\dots a_n$ in case
the term is to be understood as a ground term.
This is just postfix notation for function application,
recommended also by Sakarovitch~\cite{ETA}, p.~12.

\subsection{Partial Algebras}

We recall concepts and notation %
from~\cite{DBLP:journals/corr/abs-1006-4955}.
For a signature $\Sigma$,
a \emph{partial $\Sigma$-algebra} $\aalg = (A, \Value{\cdot})$
consists of a non-empty set $A$
and for each $n$-ary $f\in \Sigma$ a partial function
$\Value{f}:A^n\parto A$.
Given $\aalg$ and a partial assignment of variables
$\alpha:X\parto A$, the \emph{interpretation} $\Value{t,\alpha}$
of $t\in\Term(\Sigma,X)$ is defined as usual,
but note that $\Value{t,\alpha}$ may also be undefined.
If $t$ is ground, we simply write $\Value{t}$.
A partial algebra is a \emph{partial model}
of a rewrite system $R$ if for each rewrite rule $(\ell\to r)\in R$,
and each partial assignment $\alpha:\Var(\ell)\parto A$,
definedness of $\Value{\ell,\alpha}$
implies $\Value{\ell,\alpha}=\Value{r,\alpha}$.

For a partial $\Sigma$-algebra $\aalg = (A, \Value{\cdot})$,
a term $t \in \Term(\Sigma, X)$, 
and a partial assignment $\alpha : X\parto A$,
let $\Value{t, \alpha}^*$
denote the set of defined values of subterms of $t$ under $\alpha$,
i.~e., $\{ \Value{s, \alpha} \mid s \subtermeq t \text{ and }
\text{$\Value{s, \alpha}$ is defined} \}$.
For $T \subseteq A$, let
$\Lang_{\aalg}(T)$ denote the set of ground terms
that can be evaluated inside $T$,
i.~e., $\{ t \in \Term(\Sigma) \mid \Value{t}^* \subseteq T \}$,
and let $\Lang_{\aalg} = \Lang_{\aalg}(A)$.
Note that a partial algebra is a deterministic tree automaton with set of states $A$,
and partiality means that the automaton may be incomplete.
The partial algebra $\aalg$ is \emph{core}
if each element is accessible, i.~e., is the value of a ground term.

The following obvious algorithm
computes the reachable subset of a model
by successively adding elements of the algebra
that become reachable via $R$-steps.

\begin{algorithm}\label{trs:algo:algebra}\hfill
  \begin{itemize}
    \item Specification:
    \begin{itemize}
\item Input: A term rewrite system $R$ over $\Sigma$,
  a finite  $\Sigma$-algebra $\aalg = (A, \Value{\cdot})$
  that is a model for $R$,
  a set $S \subseteq A$.   
\item Output: The minimal partial sub-algebra $(T,\Value{\cdot})$ of $\aalg$
  that contains $S$ and is a partial model for $R$.
\end{itemize}
\item Implementation:
  Let $T = \bigcup_i T_i$ for the sequence 
  $S = T_0 \subseteq T_1 \subseteq \cdots$  where 
  \[
    T_{i+1} = 
    T_i \cup \bigcup \{ \Value{r, \alpha}^* \mid
    (l \to r ) \in R, \alpha : \Var(\ell)\parto T_i, \Value{l, \alpha}^* \subseteq T_i \}
  \]
  where it is sufficient to compute a finite prefix of $S$.
  We return $T$. The valuation function can be inferred,
  as it is a restriction of $\aalg$.
\end{itemize}
\end{algorithm}

\begin{proof}
  Correctness: $(T,\Value{\cdot})$ is a partial model for $R$
  if for each rule $(\ell\to r)\in R$ and for each assignment $\alpha:\Var(\ell)\parto T$
  such that $\Value{\ell,\alpha}$ is defined,
  we have $\Value{\ell,\alpha}=\Value{r,\alpha}$.
  This property is ensured by construction.
  Termination: since the sequence $T_i$ is increasing with respect to $\subseteq$,
  and bounded from above by the finite set $A$, it is eventually constant.
\end{proof}

\subsection{Semantic Labelling.}

Each symbol is labelled by the tuple of the values of its arguments: 
For $t \in \Term(\Sigma, X)$ and $\alpha : \Var(t)\parto A$
such that $\Value{t, \alpha}$ is defined, 
the \emph{labelling $\lab_\aalg(t, \alpha)$ of $t$ with respect to $\alpha$} is
\begin{align*}
  \lab_\aalg(x, \alpha) &= x, \\
  \lab_\aalg(f(t_1, \dots, t_n), \alpha) &=
  f^{\Value{t_1,\alpha}, \dots, \Value{t_n,\alpha}}(\lab_\aalg(t_1, \alpha), \dots, \lab_\aalg(t_n, \alpha)),
\end{align*}
a term over the signature
$\lab_\aalg(\Sigma) = \{ f^\lambda \mid f \in \Sigma,
  \lambda \in A^{\textsf{arity}(f)}
  \text{ such that $\Value{f}(\lambda)$ is defined} \}$. 
For a term rewrite system $R$ over $\Sigma$ 
we define the \emph{labelling} of $R$ as the term rewrite system
$\lab_\aalg(R)$ over signature $\lab_\aalg(\Sigma)$ by 
\begin{align*}
  \lab_\aalg(R) &=
                  \{ \lab_\aalg(l,\alpha) \to \lab_\aalg(r,\alpha) \mid
                  (l \to r) \in R, \alpha : \Var(l) \parto A
                  \text{ such that } \Value{l, \alpha} \text{ is defined} \}
\end{align*}

\begin{theorem}\label{theo:pa}\cite[Theorem 6.4]{DBLP:journals/corr/abs-1006-4955}
  Let $R$ be a non-collapsing term rewrite system over $\Sigma$
  and $\aalg$ be  a core partial model for $R$.
  Then $R$ is terminating on 
  $\Lang_{\aalg}$ if and only if
  $\lab_\aalg(R)$ is terminating on $\Term(\lab_\aalg(\Sigma))$.
  \qed
\end{theorem}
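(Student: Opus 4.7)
My plan is to prove the biimplication by establishing a step-wise correspondence between $R$-derivations on $\Lang_\aalg$ and $\lab_\aalg(R)$-derivations on $\Term(\lab_\aalg(\Sigma))$, mediated by the labelling map $\lab_\aalg$ on the former and the forgetful map $U\colon\Term(\lab_\aalg(\Sigma))\to\Term(\Sigma)$ (erasing every superscript label) on the latter. The proof rests on three ingredients: (i) $\Lang_\aalg$ is closed under $\to_R$; (ii) $U$ maps into $\Lang_\aalg$; and (iii) a single rewrite step in one system corresponds to a single step in the other under these maps.

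For (i), suppose $t\in\Lang_\aalg$ rewrites at position $p$ using rule $\ell\to r$ with matcher $\sigma$. Since $\ell\sigma$ is a subterm of $t$, its value is defined, and hence so is the partial assignment $\alpha(x)=\Value{\sigma(x)}$ on $\Var(\ell)$, with $\Value{\ell,\alpha}$ defined. The partial-model property then forces $\Value{r,\alpha}$ to be defined and equal to $\Value{\ell,\alpha}$; by compositionality, every subterm of $r\sigma$ has a defined value, and since the value at position $p$ is preserved, no context subterm above $p$ loses its value either. For (ii), I would induct on $s\in\Term(\lab_\aalg(\Sigma))$: every symbol $f^\lambda$ is by construction admitted only when $\Value{f}(\lambda)$ is defined, so $\Value{U(s)}$ is defined at every node.

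For (iii), the forward simulation $t\to_R t'\Rightarrow\lab_\aalg(t)\to_{\lab_\aalg(R)}\lab_\aalg(t')$ holds because labelling the instance $\ell\sigma\to r\sigma$ yields, by definition, a rule of $\lab_\aalg(R)$, and the labels of the enclosing context are literally the same in $\lab_\aalg(t)$ and $\lab_\aalg(t')$, since the value at position $p$ is preserved by (i). Conversely, every $\lab_\aalg(R)$-step on $s$ is the $\lab_\aalg$-image of an $R$-step on $U(s)$, since each labelled rule erases under $U$ to a rule of $R$. Combining these, an infinite $R$-derivation from $t_0\in\Lang_\aalg$ lifts via iterated labelling to an infinite $\lab_\aalg(R)$-derivation from $\lab_\aalg(t_0)$, and an infinite $\lab_\aalg(R)$-derivation from $s_0$ projects via $U$ to an infinite $R$-derivation from $U(s_0)\in\Lang_\aalg$.

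The main obstacle is to carry out (iii) cleanly on the context above the redex: one has to check that the labels of the enclosing symbols are syntactically equal in the labelling of the redex-term and the reduct-term, so that the same labelled context fits both sides of the step. This is exactly where preservation of $\Value{\cdot}$ at position $p$ (from (i)) combines with the non-collapsing assumption, which ensures that $r$ has a root symbol whose label is determined by $\alpha$ and thus slots into the labelled rule; a collapsing rule would drop the root entirely and force a separate argument. The core assumption on $\aalg$ is used only to rule out spurious labels on inaccessible elements, so that $\lab_\aalg(\Sigma)$ stays in step with the partial algebra and no redundant symbols appear in $\Term(\lab_\aalg(\Sigma))$.
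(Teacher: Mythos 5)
The paper does not actually prove this statement: it is imported verbatim (note the \qed inside the theorem) from \cite[Theorem 6.4]{DBLP:journals/corr/abs-1006-4955}, so there is no in-paper proof to compare against. Judged on its own merits, your attempt has a genuine gap at step (ii). The claim that $U(s)\in\Lang_{\aalg}$ for every $s\in\Term(\lab_\aalg(\Sigma))$ is false: the signature $\lab_\aalg(\Sigma)$ admits $f^\lambda$ whenever $\Value{f}(\lambda)$ is defined, but nothing forces the labels occurring in an arbitrary ground term to be consistent with the actual values of the arguments beneath them. Concretely, take $\Sigma=\{a,f\}$ with $a$ a constant and $f$ unary, $A=\{0,1\}$, $\Value{a}=0$, $\Value{f}(0)=1$, $\Value{f}(1)$ undefined; this algebra is core, and $f^0(f^0(a))\in\Term(\lab_\aalg(\Sigma))$, yet its erasure $f(f(a))$ is not in $\Lang_{\aalg}$ because $\Value{f(f(a))}=\Value{f}(1)$ is undefined. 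Hence your argument for the direction ``$R$ terminating on $\Lang_{\aalg}$ implies $\lab_\aalg(R)$ terminating'' breaks down: the projected infinite $R$-derivation need not start inside $\Lang_{\aalg}$, so it does not contradict the hypothesis.

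This is precisely where the two side conditions you treat as cosmetic do real work. Handling inconsistently labelled terms requires either viewing $\lab_\aalg(\Sigma)$ as a many-sorted signature with sort set $A$ (so that consistently labelled terms are exactly the well-sorted ones) and invoking persistence of termination for non-collapsing many-sorted systems, or a direct repair construction that replaces ill-labelled subterms by ground terms evaluating to the expected algebra elements --- which is where coreness is needed --- together with an argument that infinitely many steps of the given derivation survive the repair. Your stated uses of non-collapsingness (the root symbol of $r$) and of coreness (absence of spurious labelled symbols) are not where these hypotheses actually bite; the forward simulation and the per-step projection both go through without them. Steps (i) and (iii) of your plan are essentially correct; the theorem cannot, however, be obtained from them alone.
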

A term rewrite system is non-collapsing if no right-hand side is a variable.
A string rewrite system is non-collapsing if no right-hand side is the empty string.
When we apply Theorem~\ref{theo:pa}, this  property will be ensured
by a context-closure operation.

We also need the following extension for relative termination:

\begin{theorem}\label{theo:pa-rel}
  Let $R$ and $S$ be non-collapsing term rewrite systems over $\Sigma$
  and $\aalg$ be  a core partial model for $R\cup S$.
  Then $R$ is terminating relative to $S$ on
  $\Lang_{\aalg}$ if and only if
  $\lab_\aalg(R)$ is terminating relative to $\lab_\aalg(S)$ on $\Term(\lab_\aalg(\Sigma))$.
\end{theorem}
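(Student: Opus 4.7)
The plan is to follow the structure of the proof of Theorem~\ref{theo:pa} and adapt it to track the rule source (whether a step uses $R$ or $S$) through the labelling correspondence. The central observation is that labelling of a term depends only on the partial algebra $\aalg$, not on which rewrite system we later apply; it is a uniform transformation of symbols. Hence any derivation-transporting machinery built for Theorem~\ref{theo:pa} sends $\lab_\aalg(R)$-steps to $R$-steps and $\lab_\aalg(S)$-steps to $S$-steps, so the property ``infinitely many $R$-steps'' is preserved under both directions.

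For the ``if'' direction, I would start with an infinite $(R\cup S)$-derivation $t_0 \to t_1 \to \cdots$ from some $t_0 \in \Lang_\aalg$ that contains infinitely many $R$-steps. Since $\aalg$ is a partial model for $R\cup S$, the set $\Lang_\aalg$ is closed under $(R\cup S)$-rewriting, so each $t_i \in \Lang_\aalg$ and admits a labelling. The lifting lemma used in the proof of Theorem~\ref{theo:pa} shows that if $t \to t'$ via a rule $(\ell\to r)\in R$ at position $p$ with substitution $\sigma$, and $t\in\Lang_\aalg$, then $\lab_\aalg(t) \to \lab_\aalg(t')$ via $\lab_\aalg(\ell,\alpha)\to \lab_\aalg(r,\alpha) \in \lab_\aalg(R)$ at the same position $p$, where $\alpha$ is the partial assignment induced by $\sigma$; the labels on the context above $p$ coincide in $\lab_\aalg(t)$ and $\lab_\aalg(t')$ thanks to the partial-model equation $\Value{\ell,\alpha}=\Value{r,\alpha}$. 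Applying this step-by-step, and the analogous statement for $S$, gives an infinite $(\lab_\aalg(R)\cup\lab_\aalg(S))$-derivation from $\lab_\aalg(t_0)\in\Term(\lab_\aalg(\Sigma))$ in which every $R$-step lifts to a $\lab_\aalg(R)$-step, so infinitely many $\lab_\aalg(R)$-steps occur, contradicting the assumed relative termination of $\lab_\aalg(R)/\lab_\aalg(S)$.

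For the ``only if'' direction, I would take an infinite $(\lab_\aalg(R)\cup\lab_\aalg(S))$-derivation on $\Term(\lab_\aalg(\Sigma))$ with infinitely many $\lab_\aalg(R)$-steps and project it onto $\Term(\Sigma)$ by erasing labels. By construction of $\lab_\aalg(R)$ and $\lab_\aalg(S)$, each $\lab_\aalg(R)$-step projects to an $R$-step and each $\lab_\aalg(S)$-step to an $S$-step, yielding an infinite $(R\cup S)$-derivation with infinitely many $R$-steps. To contradict $\SN(R/S, \Lang_\aalg)$ we need the projected starting term to lie in $\Lang_\aalg$: this is the point where the ``core'' hypothesis is used, exactly as in the original Theorem~\ref{theo:pa}. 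Every label $f^\lambda$ appearing in the starting term corresponds to a defined $\Value{f}(\lambda)$, and by coreness each component of $\lambda$ is the value of some ground term, so the labelled starting term can be realized as $\lab_\aalg(s,\alpha)$ for a ground $s\in \Lang_\aalg$; the projected derivation then begins inside $\Lang_\aalg$, giving the contradiction.

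The main obstacle is the same ``core-realization'' argument that underlies Theorem~\ref{theo:pa}: an arbitrary term in $\Term(\lab_\aalg(\Sigma))$ need not have labels that are consistent with the actual subterm values, and coreness is what lets us replace it by a consistent labelling of some term in $\Lang_\aalg$ without affecting the applicability of labelled rules. Once that transport of derivations is in place, the relative-termination version is essentially a bookkeeping refinement: the correspondence treats $R$- and $S$-steps symmetrically, so counting $R$-steps is preserved in both directions. The non-collapsing hypothesis on both $R$ and $S$ plays exactly the same role as in Theorem~\ref{theo:pa}: it guarantees that no rewrite step turns a labelled context into a bare variable, so the labelling of the context above each redex is consistently defined throughout the derivation.
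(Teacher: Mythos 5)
Your proposal is correct and takes essentially the same route as the paper: the paper's proof is the one-line remark ``as in the proof of Theorem~6.4 of the local-termination paper applied to $R\cup S$, keeping track of the origin ($R$ or $S$) of rules,'' and your argument is exactly an elaboration of that, running the lifting/projection machinery of Theorem~\ref{theo:pa} on $R\cup S$ while observing that it maps $R$-steps to $\lab_\aalg(R)$-steps and $S$-steps to $\lab_\aalg(S)$-steps in both directions, so the count of strict steps is preserved.
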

\begin{proof}
  As in the proof of \cite[Theorem 6.4]{DBLP:journals/corr/abs-1006-4955}
  applied to $R\cup S$,
  keeping track of the origin ($R$ or $S$) of rules.
\end{proof}

\subsection{Tiled Rewrite Systems and Shift Algebras}\label{sec:trs}

We show that tiles define a partial algebra,
and when this algebra is used for semantic labelling,
we obtain a rewrite system over the alphabet of tiles.

\begin{definition}\label{trs:def:pa} 
  For $T\subseteq\Btiles_k(\Sigma^*)$,
  the partial algebra $\PA_k(T)$
  over signature $\Sigma\cup\{\epsilon,\re\}$
  has domain  %
  $\Prefix_{k-1}(T)\cup\Suffix_{k-1}(T)$,
  the interpretation of $\epsilon$ is $\li^{k-1}$,
  and each letter (unary symbol) $c\in \Sigma\cup\{\re\} $
  is interpreted by the unary partial function
  that maps $p$ to $\shift(p,c)$
  if $pc\in T$, and is undefined otherwise.
\end{definition}

We have the following obvious connection
(modulo the translation between strings and terms)
between the language of the algebra
(i.~e., all terms that have a defined value)
and the language of the set of tiles
(i.~e., all strings that can be covered):
\begin{proposition}\label{trs:prop:pa}
  For any set of $k$-tiles $T$, 
  $\Lang_{\PA_k(T)} = \Prefix(\Lang(T)\cdot\re^{k-1})$.
\end{proposition}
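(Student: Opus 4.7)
The plan is to prove the equation by double inclusion, after first establishing a pointwise characterization: for $u \in (\Sigma \cup \{\re\})^*$, the ground term $(\emptystring)u$ has a defined value in $\PA_k(T)$ if and only if every $k$-factor of $\li^{k-1}u$ lies in $T$. I would prove this by induction on $|u|$, with base case $\Value{\emptystring} = \li^{k-1}$ and induction step using the definition of the partial function for a letter $c$: starting from state $p$, reading $c$ succeeds exactly when $pc \in T$, and then yields $\shift(p,c)$. A by-product of the same induction is that the state reached after reading $a_1 \dots a_i$ is the length-$(k-1)$ suffix of $\li^{k-1} a_1 \dots a_i$, so the transitions performed during evaluation correspond one-to-one, in order, with the $k$-factors of $\li^{k-1}u$ read left to right.

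For the inclusion $\Prefix(\Lang(T)\cdot\re^{k-1}) \subseteq \Lang_{\PA_k(T)}$, take $u$ a prefix of $w\re^{k-1}$ with $w \in \Lang(T)$. Then $\li^{k-1}u$ is a prefix of $\Bord_{k-1}(w) = \li^{k-1}w\re^{k-1}$, so every $k$-factor of $\li^{k-1}u$ is also a $k$-factor of $\Bord_{k-1}(w)$, i.e., an element of $\Btiles_k(w)$; since $w \in \Lang(T)$, this set is contained in $T$. The characterization then yields $u \in \Lang_{\PA_k(T)}$.

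For the converse, suppose every $k$-factor of $\li^{k-1}u$ lies in $T$. Since $T \subseteq \Btiles_k(\Sigma^*)$ and every element of the latter has the shape $\li^i x \re^j$ with $x \in \Sigma^*$ and $i+|x|+j = k$, the string $u$ cannot contain a $\re$ followed by a $\Sigma$-letter, nor a run of $k$ consecutive $\re$s; hence $u = v\re^j$ with $v \in \Sigma^*$ and $0 \leq j \leq k-1$. The natural candidate witness is $w = v$, which makes $u = v\re^j$ a prefix of $v\re^{k-1} = w\re^{k-1}$; the remaining task is to verify $v \in \Lang(T)$, i.e., to show that the $k-1-j$ additional $k$-factors of $\li^{k-1}v\re^{k-1}$ not already occurring in $\li^{k-1}v\re^j$ — each of the form $s\re^m$ for a shrinking suffix $s$ of $v$ — also lie in $T$. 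I expect this to be the delicate step: I would handle it by continuing the $\PA_k(T)$-evaluation past the final state one $\re$-transition at a time, each successful transition certifying the next required tile via its $pc$-membership check. This relies on $T$ being closed under the trailing $\re$-completion, a property automatically ensured when $T$ is built via Algorithm~\ref{trs:algo:algebra} from a seed that already contains the tiles of $\Bord_{k-1}(L)$ for the initial language $L$.
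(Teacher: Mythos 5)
The paper gives no proof of this proposition at all---it is introduced as an ``obvious connection''---so the only thing to measure your attempt against is the statement itself. Your pointwise characterization (the value of $(\emptystring)u$ is defined iff every $k$-factor of $\li^{k-1}u$ lies in $T$, and the states traversed are exactly the length-$(k-1)$ suffixes) is correct, and your proof of the inclusion $\Prefix(\Lang(T)\cdot\re^{k-1})\subseteq\Lang_{\PA_k(T)}$ is complete and sound.

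Your unease about the converse inclusion is justified, but the honest conclusion is stronger than you draw: for an arbitrary set of $k$-tiles $T$ that inclusion is false, so the step you defer cannot be filled in from the stated hypotheses. Take $k=2$, $\Sigma=\{a\}$ and $T=\{\li a\}\subseteq\Btiles_2(\Sigma^*)$. The term $(\emptystring)a$ evaluates in $\PA_2(T)$ via the single transition $\li\stackrel{a}{\to}a$, licensed by $\li a\in T$, so $a\in\Lang_{\PA_2(T)}$; but every $\Btiles_2(w)$ for $w\in\Sigma^*$ contains a tile ending in $\re$, none of which is in $T$, so $\Lang(T)=\emptyset$ and $\Prefix(\Lang(T)\cdot\re)=\emptyset$. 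This is precisely the failure mode you anticipated: the trailing $\re$-transitions cannot be continued. Consequently, your fallback---appealing to the fact that the $T$ produced by Algorithm~\ref{trs:algo:algebra} in the later algorithms is seeded with all of $\Btiles_k(\rhs(R))$ and closed under reduct paths---does not prove the proposition as stated; it silently replaces ``any set of $k$-tiles'' by an additional co-accessibility hypothesis (every string all of whose bordered $k$-factors lie in $T$ extends to a member of $\Lang(T)$), and that hypothesis would itself need a proof for the completed tile sets. The defensible outcome of your attempt is therefore: the inclusion $\supseteq$ holds unconditionally (and is the half the soundness arguments downstream actually lean on), while the equality requires an extra assumption on $T$ that the proposition does not state.
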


We need the prefix closure since a language of a partial algebra
is always subterm-closed,
according to the definition from~\cite{DBLP:journals/corr/abs-1006-4955},
a feature that
is criticised~\cite{DBLP:journals/iandc/FelgenhauerT17}.

A $k$-shift algebra is a model for a rewrite system $R$
if and only if $R$ does not change the $k-1$ topmost symbols.
This property can be guaranteed by the following closure operation. 

\begin{definition}\label{trs:def:cc}
  For $k \ge 1$ and a string rewrite system $R$ over $\Sigma$,
  define its \emph{context closure}, the
  term rewrite system $\CC_k(R)$ over $\Sigma\cup\{\emptystring, \re\}$, 
  where $\emptystring$ is a constant, all other symbols are unary,
  and $z$ is a variable symbol as
  \begin{align*}
    \CC_k(R) ={} &\{ (z)\ell y \to (z)ry \mid (\ell\to r)\in R, y\in \Tiles_{k-1}(\Sigma^*\re^*) \} . 
  \end{align*}
\end{definition}

Rewrite steps of $R$ on $\Sigma^*$ 
are directly related to term rewrite steps of the context closure of $R$
on (the set of terms corresponding to) $\Sigma^*\re^{k-1}$:

\begin{proposition}
  $s \to_R t$ if and only if
  $(\emptystring)s\re^{k-1}\to_{\CC_k(R)}(\emptystring)t\re^{k-1}$.
\end{proposition}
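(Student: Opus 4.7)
The plan is to prove both directions by decomposing any rewrite step into its three natural parts (left context, matched left-hand side, right context) and checking that the translation to the postfix term representation is compatible with this decomposition.

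For the forward direction, assume $s \to_R t$, so $s = u\ell v$ and $t = urv$ for some $u,v\in\Sigma^*$ and some $(\ell\to r)\in R$. I would let $y$ be the prefix of length $k-1$ of the string $v\re^{k-1}$, which exists because $|v\re^{k-1}|\ge k-1$ and lies in $\Tiles_{k-1}(\Sigma^*\re^*)$ by construction; write $v\re^{k-1} = y w$ for the remaining suffix $w$. Then the rule $(z)\ell y \to (z)ry$ belongs to $\CC_k(R)$, and applying it with $z$ bound to the subterm $(\emptystring)u$ gives $(\emptystring)u\ell y w \to_{\CC_k(R)} (\emptystring)u r y w$, i.e.\ $(\emptystring)s\re^{k-1} \to_{\CC_k(R)} (\emptystring)t\re^{k-1}$.

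For the backward direction, suppose $(\emptystring)s\re^{k-1} \to_{\CC_k(R)} \tau$ using some rule $(z)\ell y \to (z) r y$ with $(\ell\to r)\in R$ and $y\in \Tiles_{k-1}(\Sigma^*\re^*)$. Since all non-variable symbols of the rule are unary, the match corresponds to a factorisation of the string $s\re^{k-1}$ as $w_1 \ell y w_2$, where $z$ is bound to the subterm $(\emptystring)w_1$. The key observation is that $\ell \in \Sigma^*$ contains no occurrence of $\re$, so the entire block $\ell$ must lie within the $s$-part of $s\re^{k-1}$; this forces $w_1$ to be a prefix of $s$ and $y w_2 = v\re^{k-1}$ for the corresponding suffix $v$ of $s$. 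Writing $s = w_1 \ell v$ and setting $t := w_1 r v$, we obtain $s\to_R t$, and the reduct $\tau = (\emptystring)w_1 r y w_2$ coincides with $(\emptystring)t\re^{k-1}$.

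The only technical subtlety, and the point that deserves explicit attention, is the interaction between the postfix term encoding and the location of the redex: a rule of $\CC_k(R)$ is applied at some node in the ``spine'' of the term $(\emptystring)s\re^{k-1}$, and one has to verify that such an application indeed corresponds to rewriting a factor of the underlying string. This follows cleanly because all function symbols involved are unary and $z$ is the sole variable, so matches are in bijection with positions in the string, and the constraint $\ell\in\Sigma^*$ prevents the redex from straddling into the $\re^{k-1}$ suffix. No clever argument is needed beyond this bookkeeping.
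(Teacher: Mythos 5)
Your proof is correct and is exactly the routine verification that the paper leaves implicit (it states this proposition without proof as an immediate consequence of the definitions): decompose the step into left context, redex, and right context, and check that the postfix encoding and the choice of $y=\Prefix_{k-1}(v\re^{k-1})$ make the two notions of step correspond. Your backward direction is also sound; note that the length count $|w_1|+|\ell|+|w_2|=|s|$ (since $|y|=k-1$) already forces $w_1\ell$ to be a prefix of $s$, so the ``$\ell$ contains no $\re$'' observation is not even needed.
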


Since $\CC_k(R)$ does keep the $k-1$ topmost (rightmost) symbols intact,
the shift algebra of $T$ is a partial model
provided it contains a sufficiently large set of tiles:

\begin{proposition}\label{trs:prop:pm}
  For a set of $k$-tiles $T$ and a rewrite system $R$,
  if   $\Lang_{\PA_k(T)}$ is closed with respect to $R$,
  then  $\PA_k(T)$  is a core partial model for $\CC_k(R)$.
\end{proposition}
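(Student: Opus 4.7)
I plan to verify the two assertions separately: that $\PA_k(T)$ is a partial model for $\CC_k(R)$, and that it is core. Coreness is either ensured by the construction of $T$ (which adds only reachable tiles) or obtained without loss of generality by passing to the sub-algebra of accessible elements; so my effort concentrates on the partial model property.

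Fix a rule $(z)\ell y \to (z)ry$ in $\CC_k(R)$ arising from $(\ell\to r)\in R$ and $y\in\Tiles_{k-1}(\Sigma^*\re^*)$, and a partial assignment $\alpha$ with $p:=\alpha(z)\in A$ such that $\Value{(z)\ell y,\alpha}$ is defined. The key geometric observation is that, since the states of $\PA_k(T)$ are strings of length $k-1$ and each shift simply slides a window of that length one letter to the right, applying $|\ell|+|y|$ shifts to $p$ along $\ell y$ yields the length-$(k-1)$ suffix of $p\ell y$, which equals $y$ because $|y|=k-1$. Hence $\Value{(z)\ell y,\alpha}=y$, and the same reasoning will give $\Value{(z)ry,\alpha}=y$ once I show it is defined.

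To prove that definedness, I would use coreness to select a ground term $t$ with $\Value{t}=p$; then $\Value{(t)\ell y}=\Value{(z)\ell y,\alpha}$ is defined, so by Proposition~\ref{trs:prop:pa} the underlying string $s\ell y$ lies in $\Prefix(\Lang(T)\cdot\re^{k-1})$. Since $\ell\in\Sigma^*$ is non-empty and the $\re$s in such a prefix can occur only at the far right, I conclude $s\in\Sigma^*$, so the string-level rewrite $s\ell y \to_R sry$ is a genuine $R$-step. The closure assumption then gives $sry\in\Lang_{\PA_k(T)}$, i.e.\ $\Value{(t)ry}$ is defined; since $\Value{t}=p$ is itself defined, factorising this evaluation forces every shift along $ry$ starting at $p$ to be defined, which is exactly the definedness of $\Value{(z)ry,\alpha}$.

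The main obstacle I anticipate is the bookkeeping around the end markers $\re$ and the translation between the string world and the term world: verifying that $s$ is $\re$-free, that the step $s\ell y \to_R sry$ really is an $R$-step on strings over $\Sigma$ rather than over $\Sigma\cup\{\re\}$, and that the factorisation of the evaluation of $(t)ry$ through $p$ is legitimate. Once these details are settled, the verification of the partial model property is a routine unfolding of the partial-algebra semantics and a single application of the closure hypothesis.
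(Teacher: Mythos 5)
Your proposal is correct and takes essentially the route the paper intends: the paper gives no explicit proof of Proposition~\ref{trs:prop:pm}, only the preceding remark that $\CC_k(R)$ keeps the $k-1$ rightmost symbols intact, which is exactly your observation that both sides of a context-closed rule evaluate to the right context $y$. The remaining work you describe --- obtaining definedness of $\Value{(z)ry,\alpha}$ from the closure hypothesis via an accessible witness for $\alpha(z)$, together with the bookkeeping about $\re$ and coreness --- fills in precisely the details the paper leaves implicit, and I see no gap in it.
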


Given a partial model, we use it for semantic labelling.
The labelling of $\CC_k(R)$ with respect to $\PA_k(T)$
produces a term rewrite system
that can be re-transformed to a string rewrite system
by replacing each function symbol $c$,
that is labelled with an element $p$ from the algebra,
to the string (the tile) $pc$.

We highlight the connection between tiling and semantic labelling.
\begin{proposition}\label{trs:prop:btiled-lab}
  Let $\aalg=\PA_k(T)$. Let $x\in \dom(\aalg)$,
  that is, $x\in T\cup\{\li^{k-1}\}$,
  and let $\alpha$ denote the assignment $z\mapsto x$.
  Then $\Tiled_T(xwy)=\lab_\aalg((z)w y,\alpha)$.
\end{proposition}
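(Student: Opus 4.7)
The plan is to proceed by induction on $n := |wy|$, the number of unary symbols on the spine of the term $(z)wy$, using the convention stated just above the proposition that identifies a labelled symbol $c^p$ with the tile $pc$, so that a labelled term over $\lab_\aalg(\Sigma)$ corresponds to a list of tiles read from the bottom of the spine upward.

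For the base case $n = 0$, I would observe that the term is just $z$ and $\lab_\aalg(z,\alpha) = z$, which represents the empty list of tiles $\emptystring$; meanwhile $xwy = x$ has length $k-1 < k$, so $\Tiled_T(x) = \emptystring$ as well. For the inductive step I would write $wy = u\cdot c$, where $c$ is the rightmost letter, so that $(z)wy = c((z)u)$, and unfold the outer layer of the labelling:
\[
  \lab_\aalg(c((z)u),\alpha) \;=\; c^{\Value{(z)u,\alpha}}\bigl(\lab_\aalg((z)u,\alpha)\bigr).
\]
The key small computation is to verify that $\Value{(z)u,\alpha}$ equals the length-$(k-1)$ suffix $p$ of the string $xu$; this follows by iterating Definition~\ref{trs:def:pa}, which interprets each unary symbol as $\shift$, starting from $\Value{z,\alpha} = x$. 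Once this is in hand, the induction hypothesis gives $\lab_\aalg((z)u,\alpha) = \Tiled_T(xu)$, and the top symbol $c^p$ re-translates to the single tile $pc$, so the labelled side equals $\Tiled_T(xu)\cdot[pc]$. On the tiled side, since $|xu|\geq k-1$, the definition of $\Tiled_k$ gives $\Tiled_k(xuc) = \Tiled_k(xu)\cdot[pc]$ with the same $p$, closing the induction.

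The main obstacle I expect is simply bookkeeping around the partial-function nature of both sides. By Definition~\ref{trs:def:pa}, $\Value{(z)u,\alpha}$ is defined precisely when every intermediate $k$-tile encountered along $xu$ lies in $T$, and this is exactly the condition under which $\Tiled_T(xu)$ equals $\Tiled_k(xu)$; I would therefore phrase and prove the statement as ``each side is defined iff the other is, and the two are equal in that case.'' The hypothesis $x\in\dom(\aalg)$ supplies the starting value needed to launch the induction, and the fact that $\PA_k(T)$ interprets its unary symbols as $\shift$ is what makes the algebraic and tile-theoretic recursions coincide syntactically, so beyond the definedness tracking, no further work should be required.
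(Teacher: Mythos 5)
Your proof is correct and follows essentially the same route as the paper: the paper's own (very terse) argument observes directly that the symbol at position $i$ is labelled by $\Value{xp}$ for $p$ the length-$(i-1)$ prefix, which by the shift semantics is the length-$(k-1)$ suffix of $xp$, so that the labelled symbol is exactly the $i$-th tile; your induction on $|wy|$ is just the formalization of that same computation, with the added (and welcome) care about definedness on both sides, which the paper leaves implicit.
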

As this proposition switches tacitly between strings (in the left-hand
side) and terms (in the right-hand side),
we make this more explicit in  the proof.
  \begin{proof}
  The symbol at position $i$ 
  in the term $(z)w y$
  (counting from the left, i.~e., from the bottom)
  is labelled with the interpretation of $xp$,
  where $p$ denotes the prefix (subterm) of length $i-1$.
  By the semantics of the shift operation,
  this value is the suffix of length $k-1$ of $xp$.
\end{proof}

\begin{proposition}\label{trs:prop:faith}
  $\Btiled_T(R)$ is exactly the (string rewriting translation of the)
  labelling of $\CC_k(R)$ with respect to $\PA_k(T)$.
\end{proposition}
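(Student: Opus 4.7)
The plan is to prove set equality by matching up the parametrisations of both sides and invoking Proposition~\ref{trs:prop:btiled-lab} to identify the rules themselves, while checking that the side conditions (definedness in the algebra versus containment in $T^*$) coincide.

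First I unfold both sides. A rule in $\Btiled_T(R)$ has the form $\Tiled_k(x\ell y)\to\Tiled_k(xry)$ for some $(\ell\to r)\in R$, $x\in\Tiles_{k-1}(\li^*\Sigma^*)$, $y\in\Tiles_{k-1}(\Sigma^*\re^*)$, subject to the condition that every tile appearing in either $\Tiled_k(x\ell y)$ or $\Tiled_k(xry)$ belongs to $T$. On the other hand, after the string/term conversion, a rule in $\lab_\aalg(\CC_k(R))$ with $\aalg=\PA_k(T)$ has the form $\lab_\aalg((z)\ell y,\alpha)\to\lab_\aalg((z)ry,\alpha)$ where $(\ell\to r)\in R$, $y\in\Tiles_{k-1}(\Sigma^*\re^*)$, and $\alpha$ is the partial assignment $z\mapsto x$ with $x\in\dom(\aalg)$ such that $\Value{(z)\ell y,\alpha}$ (and hence all subterm values) is defined.

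Next, I apply Proposition~\ref{trs:prop:btiled-lab} to both sides: its translation identifies $\lab_\aalg((z)\ell y,\alpha)$ with the string $\Tiled_k(x\ell y)$ of tiles, and similarly $\lab_\aalg((z)ry,\alpha)$ with $\Tiled_k(xry)$. Thus, rule-by-rule, the underlying string rewriting rule produced on the labelling side is literally the rule used on the $\Btiled_T$ side. What remains is a bijective correspondence between legitimate parameter tuples $(\ell\to r,x,y)$ on the two sides. The $(\ell\to r)$ and $y$ ranges match verbatim. For $x$, the shift-algebra semantics says that $\Value{(z)\ell y,z\mapsto x}$ is defined precisely when reading $x$ followed by $\ell y$ in $\PA_k(T)$ succeeds, which by Definition~\ref{trs:def:pa} holds iff every $k$-tile of $x\ell y$ lies in $T$, i.e.\ $\Tiled_k(x\ell y)\in T^*$. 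In particular, this forces $x\in\Prefix_{k-1}(T)\cup\{\li^{k-1}\}\subseteq\dom(\aalg)$, so the side conditions on $x$ agree.

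The one subtlety (which I see as the main obstacle) is reconciling that the semantic labelling definition explicitly demands only $\Value{\ell,\alpha}$ to be defined on the left-hand side, whereas $\Btiled_T$ also requires every tile of the right-hand side $\Tiled_k(xry)$ to be in $T$. This is resolved by observing that the very act of writing $\lab_\aalg(r,\alpha)$ requires all subterm values of $(z)ry$ under $\alpha$ to be defined (the labels in the recursive clause are exactly these values), so the right-hand side's well-definedness is built into the labelling construction as well. Thus $\Value{(z)ry,\alpha}$ is defined iff $\Tiled_k(xry)\in T^*$, and the side conditions on both sides of the claimed equality match. Combining the two directions completes the proof.
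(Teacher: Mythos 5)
Your proof is correct and follows essentially the same route as the paper's: unfold $\CC_k(R)$, let the assignment $z\mapsto x$ range over the domain of $\PA_k(T)$, and apply Proposition~\ref{trs:prop:btiled-lab} to both sides of each labelled rule. You are in fact more explicit than the paper's three-line argument about why the $T^*\times T^*$ restriction (in particular on right-hand sides) coincides with the definedness requirements built into the labelling construction, which is a point the paper leaves implicit.
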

\begin{proof}
  For a rule $(\ell\to r)\in R$ and $y\in\Tiles_{k-1}(\Sigma^*\re^*)$, 
  we have $((z)\ell y\to (z)ry) \in \CC_k(R)$. This will be labelled for
  all $\alpha:z\mapsto x$ for $x\in \dom(\aalg)$.
  Apply Proposition~\ref{trs:prop:btiled-lab} to the left-hand sides
  and right-hand sides of labelled rules.
\end{proof}

To actually enumerate $\Btiled_T(R)$ in an implementation,
we will fuse both parts of Definition~\ref{trs:def:btiled}
by restricting contexts $x$ and $y$ to be elements of $T^*$ right from the beginning.

\begin{theorem}\label{trs:thm:main}
For $k \geq 1$ and $T\subseteq \Btiles_k(\Sigma^*)$, 
if $\Lang(T)$ is closed with respect to $R$, 
then  $R$ is terminating on $\Lang(T)$ 
if and only if $\Btiled_T(R)$ is terminating.
\end{theorem}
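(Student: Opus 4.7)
The plan is to derive the theorem as a composition of Theorem~\ref{theo:pa} with the preceding propositions of the section. I would instantiate Theorem~\ref{theo:pa} with the term rewrite system $\CC_k(R)$ and the partial algebra $\aalg:=\PA_k(T)$; this yields that $\CC_k(R)$ terminates on $\Lang_{\aalg}$ if and only if $\lab_{\aalg}(\CC_k(R))$ terminates on $\Term(\lab_{\aalg}(\Sigma))$, whose right-hand side is then identified with termination of $\Btiled_T(R)$ by Proposition~\ref{trs:prop:faith}. To invoke Theorem~\ref{theo:pa} I verify that $\CC_k(R)$ is non-collapsing---immediate from Definition~\ref{trs:def:cc}, since for $k\ge 2$ the context $y$ of length $k-1$ on the right-hand side forbids a bare variable, while $k=1$ can be treated as a degenerate case---and that $\aalg$ is a core partial model of $\CC_k(R)$, which Proposition~\ref{trs:prop:pm} supplies as soon as $\Lang_{\aalg}$ is $R$-closed.

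For the required closure of $\Lang_{\aalg} = \Prefix(\Lang(T)\cdot\re^{k-1})$ (Proposition~\ref{trs:prop:pa}), I would split on the two shapes of elements: a prefix $u$ of some $w\in\Lang(T)$, or a string $w\re^j$ with $w\in\Lang(T)$ and $0\le j\le k-1$. Since no $R$-rule contains $\re$, every $R$-step acts inside the $\Sigma^*$-part. In the prefix case I extend $u$ by the missing suffix $v$ so that $uv=w\in\Lang(T)$, apply $R$-closure of $\Lang(T)$ to $uv\to_R u'v$, and infer that $u'$ is again a prefix of an element of $\Lang(T)$; the bordered case is parallel.

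It remains to bridge termination of $R$ on $\Lang(T)$ with termination of $\CC_k(R)$ on $\Lang_{\aalg}$. The unnumbered proposition preceding Proposition~\ref{trs:prop:pm} already gives the single-step correspondence $s\to_R t \iff (\emptystring)s\re^{k-1}\to_{\CC_k(R)}(\emptystring)t\re^{k-1}$. The direction from $\CC_k(R)$-termination to $R$-termination is immediate since $w\re^{k-1}\in\Lang_{\aalg}$ for every $w\in\Lang(T)$. For the converse I would show that any $\CC_k(R)$-step on $u\in\Lang_{\aalg}$ mirrors an $R$-step on the underlying $\Sigma$-prefix, and then lift an infinite $\CC_k(R)$-derivation to an infinite $R$-derivation from a full element of $\Lang(T)$ by padding with a suitable suffix.

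The step I expect to be most delicate is exactly this prefix-closure mismatch: $\Lang_{\aalg}$ is strictly larger than $\Lang(T)\cdot\re^{k-1}$, since languages of partial algebras are forced to be subterm-closed. Transporting $R$-closure and $R$-termination from $\Lang(T)$ onto partially bordered or truncated strings needs careful bookkeeping, but once that is done the theorem assembles via a clean chain through Propositions~\ref{trs:prop:pa}, \ref{trs:prop:pm}, \ref{trs:prop:faith}, and Theorem~\ref{theo:pa}.
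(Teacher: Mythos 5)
Your proposal is correct and follows essentially the same route as the paper: instantiate Theorem~\ref{theo:pa} with $\CC_k(R)$ and $\aalg=\PA_k(T)$, identify the labelled system with $\Btiled_T(R)$ via Proposition~\ref{trs:prop:faith}, and obtain the core-partial-model hypothesis from Proposition~\ref{trs:prop:pm}; you merely spell out the prefix-closure transfer and the padding argument that the paper leaves implicit. The one point you wave off is $k=1$, where $\CC_1(R)=R$ may be collapsing so Theorem~\ref{theo:pa} does not apply; the paper dispatches it by a short direct argument ($\Lang(T)=\Gamma^*$ for some $\Gamma\subseteq\Sigma$, and $\Btiled_T(R)$ is the restriction of $R$ to $\Gamma$), which your ``degenerate case'' remark should be expanded to include.
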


\begin{proof}
  The interesting case is $k\ge 2$.
  Then $\CC_k(R)$ is non-collapsing,
  and the claim follows  
  by Proposition~\ref{trs:prop:faith}
  and Theorem~\ref{theo:pa},
  applicable due to Proposition~\ref{trs:prop:pm}.

  For $k=1$, we have $\CC_1(R)=R$, which is collapsing
  in case $R$ contains a rule $\ell\to\epsilon$.
  But $\Btiles_1(\Sigma^*)$ is $\Sigma$,
  and $\Lang(T)$ is $\Gamma^*$ for some $\Gamma\subseteq \Sigma$.
  Then $\Btiled_T(R)$ is the restriction of $R$ to $\Gamma$,
  and the result follows.
\end{proof}

In applications (Algorithms~\ref{imp:alg} and~\ref{rel-imp:alg}),
the algebra (the set $T$) will be closed under other operations as well,
but we are finally interested in the tiling for $R$ only.
This already happens in the following example.

\begin{example}\label{trs:expl:rfc}%
  Let
  \[ R=\{ ba\to ac, cc\to bc, b\re \to ac\re, c\re \to bc\re \}, \]
  and let $L = R^*(\{ ac\re, bc\re \}) = (a+b)b^* c\re$,
  see Example~\ref{reach:ex:1}.
  Then $L = \Lang(T)\re$ for the set of tiles 
  \[ T = \{ \li a, \li b, ab, ac, bb, bc, c\re \}, \]
  see Example~\ref{tiling:ex}.
  By definition, $L$ is closed with respect to $R$,
  so $L$ is also closed with respect to
  \[ R_0=\{ba\to ac,cc\to bc\}\subset R. \]
  On the other hand, $\Tiled_S(R_0)$ is empty,
  as the left-hand sides cannot be covered:
  each left-hand side of  $\Tiled_2(R_0)$
  contains the letter (the tile) $ba$, or the tile $cc$,
  but neither of them is in $S$.
  This means that $\Btiled_S(R_0)$ is trivially terminating.
  By Theorem~\ref{trs:thm:main}, $R_0$ is terminating on $L$. 
  See Example~\ref{auto:expl:rfc} for a computation that
  produces $T$ from $R_0$.
  \eex
\end{example}

Theorem~\ref{trs:thm:main} can be extend for relative termination:
\begin{theorem}\label{trs:thm:rel}
For $k \geq 1$ and $T\subseteq \Btiles_k(\Sigma^*)$, 
if $\Lang(T)$ is closed with respect to $R\cup S$, 
then  $R$ is terminating relative to $S$ on $\Lang(T)$ 
if and only if $\Btiled_T(R)$ is terminating relative to $\Btiled_T(S)$.
\end{theorem}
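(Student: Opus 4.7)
The plan is to mirror the proof of Theorem~\ref{trs:thm:main}, replacing Theorem~\ref{theo:pa} by its relative version Theorem~\ref{theo:pa-rel} and tracking the origin ($R$ or $S$) of each rule throughout. The pre-existing infrastructure does essentially all the work, so the proof amounts to gluing together the same lemmas with an added bookkeeping on rule labels.

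First I would handle the main case $k \ge 2$. Since $\Lang(T)$ is closed under $R \cup S$, Proposition~\ref{trs:prop:pm} applied to $R \cup S$ gives that $\PA_k(T)$ is a core partial model for $\CC_k(R \cup S) = \CC_k(R) \cup \CC_k(S)$. Both $\CC_k(R)$ and $\CC_k(S)$ are non-collapsing for $k \ge 2$, because every right-hand side of a rule in $\CC_k(\cdot)$ carries a suffix $y \in \Tiles_{k-1}(\Sigma^*\re^*)$ of length at least one. Hence Theorem~\ref{theo:pa-rel} applies and yields that $\CC_k(R)$ is terminating relative to $\CC_k(S)$ on $\Lang_{\PA_k(T)}$ if and only if $\lab_\aalg(\CC_k(R))$ is terminating relative to $\lab_\aalg(\CC_k(S))$ on $\Term(\lab_\aalg(\Sigma))$. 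By Proposition~\ref{trs:prop:faith}, these labelled systems are exactly $\Btiled_T(R)$ and $\Btiled_T(S)$.

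It then remains to translate back from the $\CC_k$-world on $\Lang_{\PA_k(T)} = \Prefix(\Lang(T)\cdot\re^{k-1})$ (Proposition~\ref{trs:prop:pa}) to the original $R,S$-world on $\Lang(T)$. The bijective correspondence between rewrite steps of $R$ and of $\CC_k(R)$ from Proposition before Proposition~\ref{trs:prop:pm} (with the analogous statement for $S$) preserves the origin of each step, so infinite $(R\cup S)$-derivations from $\Lang(T)$ with infinitely many $R$-steps correspond exactly to infinite $(\CC_k(R)\cup\CC_k(S))$-derivations from $(\epsilon)\Lang(T)\re^{k-1} \subseteq \Lang_{\PA_k(T)}$ with infinitely many $\CC_k(R)$-steps. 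Adding prefixes from the prefix closure in Proposition~\ref{trs:prop:pa} cannot spawn new infinite derivations, as rewrite steps only act on the top (rightmost) part of the term.

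For $k = 1$, we have $\CC_1(R\cup S) = R\cup S$, and $\Lang(T) = \Gamma^*$ for $\Gamma = T \subseteq \Sigma$, so $\Btiled_T(R)$ and $\Btiled_T(S)$ are the restrictions of $R$ and $S$ to $\Gamma$, which makes the equivalence immediate. The main subtlety, and thus the main obstacle, is the bookkeeping: one must verify that in the translations (string $\leftrightarrow$ term, and context-closed term $\leftrightarrow$ original string step) the partition of steps into $R$-steps and $S$-steps is strictly preserved, so that ``infinitely many $R$-steps'' on one side transports faithfully to the corresponding notion on the other side. This is routine once one recalls that $\CC_k$ is defined rule-wise and hence commutes with disjoint union.
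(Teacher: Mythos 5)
Your proposal is correct and follows essentially the same route as the paper, which proves the theorem by combining Proposition~\ref{trs:prop:faith} and Theorem~\ref{theo:pa-rel}, applicable via Proposition~\ref{trs:prop:pm}. You merely spell out in more detail what the paper leaves implicit (the non-collapsing check for $k\ge 2$, the separate $k=1$ case as in Theorem~\ref{trs:thm:main}, and the bookkeeping of $R$- versus $S$-steps, which the paper delegates to the phrase ``keeping track of the origin of rules'' in the proof of Theorem~\ref{theo:pa-rel}).
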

\begin{proof}
  By Proposition~\ref{trs:prop:faith}
  and Theorem~\ref{theo:pa-rel},
  applicable due to Proposition~\ref{trs:prop:pm}.
\end{proof}
\section{Completion of Shift Automata}\label{sec:auto}

To apply Theorem~\ref{trs:thm:main},
we obtain an $R$-closed set $T$ of tiles
from Algorithm~\ref{trs:algo:algebra}.
Those results were presented for terms, not strings.
The detour via terms was taken in order to obtain a correctness proof.
But it has another virtue: the partial algebra $\PA_k(T)$
is in fact a deterministic automaton,
and  Algorithm~\ref{trs:algo:algebra}
is in fact a method of completing an automaton with respect
to a rewrite system, cf.~\cite{DBLP:conf/rta/Genet98}.

This model is useful both for understanding the method
(we use drawings of automata in the following examples),
and for implementing it:
we use the transition relation of
the automaton when we  check whether a redex, or reduct,
is covered by the (current) set of tiles.
A naive implementation of coverage by a set of tiles would be costly.
A more clever implementation could use suffix trees,
which are in fact automata~\cite{DBLP:journals/tcs/BlumerBHECS85}.
Our implementation represents an automaton
as a collection of sparse transition matrices~\cite{DBLP:journals/corr/Waldmann16}.
\begin{definition}
  For $k \geq 1$, a finite deterministic automaton $A$ over alphabet $\Gamma$
  is called a \emph{$k$-shift automaton}
  if the set of states is a subset of $\Gamma^{k-1}$,
  and for each transition $p\stackrel{c}{\to}_Aq$,
  state $q$ is the suffix of length $k-1$ of $pc$.
  This automaton represents the set of tiles (of length $k$)
  \( \Tiles(A) = \{ pc \mid p\stackrel{c}{\to}_A q \} .\)
\end{definition}

Each set $T\subseteq \Gamma^k$, and $w\in\Gamma^{k-1}$,
uniquely determine a $k$-shift automaton $A$
with $\Tiles(A)=T$ and minimal set of states $Q=\Prefix_{k-1}(T)\cup\Suffix_{k-1}(T)$,
start state $w$, and all states from $Q$ as accepting states.

The full $k$-shift automaton, which has $T=\Gamma^k$,
is the \emph{k-local universal automaton} of Perrin~\cite[p.~27]{Perrin90}.
It is called the \emph{subword automaton} in~\cite[p.~265]{GenovaHoogeboom17}. 

Each $k$-shift automaton $A$ over $\Gamma$
is a partial algebra $\aalg$ over $\Gamma$,
where the letter $c\in \Gamma$ is interpreted
by the partial function that maps $p$ to $\shift(p,c)$
in case that the result is a state of $A$.

Condition $\Value{\ell, \alpha}^* \subseteq T_i$
of Algorithm~\ref{trs:algo:algebra}
is equivalent to the existence of a path in the automaton $T_i$
that starts at state $p=\alpha(z)$ and is labelled $\ell$.
We call this a \emph{redex path} $p\stackrel{\ell}{\to} q$.
Adding tiles then corresponds to adding edges and states.
Whenever we add edges for some reduct path $p\stackrel{r}{\to}q'$,
corresponding to $\Value{r, \alpha}^* \subseteq T_i$,
the target state of each transition is determined
by the shift property of the automaton.
This is in contrast to other completion methods
where there is a choice of adding fresh states,
or re-using existing states.
The set of states could be defined to be $\Gamma^{k-1}$ in advance,
but for efficiency, we %
only store accessible states,
and %
add states as soon as they become accessible.

With the automata representation,
we implement $\Btiled_T(R)$ as follows:
To determine  $x\ell y$ in Definition~\ref{trs:def:btiled},
we compute all pairs $p,q$ of states with $p\stackrel{\ell}{\to}q$.
This can be done by starting at each $p$,
but our implementation uses the
product-of-relations method of~\cite{DBLP:journals/corr/Waldmann16}.
Note that $p$, the state where the redex path starts,
\emph{is} actually $x$, the left context.
From state $q$, we follow all paths of length $k-1$
to determine the set of $y$ (right contexts).
For each such pair $(x,y)$, we add the path
starting at $x$ labelled $ry$.
Note that this path (for the context-closed reduct)
meets the path for $\ell y$ (the context-closed redex) in the end,
since the automaton is a shift automaton.
The tree search for possible $y$ can be cut short
if we detect earlier that these paths meet.

The following example demonstrates completion only.
For examples that use the completed automaton
for semantic labelling, see Section~\ref{sec:imp}.

\begin{example}\label{auto:expl:rfc}
  For $R=R_0\cup R_1$ with
  \[ R_0=\{ bay\to acy, ccy\to bcy\mid y\in \{a,b,c,\re\}\},
    R_1=\{b\re \to ac\re, c\re \to bc\re \},
  \]
  we are interested in the reachability set $R^*(L)$
  for $L=\{ac\re,bc\re\}$.
  We choose $k = 2$ and represent
  $\Tiled_2(L) = \{ [\li b, b c, c \re], [\li a, a c, c \re] \}$
  by the left automaton in Figure~\ref{auto:fig:rfc}.

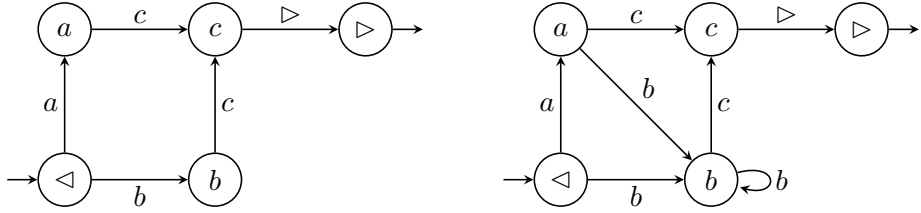
\begin{figure}[ht!]
  \[ \begin{tikzpicture}[auto, on grid, node distance=20mm, 
  inner sep=2pt, semithick, >=stealth,  
  every state/.style={minimum size=20pt, inner sep=0pt, 
  initial distance=4mm, accepting distance=4mm}, 
  initial/.style=initial by arrow, initial text=, 
  accepting/.style=accepting by arrow] 
  \node (li) [state, initial] {$\li$};
  \node (b) [state, right of=li] {$b$};
  \node (a) [state, above of=li] {$a$};
  \node (c) [state, right of=a] {$c$};
  \node (re) [state, accepting, right of=c] {$\re$};
  \path[->] (li) edge node {$a$} (a);
  \path[->] (li) edge node [swap] {$b$} (b);
  \path[->] (a) edge node {$c$} (c);
  \path[->] (b) edge node [swap] {$c$} (c);
  \path[->] (c) edge node {$\re$} (re);
\end{tikzpicture} %
 \qquad \begin{tikzpicture}[auto, on grid, node distance=20mm, 
  inner sep=2pt, semithick, >=stealth,  
  every state/.style={minimum size=20pt, inner sep=0pt, 
  initial distance=4mm, accepting distance=4mm}, 
  initial/.style=initial by arrow, initial text=, 
  accepting/.style=accepting by arrow] 
  \node (li) [state, initial] {$\li$};
  \node (b) [state, right of=li] {$b$};
  \node (a) [state, above of=li] {$a$};
  \node (c) [state, right of=a] {$c$};
  \node (re) [state, accepting, right of=c] {$\re$};
  \path[->] (li) edge node {$a$} (a);
  \path[->] (li) edge node [swap] {$b$} (b);
  \path[->] (a) edge node {$b$} (b);
  \path[->] (a) edge node {$c$} (c);
  \path[->] (b) edge node [swap] {$c$} (c);
  \path[->] (b) edge [loop right] node {$b$} (b);
  \path[->] (c) edge node {$\re$} (re);
\end{tikzpicture}
 \]
\caption{Completing a 2-shift automaton}
\label{auto:fig:rfc}
\end{figure}

  In the initial automaton we look for paths of the form $p\stackrel{\ell}{\to}q$
  for some rule $\ell\to r \in R$.
  Two such paths exist, $a\stackrel{c\re}{\to}\re$ and $b\stackrel{c\re}{\to}\re$. 
  Completion therefore adds the paths
  $a\stackrel{bc\re}{\to}\re$ and $b\stackrel{bc\re}{\to}\re$ for the corresponding
  right-hand sides, resulting in the new edges $a\stackrel{b}{\to}b$ and 
  $b\stackrel{b}{\to}b$ (and no new nodes),
  depicted by the right automaton $A$,
  with   \[\Tiles(A) = \{ \li a, \li b, ab, ac, bb, bc, c\re \}. \]
  No further completion steps are possible, thus
  $R^*(\{ ac\re, bc\re \}) \subseteq \Lang(\Tiles(A))\re$ .
\eex
\end{example}

\section{Termination Proofs via Forward Closures}\label{sec:imp}

We recall the method of proving global termination of $R$
by proving local termination on the set $\RFC(R)$
of right-hand sides of forward closures,
and then use tiling to both approximate that set, and label $R$.
We obtain a transformational termination proof method.
Indeed, we show examples that apply the RFC tiling transformation repeatedly.
This is in contrast to RFC matchbounds~\cite{DBLP:journals/aaecc/GeserHW04}
which use the same idea, but as a one-shot method.

\subsection{Forward Closures}

Given a rewrite system $R$ over alphabet $\Sigma$,
a \emph{closure} $C=(s,t)$ of $R$
is a pair of strings with $s\to_R^+ t$
such that each position between letters of $s$
was touched by some step of the derivation. 
In particular, we use \emph{forward closures}~\cite{LankMuss-78}.
The set $\FC(R)$ of forward closures of $R$ is defined as the least set
of pairs of strings that contains $R$ and satisfies
\begin{enumerate}
\item
  if $(s,xuy)\in\FC(R)$ and $(u,v)\in\FC(R)$ then $(s,xvy)\in\FC(R)$, 
\item
  if $(s,xu)\in\FC(R)$ and $(uy,v)\in\FC(R)$ for $u \neq \epsilon \neq y$
  then $(sy,xv)\in\FC(R)$.
\end{enumerate}
Let $\RFC(R)$ denote the set $\rhs(\FC(R))$
of right-hand sides of forward closures of $R$

They are related to termination by
\begin{theorem}\label{theo:rfc}~\cite{DBLP:conf/icalp/Dershowitz81}
  $R$ is terminating on $\Sigma^*$
  if and only if $R$ is terminating on $\RFC(R)$.
\end{theorem}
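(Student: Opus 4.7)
The forward direction is immediate, since $\RFC(R) \subseteq \Sigma^*$ makes termination on $\Sigma^*$ the stronger hypothesis.

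For the converse, the plan is to prove the contrapositive: given an infinite $R$-derivation in $\Sigma^*$, extract an infinite $R$-derivation starting from a right-hand side of a forward closure. I would choose a starting string $w_0$ of minimum length admitting an infinite derivation $D : w_0 \to_R w_1 \to_R w_2 \to_R \cdots$, and within the set of such $D$ fix one standardised under some canonical ordering of independent steps. The central \emph{active-prefix} claim to establish is that the first step $w_0 = x \ell_1 y \to x r_1 y = w_1$, with $(\ell_1 \to r_1) \in R$, may be assumed to have $x = \emptystring$: standardisation ensures that no later step of $D$ sits strictly inside $x$ (otherwise it could be permuted earlier, contradicting the choice of the first step), and minimality then forbids $x$ from being inert overall, since deleting an entirely untouched $x$ would give an infinite derivation from the strictly shorter string $\ell_1 y$. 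Hence $w_0 = \ell_1 y$ and the trivial closure $(\ell_1, r_1) \in \FC(R)$ witnesses $r_1 \in \RFC(R)$.

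Iterating this along $D$, I would build a sequence of forward closures $(\ell_1, t_i) \in \FC(R)$ whose right-hand sides $t_i$ embed in $w_{i+1}$, so that a tail $t_1 \to_R^+ t_2 \to_R^+ \cdots$ remains infinite and lives in $\RFC(R)$. At each stage, the step $w_i \to_R w_{i+1}$ either acts strictly inside the residual of the current right-hand side, in which case clause (1) of the definition of $\FC(R)$ absorbs it into a longer closure, or it overlaps the frontier with the untouched right context, in which case clause (2) extends the closure over the overlapping letters; the active-prefix claim, re-applied at each stage, rules out steps sitting in an untouched left context.

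The hardest part, I expect, is the formal standardisation-plus-minimality argument: one needs a residual/permutation calculus for string rewriting precise enough to make "later steps strictly inside a prefix" and "frontier between residual and inert context" well-defined, and to justify removing or reordering such steps without disturbing the infinite tail. Once this calculus is in place, the clause-(1)/clause-(2) bookkeeping is routine, and the extracted chain of right-hand sides gives the desired infinite derivation starting in $\RFC(R)$.
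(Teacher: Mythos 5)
The paper does not prove this theorem at all: it is imported from Dershowitz~\cite{DBLP:conf/icalp/Dershowitz81}, with a pointer to~\cite{DBLP:journals/jar/Zantema05} for a self-contained proof of the related Proposition~\ref{prop:rfc}. So there is no in-paper argument to compare yours against; judged on its own, your proposal has a genuine gap in the \emph{active-prefix claim}. Your case analysis for the left context $x$ of the first step $w_0=x\ell_1y\to xr_1y$ is: either some later step lies strictly inside $x$ (excluded by standardisation) or $x$ is inert (excluded by minimality). This omits the third possibility, that $x$ is consumed by a redex \emph{overlapping} the boundary between $x$ and the residual of $r_1y$, i.e., a redex $uv$ with $u$ a nonempty suffix of $x$ and $v$ a nonempty prefix of the created material. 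Such a step is not independent of the preceding steps, so standardisation cannot move it to the front, and it touches $x$, so $x$ cannot be deleted. Concretely, for $R=\{ac\to ab,\ b\to c\}$ the minimal immortal length is $2$, and $w_0=ab$ is a legitimate choice of minimal-length immortal string: its only redex is $b$, so every infinite derivation from it begins with $ab\to ac$, where $x=a\neq\emptystring$, and the derivation can only continue via the redex $ac$ that overlaps $x$. Moreover, your construction would then anchor the forward closures at the source $\ell_1=b$ with right-hand side $c$, which is a normal form admitting no clause-(2) extension (no left-hand side begins with $c$); the immortal elements of $\RFC(R)$ here are $ab$ and $ac$, which arise from closures rooted at the \emph{other} rule, so the iteration never finds them.

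The same left-overlap case recurs at every later stage of your bookkeeping, and absorbing it requires prefix extension --- clause~(2') of overlap closures, realised in the paper by the $\Backw$ rules --- which forward closures deliberately lack. This asymmetry is exactly why the theorem is non-trivial and why $\RFC$ fails for relative termination (Example~\ref{rel:ex:rfc}) while $\ROC$ is needed there. A symmetric repair of your argument would establish the overlap-closure statement (Theorem~\ref{thm:rel:sn-roc} with $S=\emptyset$), not this one; the real content of Dershowitz's theorem is the additional argument showing that the left-overlap case can be avoided by a more careful choice of which closure to track, and that argument is absent from your sketch. As a calibration point, the paper's analogous rearrangement argument for overlap closures (Theorem~\ref{thm:oc} in the Appendix) needs a rewrite system of more than fifty rules on composition trees together with its own termination proof, so the ``routine bookkeeping'' you defer is where essentially all of the work lies.
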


The set $\FC(R)$ can also be characterized without recursion in the second
premise, as observed by Herrmann~\cite[Corollaire~2.16]{Her-Habil}
in the term rewriting case:
\begin{enumerate}
\item
  if $(s,x\ell y)\in\FC(R)$ and $(\ell,r)\in R$ then $(s,xry)\in\FC(R)$, 
\item
  if $(s,x\ell_1)\in\FC(R)$ and $(\ell_1\ell_2,r)\in R$ for $\ell_1 \neq \epsilon \neq \ell_2$
  then $(s\ell_2,xr)\in\FC(R)$.
\end{enumerate}

This can be used to recursively characterize the set $\RFC(R) = \rhs(\FC(R))$
of right hand sides of forward closures
directly~\cite{DBLP:journals/aaecc/GeserHW04}:
\begin{enumerate}
\item
  if $x\ell y\in\RFC(R)$ and $(\ell,r)\in R$ then $xry\in\RFC(R)$, 
\item
  if $x\ell_1\in\RFC(R)$ and $(\ell_1\ell_2,r)\in R$ for $\ell_1 \neq \epsilon \neq \ell_2$
  then $xr\in\RFC(R)$.
\end{enumerate}

The set $\RFC(R)$ can also be characterized by rewriting,
where the fresh symbol $\re \notin \Sigma$ restricts rewriting to suffixes.

\begin{definition} For $k\ge 1$,
  let $ \Forw_k(R)$ denote
  $\{ \ell_1\re^k \to r\re^k \mid (\ell_1\ell_2\to r)\in R, \ell_1 \neq \epsilon \neq \ell_2\},$
  and abbreviate $\Forw_1(R)$ by $\Forw(R)$.
\end{definition}
\begin{proposition}\label{prop:rfc}
  $\RFC(R)\re = (R \cup \Forw(R))^*(\rhs(R)\re)$.
\end{proposition}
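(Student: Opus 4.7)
The plan is to prove both inclusions separately, using the recursive characterization of $\RFC(R)$ given just before the proposition as the induction principle in one direction, and induction on derivation length in the other.

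For the inclusion $(R \cup \Forw(R))^*(\rhs(R)\re) \subseteq \RFC(R)\re$, I would induct on the length of a derivation starting in $\rhs(R)\re$. The base case is immediate since any $r \in \rhs(R)$ belongs to $\RFC(R)$ (take the forward closure $\ell \to r$ itself). For the inductive step, I would assume the intermediate string has the form $w\re$ with $w \in \RFC(R)$, and consider a one-step reduction. Because $\re \notin \Sigma$, any $R$-step must happen strictly inside $w$, giving $w \to_R w'$ with $w'\re$ still ending in $\re$; clause~2 of the recursive characterization of $\RFC(R)$ then yields $w' \in \RFC(R)$. For a $\Forw(R)$-step, the applied rule has the form $\ell_1\re \to r\re$ arising from a rule $\ell_1\ell_2 \to r \in R$ with $\ell_1, \ell_2 \neq \epsilon$, so $w\re = x\ell_1\re$ and the reduct is $xr\re$; clause~3 applies and gives $xr \in \RFC(R)$.

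For the inclusion $\RFC(R)\re \subseteq (R\cup \Forw(R))^*(\rhs(R)\re)$, I would induct on the construction of $\RFC(R)$ via the three clauses. In the base case $w = r$ for some $(\ell \to r) \in R$, we have $w\re \in \rhs(R)\re$ trivially. For clause~2, suppose $w = xry$ arose from $x\ell y \in \RFC(R)$ and $(\ell\to r)\in R$; by induction $x\ell y\re$ is reachable from $\rhs(R)\re$ in $R\cup\Forw(R)$, and one more $R$-step gives $xry\re = w\re$. For clause~3, suppose $w = xr$ arose from $x\ell_1 \in \RFC(R)$ and $(\ell_1\ell_2 \to r)\in R$ with $\ell_1, \ell_2 \neq \epsilon$; by induction $x\ell_1\re$ is reachable, and the $\Forw(R)$-rule $\ell_1\re \to r\re$ fires to produce $xr\re = w\re$.

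The proof is essentially a bookkeeping exercise: the $\Forw(R)$-rules are designed exactly to simulate the overlap step (clause~3) using the end-marker as a syntactic witness that the overlap occurs at the right end, while ordinary $R$-rewriting under the marker simulates clause~2. I do not anticipate a serious obstacle; the only subtlety is to observe that the presence of the single marker $\re$ forbids any $R$-step from straddling or erasing it, so that reachable strings always retain the trailing $\re$ and the induction goes through cleanly. The side conditions $\ell_1 \neq \epsilon \neq \ell_2$ appear identically in the definition of $\Forw(R)$ and in clause~3, which makes the simulation exact in both directions.
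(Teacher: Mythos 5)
Your proof is correct. The paper itself gives no proof of this proposition---it only points to a similar self-contained argument in Zantema's work and states the recursive characterization of $\RFC(R)$ immediately beforehand---and your two-inclusion induction (derivation length for $\supseteq$, structure of the recursive characterization for $\subseteq$, with the observation that the single trailing $\re$ is preserved and forces $\Forw(R)$-redexes to sit at the right end) is exactly the intended argument that fills that gap.
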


For a self-contained proof of a similar result,
see \cite[Section 6]{DBLP:journals/jar/Zantema05}.
The difference is that our $\Forw(R)$  checks the end marker and keeps it,
while \cite{DBLP:journals/jar/Zantema05} checks and removes it,
starting from $\rhs(R)\re^*$.

\begin{example}\label{closure:ex:1}
  For $R = \{ba\to ac, cc\to bc \}$ we have
  $\Forw_1(R) = \{b\re\to ac\re, c\re\to bc\re \}$ and
  $\RFC(R) = (a+b) b^* c$, cf.~Example~\ref{reach:ex:1}.
  As $\RFC(R)$ contains no $R$-redex, 
  $R$ is trivially terminating on $\RFC(R)$, 
  therefore $R$ is terminating by Theorem~\ref{theo:rfc}.   
  Later, we apply Algorithm~\ref{imp:alg}
to obtain this termination proof automatically.
\eex
\end{example}

In the previous example, $\RFC(R)$ was a regular language.
Things are not always that simple:

\begin{example}\label{closure:ex:nonreg}
  For $R = \{a \to bab\}$,
  we have $\RFC(R) = \ROC(R) = \{b^n a b^n \mid n\geq 1\}$,
  a non-regular language. 
  Also for the terminating system $R = \{ab \to baa\}$,
  the language $\RFC(R)\cap b^*a^* = \{b^n a^{2^n} \mid n\geq 1\}$
  is non-regular, hence also $\RFC(R)$ is non-regular
  (this is Example~19 from~\cite{DBLP:journals/aaecc/GeserHW04}).
  \eex
\end{example}

This implies that for any algorithm that computes a finite automaton $A$
that contains $\RFC(R)$, there are some inputs $R$
where the inclusion $\RFC(R)\subset \Lang(A)$ is strict.

\subsection{Tiling for Forward Closures}

We transform a termination problem as follows:

\begin{algorithm}\label{imp:alg}[Tiling for RFC, Abbreviation $\TRFC$]
  \begin{itemize}
  \item Specification:
    \begin{itemize}
    \item Input: A rewrite system $R$ over $\Sigma$, a number $k$
    \item Output: A rewrite system $R'$ over $\Btiles_k(\Sigma^*)$
      such that $\SN(R)\iff\SN(R')$
    \end{itemize}
  \item Implementation:
    We call Algorithm~\ref{trs:algo:algebra},
    with these arguments:
    \begin{enumerate}
    \item 
      the term rewrite system $\CC_k(R) \cup \Forw_k(R)$
      over signature  $\Sigma \cup \{\re\}$,

    \item
      the $k$-shift algebra over signature $\Sigma\cup\{\epsilon,\re\}$,
    \item 
      and the set of domain elements $\Prefix_{k-1}(T)\cup\Suffix_{k-1}(T)$
      for $T=\Btiles_k(\rhs(R))$.
    \end{enumerate}
    Algebra operations are implemented as in Section~\ref{sec:auto}.
    We obtain a partial algebra $\aalg=(A,\Value{\cdot})$.
    We output $R'=\Btiled_U(R)$, where $U=\Tiles(\aalg)$.
  \end{itemize}
\end{algorithm}
\begin{proof} Correctness:
  $\Lang(T)$ contains $\rhs(R)$ by construction:
  for any $r\in\rhs(R)$, we have  $\Tiles_k(\li^{k-1}r\re^{k-1}) \subseteq T$
  by construction of $T$.
  Proposition~\ref{trs:prop:btiled-lab} for $\alpha:z\mapsto\li^{k-1}$
  implies that  $\lab_\aalg((\epsilon)r\re^{k-1})$ is defined.

  $\PA_k$ is  a model for $\CC_k(R)\cup\Forw_k(R)$:
  for each $(l,r)\in\CC_k(R)\cup\Forw_k(R)$,
  we have $\Suffix_{k-1}(l)=\Suffix_{k-1}(r)$,
  therefore, for each $\alpha$, $\Value{l,\alpha}_{\PA_k}=\Value{r,\alpha}_{\PA_k}$.

  This means that the precondition of Algorithm~\ref{trs:algo:algebra} holds.
  Then $\aalg$ is a partial model of $\CC_k(R)\cup\Forw_k(R)$
  that contains $\rhs(R)\re^{k-1}$.

  So $\Lang(\aalg)$ contains $(\CC_k(R)\cup\Forw_k(R))^*(\rhs(R)\re^{k-1})$,
  a superset of $\RFC(R)\re^{k-1}$ by Proposition~\ref{prop:rfc}.

  By Theorem~\ref{trs:thm:main},   $\SN(R')$ iff $\SN(R,\Lang(U))$. 

  By  $\RFC(R)\subseteq\Lang(U)$,
  we have that $\SN(R,\Lang(U))$ implies $\SN(R,\RFC(R))$.

  By Theorem~\ref{theo:rfc}, this implies $\SN(R)$.

  For the other direction, $\SN(R)$ implies $\SN(R,L)$
  for any language $L$, in particular, for $L=\Lang(U)$.
\end{proof}

This approach had already been described
in~\cite{DBLP:journals/corr/abs-1006-4955}, Section 8, 
but there it was left open how to find a suitable partial algebra.
An implementation used a finite-domain constraint solver,
but then only small domains could be handled.

In the present paper, we instead construct
a suitable $k$-shift algebra by completion.
Even if it that algebra is large,
it might help solve the termination problem,
cf.~Example~\ref{z001:ex} below.
We give a few smaller examples first.
In fact, Example~\ref{auto:expl:rfc} already illustrates the algorithm,
since $R_0=\CC_1(\{ba\to ac,cc\to bc\})$, $R_1=\Forw_1(\{ba\to ac,cc\to bc\})$,
and $L=\rhs(\{ba\to ac,cc\to bc\})\re$.

\begin{example}

  We apply Algorithm~\ref{imp:alg} with $k=3$ to $R=\{ab^3\to bbaab\}$.
  We obtain 11 reachable tiles and 12 labelled rules.
  All of them can be removed by weights. 
  We start with the automaton for $\Btiled_3(bbaab)$
  (solid edges in Figure~\ref{ex4:fig}).
\begin{figure}[ht!]
  \begin{tikzpicture}[auto, on grid, node distance=20mm, 
  inner sep=2pt, semithick, >=stealth,  
  every state/.style={minimum size=20pt, inner sep=0pt, 
  initial distance=4mm, accepting distance=4mm}, 
  initial/.style=initial by arrow, initial text=, 
  accepting/.style=accepting by arrow]
  
  \node(ll) [state,initial] {$\li^2$} ;
  \node(lb) [state,right of=ll] {$\li b$} ;
  \node(bb) [state,right of=lb] {$b^2$} ;
  \node(ba) [state,below of=bb] {$ba$} ;
  \node(aa) [state,right of=ba] {$a^2$} ;
  \node(ab) [state,above of=aa] {$ab$} ;
  \node(br) [state,right of=ab] {$b\re$} ;
  \node(rr) [state,accepting,right of=br] {$\re^2$} ;
  \path[->] (ll) edge node {$b$} (lb);
  \path[->] (lb) edge node {$b$} (bb);
  \path[->] (bb) edge [swap] node {$a$} (ba);
  \path[->] (ba) edge [swap] node {$a$} (aa);
  \path[->] (aa) edge [swap] node {$b$} (ab);
  \path[->] (ab) edge node {$\re$} (br);
  \path[->] (br) edge node {$\re$} (rr);
  \path[->,dashed] (ba) edge [bend right] node {$b$} (ab);
  \path[->,dashed] (ab) edge [swap] node {$b$} (bb);
  \path[->,dotted] (bb) edge [loop above] node {$b$} (bb);
  \path[->,dash dot] (ab) edge [bend right] node {$a$} (ba);
\end{tikzpicture}
\caption{Algorithm $\TRFC_3$ on input $\{ab^3\to bbaab\}$}
\label{ex4:fig}
\end{figure}
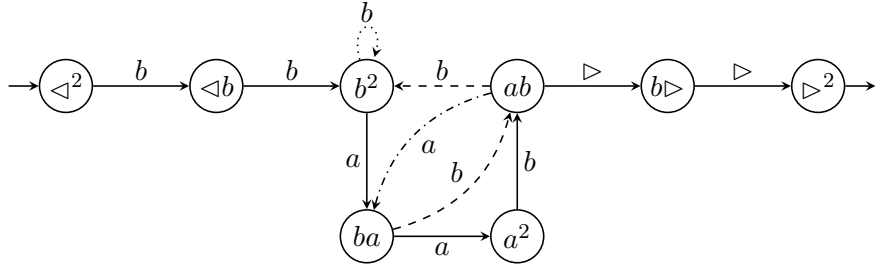

  It contains no $R$-redex. There is a $\Forw_r(R)$-redex
  for $ab\re^2\to bbaab\re^2$ starting at $ba$.
  We add a reduct path, starting with two fresh (dashed) edges.
  This creates a $\Forw_3(R)$-redex for $ab\re^2\to bbaab\re^2$  from $b^2$.
  To cover this, we add the loop at $b^2$ (dotted).
  Now we have $\CC_3(R)$-redexes
  $ba \To{a} a^2 \To{b} ab \To{b} b^2 \To{b} b^2\To{x}bx\To{y}xy $
  for $x,y\in\{a,b\}$.
  The corresponding reduct paths are
  $ba\To{b} ab\To{b} b^2\To{a} ba\To{a} a^2\To{b} ab\To{x}bx\To{y}xy$,
  for which  we add one more edge $ab\To{a} ba$ (dash-dotted),
  as $ab\To{b} b^2$ is already present.
  This introduces $\CC_3(R)$-redexes from $ab$.
  Their reduct paths are already present. 
  The automaton is now closed with respect to $\CC_3(R)\cup\Forw_3(R)$.
  It represents the set of tiles
  \[
    T=\{\li\li b,\li bb,bba, bbb,
    baa,bab,aab,aba,abb,ab\re,b\re\re\}.
  \]
  Absent from $T$ are
  \begin{itemize}
  \item 
  $\li\li\re,\li\re\re,\li\Sigma\re$
  (meaning that $\RFC(R)$ does not contain strings of length 0 or 1), 
\item  as well as $\li a \Sigma,\li ba, \Sigma a\re$
  (meaning that $\RFC(R)$ starts with $b^2$ and ends with $b$),
\item  and $a^3$ (meaning that $\RFC(R)$ does not have $a^3$ as a factor).
  \end{itemize}
  
  Finally, we compute $\Btiled_T(R)$.
  There are three $R$-redex paths in the automaton,
  starting at $b^2,ba, ab$, respectively, and all ending in $b^2$.
  Then $\CC_3(R)$ has $3\times 2^2=12$ redexes,
  resulting in 12 tiled rules, where $x,y\in\Sigma$:
\begin{align*}
  [bba,bab,abb,b^3,bbx,bxy] &\to [b^3,b^3,bba,baa,aab,abx,bxy] \\
  [baa,aab,abb,b^3,bbx,bxy]& \to [bab,abb,bba,baa,aab,abx,bxy] \\
  [aba,bab,abb,b^3,bbx,bxy]& \to [abb,b^3,bba,baa,aab,abx,bxy]   
\end{align*}
   With the following weights, all rules are strictly decreasing:
   \[ bbb\mapsto 8, bab\mapsto 4, abb\mapsto 3,  bba\mapsto 3, 
     \text{others}\mapsto 0.
   \]
   This shows termination of $\Btiled_T(R)$, thus, of $R$.
\eex
 \end{example}

\begin{example}
  We apply Algorithm~\ref{imp:alg}
  with tiles of width $k=4$  to $R=\{a^3\to a^2b^3a^2\}$.
  It turns out that the labelled system has just one rule
  that can be removed by counting letters. 
  We start with the automaton for $\rhs(R)$ (the solid arrows
  in Figure~\ref{ex3:fig}).
\begin{figure}[ht!]
  \begin{tikzpicture}[auto, on grid, node distance=16mm, 
  inner sep=2pt, semithick, >=stealth,  
  every state/.style={minimum size=20pt, inner sep=0pt, 
  initial distance=4mm, accepting distance=4mm}, 
  initial/.style=initial by arrow, initial text=, 
  accepting/.style=accepting by arrow]
  
  \node(lll) [state,initial] {$\li^3$} ;
  \node(lla) [state,right of=lll] {$\li^2a$} ;
  \node(laa) [state,right of=lla] {$\li a^2$} ;
  \node(aab) [state,right of=laa] {$a^2b$} ;
  \node(abb) [state,right of=aab] {$a b^2$} ;
  \node(bbb) [state,right of=abb] {$b^3$} ;
  \node(bba) [state,right of=bbb] {$b^2 a$} ;
  \node(baa) [state,right of=bba] {$b a^2$} ;
  \node(rrr) [state,accepting,right of=baa] {$\re^3$} ;
  \node(aaa) [state,above of=bbb] {$a^3$} ;
  
  \path[->] (lll) edge node {$a$} (lla);
  \path[->] (lla) edge node {$a$} (laa);
  \path[->] (laa) edge node {$b$} (aab);
  \path[->] (aab) edge node {$b$} (abb);
  \path[->] (abb) edge node {$b$} (bbb);
  \path[->] (bbb) edge node {$a$} (bba);
  \path[->] (bba) edge node {$a$} (baa);
  \path[->] (baa) edge node {$\re^3$} (rrr);

  \path[->,dashed] (baa) edge [swap,bend right] node {$a$} (aaa);
  \path[->,dashed] (aaa) edge [swap,bend right] node {$b$} (aab);
  \path[->,dotted] (baa) edge [bend left] node {$b$} (aab);  
\end{tikzpicture}
\caption{Algorithm $\TRFC_4$ on input $\{a^3\to a^2b^3a^2\}$}
\label{ex3:fig}
\end{figure}
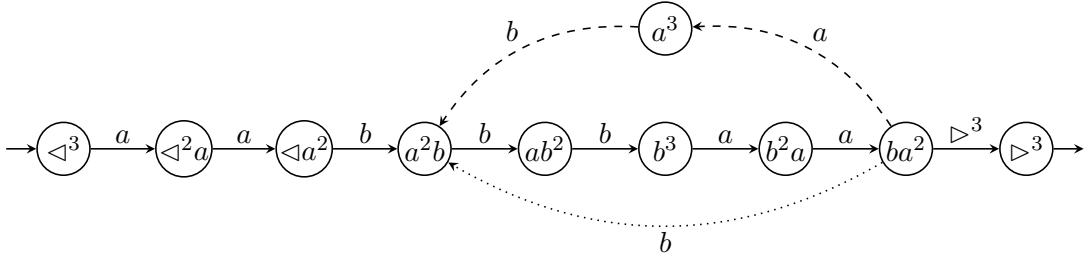

For rule $a\re^3\to a^2b^3a^2\re^3\in\Forw_4(R)$,
there is a redex starting at $b^2a$.
The corresponding reduct path needs two new edges (dashed).
The very same rule has another redex starting at $b^3$,
which needs another edge (dotted).
There is a $R$-redex from $b^3$ to $a^3$,
with the reduct path alread present, but ending in $ba^2$.
The path will be right-context-closed by $b^3$, the only continuation from $a^3$.
In fact, if we extend by just one letter,
the extended paths meet at $a^2b$ already.
This means that we do not need to consider longer extensions.
The automaton is now closed
with respect to $\CC_4(R)\cup\Forw_4(R)$.

We compute $\Btiled_T(R)$: there is just one $R$-redex, from $b^3$ to $a^3$,
and exactly one $\CC_4(R)$-redex, for $a^3b^3 \to a^2b^3a^2b^3$,
from state $b^3$ to itself.
This gives just one tiled rule
\[ [bbba, bbaa, baaa, aaab, aabb, abbb]
  \to [bbba, bbaa, baab, aabb, abbb, bbba, bbaa, baab, aabb, abbb].
\]
The letter (tile) $baaa$ does appear in the left-hand side,
but not  in the right-hand side,
thus $\Btiled_T(R)$ is terminating, implying termination of $R$.
\eex  
\end{example}

Similar to
\emph{semantic unlabeling}~\cite{DBLP:conf/rta/SternagelT11},
we can sometimes use the partial algebra for removing rules without labelling.
A similar procedure was suggested in~\cite{HW10} for rule removal
in the context of RFC matchbounds. 

\begin{algorithm}\label{imp:alg:remove}[Tiling for RFC with Untiling,
  abbreviation $\TRFCU$]
  \begin{itemize}
  \item Specification:
    \begin{itemize}
    \item Input: A rewrite system $R$ over $\Sigma$, a number $k$
    \item Output: A rewrite system $R_1\subseteq R$
      such that $\SN(R)\iff\SN(R_1)$.
    \end{itemize}
  \item Implementation:
    do the first step of Algorithm~\ref{imp:alg},
    to obtain $\aalg =(T,\Value{\cdot})$.
    Output the set of all rules $(\ell\to r)\in R$
    with $\lab_\aalg(\ell\to r)\neq \emptyset$.
  \end{itemize}
\end{algorithm}
\begin{proof} Correctness:
  As before, $\SN(R,\Lang(T))$ if and only if $\SN(R)$.
  By construction, $R$-derivations from $\Lang(T)$
  can only use rules from $R_1$.
\end{proof}

\begin{example}\label{ex:z018}
  We apply Algorithm~\ref{imp:alg:remove}, for $k=2$,
  to $R=\{ab\to bca,ba\to acb,bc\to cbb\}$.
  This is \verb|SRS/Zantema/z018| from TPDB.
  We construct the 2-shift automaton, see Figure~\ref{z018:fig},
  and we find that $\Btiled_T(ab\to bca)=\emptyset$.
  \begin{figure}[ht!]
    \[ \begin{tikzpicture}[auto, on grid, node distance=20mm, 
  inner sep=2pt, semithick, >=stealth,  
  every state/.style={minimum size=20pt, inner sep=0pt, 
  initial distance=4mm, accepting distance=4mm}, 
  initial/.style=initial by arrow, initial text=, 
  accepting/.style=accepting by arrow] 
  \node (li) [state, initial] {$\li$};
  \node (c) [state, right of=li] {$c$};
  \node (a) [state, above of=c] {$a$};
  \node (b) [state, right of=c] {$b$};
  \node (re) [state, accepting, right of=a] {$\re$};
  \path[->] (li) edge node {$a$} (a);
  \path[->] (li) edge [swap] node {$c$} (c);
  \path[->] (li) edge [swap, bend right=60] node {$b$} (b);
  \path[->] (a) edge node {$\re$} (re);
  \path[->] (c) edge [swap, bend right=8] node {$b$} (b);
  \path[->] (b) edge [swap, bend right=8] node {$c$} (c);    
  \path[->] (b) edge [swap] node {$\re$} (re);
  \path[->] (b) edge [swap] node {$a$} (a);
  \path[->] (c) edge [swap, bend right=10] node {$a$} (a);
  \path[->] (a) edge [swap, bend right=10] node {$c$} (c);
  \path[->] (b) edge [loop below] node {$b$} (b);
  \path[->] (c) edge [loop below] node {$c$} (c);
\end{tikzpicture} \]
    \caption{Algorithm $\TRFCU_2$ on input z018}
    \label{z018:fig}
  \end{figure}
  The algorithm outputs $\{ba\to acb,bc\to cbb\}$.
  Note that the automaton contains redexes
  for $(a\re\to bca\re) \in\Forw_1(ab\to bca)$  (from states $\li, c$, and $b$)
  but the criterion is the occurrence of $ab\to bca$ only.
  To handle the resulting termination problem,
  we reverse all strings in all (remaining) rules,
  obtaining $\{ab\to bca, cb\to bbc\}$.
  Again we apply Algorithm~\ref{imp:alg:remove}
  and this time we find that $ab$ does not occur in the automaton.
  This leaves $\{cb\to bbc\}$.
  Applying the algorithm one more time, we find that there is no $cb$
  in the 2-shift automaton for $\RFC(cb\to bbc)$.
  The algorithm outputs $\emptyset$, and we have proved termination
  of  \verb|z018|.
  \eex
\end{example}

In later examples, will abbreviate proof steps:
we write $R \TileAllRFC{k} S$ if Algorithm~\ref{imp:alg} transforms $R$ to $S$,
and similarly $R \TileRemoveRFC{k} S$ for Algorithm~\ref{rel-imp:alg}.
We write $R\Mirror S$ if $S$ is obtained by reversing all left-hand sides
and all right-hand sides of $R$.
The termination proof of Example~\ref{ex:z018} then reads
  \[
    R \TileRemoveRFC{2} \{ba\to acb,bc\to cbb\}\Mirror \{ab\to bca,cb\to bbc\}
    \TileRemoveRFC{2} \{cb\to bbc\}\TileRemoveRFC{2}\emptyset,
\]
and we further compress this to
\[
    (3,2) \TileRemoveRFC{2} (2,2)\Mirror(2,2)
    \TileRemoveRFC{2}(1,2)\TileRemoveRFC{2}(0,0),
  \]
  where $(r,s)$ denotes a rewrite system with $r$ rules over $s$ letters.

\begin{example}\label{z001:ex}
  We prove termination of Zantema's problem $\{a^2b^2\to b^3a^3\}$,
  a classical benchmark, in several ways.

  We only give proof outlines here, full proofs are available at
\url{https://gitlab.imn.htwk-leipzig.de/waldmann/pure-matchbox/tree/master/sparse-tiling-data}.
  In the proof outline, $\Weight$ denotes removal of rules by weights.

  There is a proof with tiles of width 2 only,
  using several steps:
\begin{align*}
  (1,2)\TileAllRFC{2}(4,4)\TileAllRFC{2}(16,8)\TileAllRFC{2}(49,15)\TileAllRFC{2}(121,26)\Weight(64,26)\TileRemoveRFC{2}(60,26) \\
  \TileAllRFC{2}(153,44)\Weight(105,44)\TileAllRFC{2}(312,68)\Weight(220,68)\TileRemoveRFC{2}(160,64)\TileAllRFC{2}(372,95) \\
  \TileRemoveRFC{2}(332,95)\TileAllRFC{2}(629,138)\Weight(208,138)\TileRemoveRFC{2}(42,102)\Weight(16,102) \\
  \TileAllRFC{2}(28,80)\Weight(24,80)\TileAllRFC{2}(32,94)\Weight(4,94)\TileRemoveRFC{2}(0,0)
\end{align*}

For width 12, there is a proof with just one step, but the intermediate system is large:
\begin{align*}
(1,2)\TileAllRFC{12}(1166,344)\Weight(0,344)
\end{align*}

There even is a termination proof that does not use weights at all:
\begin{align*}
  (1, 2)\TileAllRFC{2}(4, 4)\TileRemoveRFC{5}(3, 4)\TileAllRFC{3}(40, 15)\TileAllRFC{2}(105, 26)\TileRemoveRFC{5}(65, 26)\TileRemoveRFC{5}(52, 26) \\
  \TileRemoveRFC{5}(37, 26)\TileAllRFC{2}(97, 44)\TileRemoveRFC{5}(37, 43)\TileRemoveRFC{5}(36, 43)\TileAllRFC{2}(110, 68)\TileRemoveRFC{5}(80, 64) \\
  \TileAllRFC{2}(192, 93)\TileRemoveRFC{5}(96, 89)\TileRemoveRFC{3}(58, 79)\TileRemoveRFC{5}(32, 66)\TileRemoveRFC{3}(0, 0).
\end{align*}
\eex
\end{example}

\begin{example}
  We show that our method can be applied as a preprocessor
  for other termination provers.
  We consider $R= \{0000 \to 1001, 0101 \to 0010\}$, 
  which is \verb|SRS/Gebhardt/16| from the TPDB.
  After the chain of transformations
  \begin{align*}
(2, 2)\TileAllRFC{3}(98, 20)\Weight(24, 11)\TileRemoveRFC{2}(17, 10)\Weight(15, 8),
  \end{align*}
  the resulting problem can be solved by \TTTT~\cite{DBLP:conf/rta/KorpSZM09}
  quickly, via KBO.
  \TTTT\ did not solve this problem in the Termination Competition 2018.
\eex
\end{example}

\section{Relative Termination Proofs via Overlap Closures}\label{sec:rel}

We now apply our approach to prove relative termination.
With relative termination, the $\RFC$ method does not work.

\begin{example}\label{rel:ex:rfc}
  $R/S$ may nonterminate although $R/S$ terminates on $\RFC(R\cup S)$.
  For example, let $R=\{ab\to a\}$ and $S=\{c\to bc\}$.
  We have $\RFC(R\cup S)=a \cup b^+ c$.
  This does not have a factor $ab$, therefore $\SN(R/S)$ on $\RFC(R\cup S)$.
  On the other hand, $\neg\SN(R/S)$ because of the loop
  $\underline{ab}c\to_R a\underline{c}\to_S abc$.
\eex
\end{example}

Therefore, we use overlap closures instead.
To prove correctness of this approach, we use a characterization of overlap
closures as derivations in which every position between letters is touched.
A new left-recursive characterization
of overlap closures (Corollary~\ref{ct:cor:main})
allows us to enumerate right-hand sides of overlap closures by completion.

\subsection{Overlap Closures}
A position between letters in the starting string of a derivation is called
\emph{touched} by the derivation if it has no residual in the final string.

\begin{example}
  For the rewrite system $R = \{ab \to baa\}$ over alphabet $\{a,b\}$, all
  positions labelled by $|$ in the starting string $a|a|ba|b$ are touched by
  the derivation $aabab\to_R abaaab\to_R baaaaab\to_R baaaabaa$. The position
  between $b$ and $a$ in the starting string has the residual position
  between $a$ and $b$ in the final string.
\end{example}

Let $\OC(R)$ denote the set of overlap closures~\cite{DBLP:journals/siamcomp/GuttagKM83},  
and let $\ROC(R) = \rhs(\OC(R))$.

\begin{lemma}\cite[Lemma~3]{DBLP:journals/ita/GeserZ99}\label{def:rel:roc}
  The set $\OC(R)$ of overlap closures of $R$ is the set of all
  $R$-derivations where all initial positions between letters are touched.
\end{lemma}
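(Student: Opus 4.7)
The plan is to prove both inclusions separately. For the forward direction, I would show by induction on the construction of overlap closures that every $(s,t) \in \OC(R)$ is witnessed by an $R$-derivation from $s$ to $t$ in which every position between letters of $s$ is touched. In the base case a pair $(\ell, r) \in R$ gives the one-step derivation $\ell \to_R r$, which replaces the entire $\ell$ and therefore touches every initial position. For the inductive step I would handle each closure operation (nesting, prefix overlap, suffix overlap) separately: given touched sub-derivations for the constituent closures, the combined derivation inherits touchedness on positions lying strictly in one constituent, while positions in the overlap region are touched by the constituent whose redex consumes them.

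For the converse, given an $R$-derivation $s = s_0 \to_R s_1 \to_R \cdots \to_R s_n = t$ in which every initial position of $s$ is touched, I would construct an element of $\OC(R)$ with left-hand side $s$ and right-hand side $t$ by induction on $n$. When $n = 1$ the step $s = x \ell y \to_R x r y = t$ must have $x = y = \epsilon$, since any letter in $x$ or $y$ would supply a residual and thus an untouched position; hence $(s,t) \in R \subseteq \OC(R)$. For the inductive step I would single out the first step $s_0 = x \ell y \to_R x r y = s_1$ and split the tail $s_1 \to_R^* s_n$ into two touched sub-derivations, one consuming the prefix $x$ together with a suitable portion of $r$ and one consuming the suffix $y$ together with a portion of $r$. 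Applying the inductive hypothesis to each sub-derivation yields overlap closures that can be combined with the rule $(\ell,r)$ via the prefix- and suffix-overlap constructors of $\OC(R)$ to produce the desired closure.

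The main obstacle is the decomposition step in the converse direction: to isolate the two touched sub-derivations one has to track residuals of each letter of $x$ and $y$ through the tail, and in general rewrite steps whose redex straddles positions on both sides of the initial rule must be duplicated or split so that each resulting piece contributes to only one of the two sub-derivations. This amounts to a standard residual/projection argument; since the statement is the classical result of Geser and Zantema, I would follow their original bookkeeping and cite~\cite{DBLP:journals/ita/GeserZ99} for the detailed manipulation of residuals rather than reproducing it in full.
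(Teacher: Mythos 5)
The paper gives no proof of this lemma: it is imported verbatim as Lemma~3 of Geser and Zantema~\cite{DBLP:journals/ita/GeserZ99}, so there is no in-paper argument to measure your proposal against. Judged on its own, your forward direction is a routine induction over the closure-forming rules and is correct in outline; the one point to be careful about is that a rewrite step touches only the positions \emph{strictly inside} its redex, while the two boundary positions of the redex residuate to the boundaries of the contractum --- this is exactly what forces $x=y=\epsilon$ in your base case, so you have the convention right.

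The converse is the substantive half, and your sketch of it has a genuine gap. After peeling off the first step $x\ell y \to_R xry$, the tail derivation cannot in general be partitioned into one touched sub-derivation consuming $x$ together with a portion of $r$ and another consuming $y$ together with the rest: later redexes may straddle any proposed cut through $r$ (and through its descendants), and a rewrite step cannot be ``duplicated or split'' --- that phrase does not name a well-defined operation on derivations, so the induction as described does not go through. The honest content of your proposal for this direction is the final sentence, where you defer the residual bookkeeping to~\cite{DBLP:journals/ita/GeserZ99}; that collapses your proof to the very citation the paper itself uses. Citing the lemma is perfectly acceptable (it is what the authors do), but the decomposition you sketch would not survive as a standalone argument. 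If you want a sense of how delicate such normalization of closure derivations actually is, look at the appendix of this paper: the composition-tree rewrite system $Q$ and Lemma~\ref{lem:one-step} are precisely the kind of case analysis that replaces the informal ``split the tail'' step.
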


Termination has been characterized
by forward closures~\cite{DBLP:conf/icalp/Dershowitz81}.
In the following we obtain a characterization of relative termination
by overlap closures.

\begin{definition}
  For a finite or infinite $R$-derivation $A$,
  let $\Inf(A)$ denote the set of rules that are applied 
  infinitely often in $A$. (For a finite derivation, $\Inf(A)=\emptyset$.)
\end{definition}

\begin{proposition}\label{prop:rel:split}
  For each $R$-derivation $A$, 
  there are finitely many $R$-derivations $B_1,\ldots,B_k$
  that start in $\ROC(R)$, and $\Inf(A)=\bigcup_i \Inf(B_i)$.
\end{proposition}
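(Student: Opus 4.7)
The plan is to split the possibly infinite derivation $A$ at those inter-letter positions of its starting string that are never touched, and to recognise the beginning of each resulting piece as an overlap closure.

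First I would dispose of the easy case where $A$ is finite: then $\Inf(A)=\emptyset$, and taking no $B_i$'s makes the right-hand side trivially the empty union. Assume henceforth that $A$ starts at $w = a_1 \dots a_n$. Let $U \subseteq \{1,\dots,n-1\}$ be the set of inter-letter positions of $w$ that are never touched by $A$, enumerated as $U = \{p_1 < \dots < p_m\}$. I factor $w = u_1 u_2 \dots u_{m+1}$ at these cut points; by the very choice of $U$, every internal position of every $u_j$ is touched somewhere in $A$.

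Second, I would argue that $A$ splits into independent $R$-derivations $A_1, \dots, A_{m+1}$, where $A_j$ collects exactly those steps of $A$ operating on residuals of $u_j$ and starts at $u_j$. The key observation is that no step of $A$ may straddle a residual of a position in $U$, since by definition of \emph{touched} such a step would touch that position. Hence every step of $A$ belongs to a unique $A_j$. Now $u_j$ has finitely many internal positions, all touched somewhere in $A_j$, so a finite prefix $A_j^{(0)}$ of $A_j$ already touches all of them. Writing $v_j$ for the final string of $A_j^{(0)}$, Lemma~\ref{def:rel:roc} gives $(u_j, v_j) \in \OC(R)$, and hence $v_j \in \ROC(R)$; I take $B_j$ to be the remainder of $A_j$ starting from $v_j$.

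The infinitely-applied-rules equation then follows formally: the steps of $A$ are partitioned among the finitely many $A_j$, so by pigeonhole $\Inf(A) = \bigcup_j \Inf(A_j)$; and $\Inf(A_j) = \Inf(B_j)$ because each $A_j^{(0)}$ is finite. Combining, $\Inf(A) = \bigcup_j \Inf(B_j)$, with every $B_j$ starting in $\ROC(R)$, as required. The main obstacle is to justify the decomposition rigorously via the residual calculus, i.e.\ to confirm that a redex whose current range spans a residual of an untouched $p \in U$ necessarily touches $p$, contradicting $p \in U$; once this is in place, the rest of the argument is straightforward combinatorial bookkeeping about finite prefixes and disjoint step sequences.
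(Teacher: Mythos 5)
Your proof is correct and takes essentially the same route as the paper: the paper's argument is an induction on the length of the starting string that splits off one never-touched position at a time, and yours is that induction unrolled into a single decomposition at all never-touched positions, with Lemma~\ref{def:rel:roc} supplying the overlap-closure prefix of each piece. The only point to patch is that an overlap closure requires at least one rewrite step ($s\to_R^+ t$), so for a piece $u_j$ with no internal positions the ``finite prefix touching all of them'' must still be chosen nonempty (possible whenever $A_j$ is infinite), and finite $A_j$ should simply be discarded since they contribute nothing to $\Inf(A)$.
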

\begin{proof}
  If $A$ is empty, then $k=0$.
  If $A$ has a finite prefix that is an $\OC$, 
  then $k=1$ and $B_1$ is the (infinite) suffix.
  Else, the start of $A$ has a position
  that is never touched during $A$.
  We can then split the derivation, and use induction
  by the length of the start of the derivation.
\end{proof}

\begin{proposition}\label{prop:rel:cap}
  $\SN(R/S)$ if and only if
  for each $(R \cup S)$-derivation $A$, $\Inf(A)\cap R=\emptyset$.
\end{proposition}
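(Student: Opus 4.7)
The plan is to unpack the definition of $\SN(R/S)$ and show the equivalence by contrapositive on both sides, with a pigeonhole step on the (finite) set of $R$-rules.

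For the forward direction, assume $\SN(R/S)$ and let $A$ be any $(R\cup S)$-derivation. If $A$ is finite, then $\Inf(A)=\emptyset$ by definition, so certainly $\Inf(A)\cap R=\emptyset$. If $A$ is infinite and some rule $\rho\in\Inf(A)\cap R$ existed, then $\rho\in R$ is applied infinitely often in $A$, so in particular $A$ applies rules of $R$ infinitely often, contradicting $\SN(R/S)$ applied to the starting string of $A$.

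For the backward direction, I would argue by contrapositive. Assume $\neg\SN(R/S)$, i.e., there is some $w$ and an infinite $(R\cup S)$-derivation $A$ starting at $w$ with infinitely many $R$-rule applications. Since $R$ is finite, by the pigeonhole principle at least one rule $\rho\in R$ is applied infinitely often along $A$, hence $\rho\in\Inf(A)\cap R$, so $\Inf(A)\cap R\neq\emptyset$. This contradicts the assumed property.

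The only real ingredient beyond unfolding definitions is the pigeonhole step, which uses finiteness of $R$ (a standing assumption on rewrite systems in the paper). I do not expect any substantial obstacle; the statement is essentially a rephrasing of the definition of relative termination in terms of the $\Inf$ operator, and its purpose is to bridge to Proposition~\ref{prop:rel:split} so that relative termination can subsequently be reduced to derivations starting in $\ROC(R\cup S)$.
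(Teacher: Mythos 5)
Your proof is correct; the paper states this proposition without proof, treating it as immediate from the definition of $\SN(R/S)$, and your argument supplies exactly the intended justification. The pigeonhole step using finiteness of $R$ is indeed the only non-trivial observation needed, and your handling of both directions is sound.
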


The following theorem says that for analysis of relative termination,
we can restrict to derivations starting from right-hand sides of overlap closures.
\begin{theorem}\label{thm:rel:sn-roc}
  $\SN(R/S)$ if and only if $\SN(R/S)$ on $\ROC(R\cup S))$.
\end{theorem}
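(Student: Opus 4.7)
\medskip
\noindent\textbf{Proof plan.}
The forward direction ($\SN(R/S)$ implies $\SN(R/S)$ on $\ROC(R\cup S)$) is immediate, since $\ROC(R\cup S)\subseteq \Sigma^*$ and $\SN(R/S)$ by definition forbids the offending derivations over all of $\Sigma^*$.

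For the backward (and interesting) direction I would proceed by contraposition: assume $\neg\SN(R/S)$ and produce a counterexample starting from $\ROC(R\cup S)$. Concretely, $\neg\SN(R/S)$ gives an infinite $(R\cup S)$-derivation $A$ with infinitely many $R$-steps. By Proposition~\ref{prop:rel:cap} applied to the combined system $R\cup S$ (which is the right framing: we are tracking rules from the whole system, then asking whether an $R$-rule occurs infinitely often), this is equivalent to $\Inf(A)\cap R\neq\emptyset$.

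Now I invoke Proposition~\ref{prop:rel:split} with the rewrite system taken to be $R\cup S$. This yields finitely many $(R\cup S)$-derivations $B_1,\dots,B_k$, each starting from a string in $\ROC(R\cup S)$, with $\Inf(A)=\bigcup_i \Inf(B_i)$. Since $\Inf(A)\cap R\neq\emptyset$, some $B_i$ satisfies $\Inf(B_i)\cap R\neq\emptyset$; in particular this $B_i$ must be infinite and contain infinitely many $R$-steps. Applying Proposition~\ref{prop:rel:cap} in the other direction to $B_i$, we conclude $\neg\SN(R/S)$ on $\ROC(R\cup S)$, completing the contrapositive.

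The plan is essentially bookkeeping: the two previous propositions do the real work. The only point deserving care, and the closest thing to an obstacle, is making sure both propositions are instantiated with the combined system $R\cup S$ rather than with $R$ alone, and that the ``$\Inf(A)\cap R\neq\emptyset$'' property is preserved when passing from $A$ to its pieces $B_i$; this is guaranteed precisely by the identity $\Inf(A)=\bigcup_i\Inf(B_i)$ supplied by Proposition~\ref{prop:rel:split}.
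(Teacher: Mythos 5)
Your proposal is correct and is essentially the paper's own argument, merely phrased contrapositively: the paper assumes $\SN(R/S)$ on $\ROC(R\cup S)$, decomposes an arbitrary $(R\cup S)$-derivation $A$ via Proposition~\ref{prop:rel:split} into pieces $B_i$ starting in $\ROC(R\cup S)$, and computes $\Inf(A)\cap R=\bigcup_i(\Inf(B_i)\cap R)=\emptyset$, concluding via Proposition~\ref{prop:rel:cap}. The same two propositions, instantiated with $R\cup S$ exactly as you do, carry the whole weight in both versions.
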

\begin{proof}
  The implication from left to right is trivial, as we consider
  a subset of derivations.
  For the other direction, let $A$ be an $(R\cup S)$-derivation. 
  Using Proposition~\ref{prop:rel:split} we obtain $B_1,\ldots, B_k$ for $A$ 
  such that 
\[ 
\Inf(A) \cap R= (\bigcup_i \Inf(B_i))\cap R = \bigcup_i (\Inf(B_i)\cap R)
  = \bigcup_i \emptyset = \emptyset,
\]
  thus $\SN(R/S)$ by Proposition~\ref{prop:rel:cap}. 
\end{proof}

\subsection{Tiling for Overlap Closures}\label{sec:rel-imp}

We employ the following left-recursive characterisation of
$\ROC(R)$ (proved in the Appendix)
that is suitable for a completion algorithm.

\begin{corollary}\label{ct:cor:main}
$\ROC(R)$ is the least set $S$ such that
\begin{enumerate}
\item \label{ct:cor:main:1} $\rhs(R)\subseteq S$,
\item \label{ct:cor:main:2} if $tx\in S$ and $(xu,v)\in R$
for some $t,x,u\neq \emptystring$ then $tv\in S$;
\item \label{ct:cor:main:3} if $xt\in S$ and $(ux,v)\in R$
for some $t,x,u\neq \emptystring$ then $vt\in S$;
\item \label{ct:cor:main:4} if $tut'\in S$ and $(u,v)\in R$
then $tvt'\in S$;
\item \label{ct:cor:main:5} if $tx\in S$ and $yv\in S$ and $(xwy,z)\in R$
for some $t,x,y,v\neq \emptystring$ then $tzv\in S$.
\end{enumerate}
\end{corollary}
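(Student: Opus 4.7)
The plan is to prove both inclusions. For the \emph{soundness} direction, I would show that $\ROC(R)$ itself satisfies all five closure conditions, so being the least such set, $\ROC(R) \subseteq S$ only needs comparison with the ``generated'' set. Condition~(1) is immediate, since every rule $(\ell,r)\in R$ is a one-step $R$-derivation in which every position of $\ell$ is trivially touched, so $r\in\rhs(\OC(R))=\ROC(R)$. For conditions~(2)--(5), I would use Lemma~\ref{def:rel:roc}: if $tx\in\ROC(R)$, pick an overlap closure $(s_0, tx)$, i.e., a derivation $s_0\tft_R^* tx$ with all starting positions touched. Given $(xu,v)\in R$, extend this derivation to $s_0 u\tft_R^* tx u \tft_R tv$. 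The extra position between $s_0$ and $u$ in $s_0u$ is touched by the final step (which consumes the $xu$ spanning that boundary, provided $x,u\ne\emptystring$), and the other positions were touched by the original derivation; hence $tv\in\ROC(R)$. Conditions~(3) and~(4) are analogous (mirror for right-overlap, internal rewrite for~(4)). For condition~(5), I splice two touched derivations $s_0\tft^* tx$ and $s_1\tft^* yv$ side by side as $s_0 w s_1\tft^* txwyv\tft tzv$ using $(xwy,z)\in R$; the newly introduced boundary positions around the $w$-block are touched by the final step, and the pre-existing internal boundaries of $s_0$ and $s_1$ are touched by the two subderivations.

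For the \emph{completeness} direction, let $S$ be any set satisfying~(1)--(5); I would prove $\ROC(R)\subseteq S$ by induction on the length of an overlap closure derivation $s_0 \tft_R^n t$ with all positions touched. For $n=1$ we have $t\in\rhs(R)$, handled by~(1). For the inductive step, consider the last rewrite step $s_{n-1}\tft_R t$, which applies rule $(u,v)\in R$ at some position in $s_{n-1}$, so $s_{n-1}=tut'$ after renaming and $t = \hat tv\hat t'$ accordingly. The crucial dichotomy is whether the redex $u$ in $s_{n-1}$ lies entirely within a single ``zone'' of the previously touched positions, or whether it straddles one or more positions of $s_0$ that are first touched at step $n$. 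In the first case, $\hat t \hat u \hat t' \in \ROC(R)$ from a strictly shorter closure (just drop the final step and observe touchedness is inherited), so by induction $\hat t u \hat t' \in S$ and condition~(4) yields $\hat t v \hat t'\in S$. In the second case, the boundary points in $s_0$ that are first touched at step $n$ partition the derivation: each contiguous block to the left and right gives shorter touched subderivations, and the combining rule is applied via conditions~(2),~(3), or~(5) depending on whether one, the other, or both sides contribute nontrivial context.

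The main obstacle will be the second case of the inductive step: carefully tracking residuals of positions across the prefix derivation to identify the factorization $s_{n-1} = (\text{left block})\cdot u \cdot (\text{right block})$ in a way that exhibits the two (or one) shorter overlap closures that feed into conditions~(2),~(3), or~(5). I would formalize this by noting that if the final redex $u$ straddles positions first touched at step $n$, then both the left-context prefix and the right-context suffix of $u$ in $s_{n-1}$ each descend from an independent sub-derivation starting in a factor of $s_0$, and each such sub-derivation is itself an overlap closure on its own starting factor (so induction applies to each). The bookkeeping of nonemptiness side conditions $t,x,u,y,v\ne\emptystring$ matches exactly the requirement that the final step genuinely overlaps the pre-existing material, as opposed to being fully internal (in which case condition~(4) applies instead). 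Once this case analysis is set up, the invocation of the appropriate clause among~(2)--(5) is mechanical.
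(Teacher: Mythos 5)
Your overall strategy is genuinely different from the paper's. The paper never argues directly about touched derivations: it starts from the recursive definition of overlap closures (Definition~\ref{app:def:oc}), represents closures by composition trees, and normalizes these trees with the terminating rewrite system $Q$ of Table~\ref{tab:comp-trs} (Theorem~\ref{thm:oc} via Lemmas~\ref{lem:r-term} and~\ref{lem:one-step}); the corollary then falls out of the left-recursive shape of the $Q$-normal forms. You instead induct on the length of a touched derivation, relying on Lemma~\ref{def:rel:roc} as the bridge. Your soundness half (closure of $\ROC(R)$ under items~(1)--(5), by extending and splicing touched derivations) is correct and unproblematic.

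The completeness induction, however, has a genuine gap: your dichotomy for the last step is not exhaustive. Suppose the final redex straddles a position of $s_0$ first touched at step $n$, and let $\gamma$ be the descendant in $s_{n-1}$ of the block of $s_0$ to the left of the leftmost such position. You feed $\gamma$, as the right-hand side of a shorter closure, into item~\ref{ct:cor:main:2}, \ref{ct:cor:main:3} or~\ref{ct:cor:main:5}. But if the final redex contains \emph{all} of $\gamma$ (it starts at position $0$ of $s_{n-1}$), then in the required decomposition $\gamma=tx$ one has $t=\emptystring$, so the side condition $t\neq\emptystring$ of items~\ref{ct:cor:main:2} and~\ref{ct:cor:main:5} fails; symmetrically $v=\emptystring$ on the right. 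No clause of (1)--(5) applies. This is precisely the situation of clause~(3') of Definition~\ref{app:def:oc} (a closure whose right-hand side is swallowed by the next left-hand side), and eliminating that clause is the whole point of the Appendix --- it is why the paper passes through $\OC_N$, which tolerates an initial untouched $\to_R^*$-prefix that does not affect right-hand sides. Your argument can be repaired: commute the swallowed sub-derivation to the front, observe that the remaining derivation from the new start string is still touched (the final redex now covers that entire prefix) and strictly shorter, and apply the induction hypothesis to it. But that repair, together with the degenerate case where a flanking block is a single literal letter with an empty sub-derivation (handled by the literal contexts $u$ in items~\ref{ct:cor:main:2} and~\ref{ct:cor:main:3}) and the observation that the derivations on middle blocks are simply forgotten by the literal $w$ in item~\ref{ct:cor:main:5}, is where the real work lies, and it is absent from your sketch.
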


Note that
Item~\ref{ct:cor:main:4} is the standard rewrite relation of $R$.
Item~\ref{ct:cor:main:2} is suffix rewriting,
and we already simulate this with $\Forw_k(R)$, see Proposition~\ref{prop:rfc}.
Item~\ref{ct:cor:main:3} is prefix rewriting,
and it can be handled symmetrically by left end markers
\[ \Backw_k(R) = \{\li^{k-1}\ell_2\to \li^{k-1}r\mid (\ell_1\ell_2\to r)\in R, \ell_1\neq\epsilon\neq \ell_2\} .\]

Item~\ref{ct:cor:main:5}  is an inference rule with two premises
that cannot be written as a rewrite relation.
We can still apply the tiling method, with the following modification.

Premise $(xwy,z)\in R$ refers to some suffix $x$ of $S$,
and some prefix $y$ of $S$, and to some unspecified $w$.
We aim to represent such $xwy$ by a path in the automaton.
Starting from the automaton constructed in Section~\ref{sec:trs},
we add a path from final state $\re^{k-1}$
to initial state $\li^{k-1}$, consisting of $k-1$ transitions labelled $\li$.
Note that this is still a shift automaton.

Then an application of Item~\ref{ct:cor:main:5}
of Corollary~\ref{ct:cor:main}
with $(xwy,t)\in R$
is realized by a standard rewrite step
$x\re^{k-1}\li^{k-1}y \to t$.
The extra path is used to trace $w$.

Similar to Definition~\ref{trs:def:pa},
Proposition~\ref{trs:prop:pa},
Definition~\ref{trs:def:cc},
we have
\begin{definition}\label{rel-imp:def:pa}
  For $T\subseteq \Btiles_k(\Sigma^*)$,
  the \emph{looped} shift algebra $\PAL_k(T)$
  is $\PA_k(T\cup \Tiles_k(\re^{k-1}\li^{k-1})),$
  over signature $\Sigma\cup\{\epsilon,\li,\re\}.$
\end{definition}

By construction, $\PA_k(T)$ is a sub-algebra of $\PAL_k(T)$.
The language of $\PAL_k(T)$ is the prefix closure of
$\Lang(T)\re^{k-1}\left(\li^{k-1} \Lang(T)\re^{k-1}\right)^*.$

\begin{definition}
  Let $\CCL_k(R)=
  \{ (z)x\re^{k-1}\li^{k-1} ye\to(z)re
  \mid (xwy\to r)\in R, x\neq \epsilon\neq y, e\in\Tiles_{k-1}(\Sigma^*\re^*)
  \}$.
\end{definition}
The purpose of this construction is:
\begin{proposition}
  For a set of $k$-tiles $T$ and a rewrite system $R$,
  if $\Lang_{\PAL_k(T)}$ is closed with respect to
  $\CC_k(R)\cup\Forw_k(R)\cup\Backw_k(R)\cup\CCL_k(R)$,
  then $\ROC(R)\re^{k-1}\subseteq \Lang_{\PAL_k(T)}$.
\end{proposition}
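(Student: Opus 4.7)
The plan is to induct on the inductive characterization of $\ROC(R)$ given in Corollary~\ref{ct:cor:main}, proving for every $s \in \ROC(R)$ that $s\re^{k-1} \in \Lang_{\PAL_k(T)}$. For the base case (Clause~\ref{ct:cor:main:1}) I need $r\re^{k-1} \in \Lang_{\PAL_k(T)}$ for each $r \in \rhs(R)$; this is an implicit side-hypothesis, guaranteed in the algorithmic setting of Algorithm~\ref{rel-imp:alg} by initialising $T$ to contain $\Btiles_k(\rhs(R))$.

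Clauses~\ref{ct:cor:main:2}, \ref{ct:cor:main:3}, \ref{ct:cor:main:4} are the routine inductive cases, each mapped to the correspondingly named rule set. Clause~\ref{ct:cor:main:4} (in-place rewriting $tut' \leadsto tvt'$ using $(u,v)\in R$) is handled by a $\CC_k(R)$-step at the subterm position corresponding to $u$, taking the right context of length $k-1$ to be the first $k-1$ characters of $t'\re^{k-1}$ following $u$. Clauses~\ref{ct:cor:main:2} and~\ref{ct:cor:main:3} (suffix and prefix rewriting with non-empty overlap) correspond respectively to $\Forw_k(R)$- and $\Backw_k(R)$-steps at the right and left end-markers. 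In each case, applying the relevant rule to the string supplied by the inductive hypothesis and invoking closure of $\Lang_{\PAL_k(T)}$ under that rule set delivers the conclusion.

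The real content is in Clause~\ref{ct:cor:main:5}. Given $tx, yv \in \ROC(R)$ with $t,x,y,v$ non-empty and $(xwy,z) \in R$, I exploit the $\li$-labelled loop-back path of length $k-1$ from state $\re^{k-1}$ to state $\li^{k-1}$ added in Definition~\ref{rel-imp:def:pa}. The inductive hypothesis gives that $tx\re^{k-1}$ and $yv\re^{k-1}$ are both successful runs of $\PAL_k(T)$ from $\li^{k-1}$ to $\re^{k-1}$; concatenating through the loop yields an accepting run for $tx\re^{k-1}\li^{k-1}yv\re^{k-1}$, placing this string in $\Lang_{\PAL_k(T)}$. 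Now let $e$ be the length-$(k-1)$ prefix of $v\re^{k-1}$. Then $(\xi)x\re^{k-1}\li^{k-1}ye \to (\xi)ze$ is a rule of $\CCL_k(R)$, and applied at the subterm determined by the prefix $tx\re^{k-1}\li^{k-1}ye$ it rewrites the concatenated string precisely to $tzv\re^{k-1}$. Closure under $\CCL_k(R)$ then finishes the case.

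The main obstacle is the bookkeeping in Clause~\ref{ct:cor:main:5}: one must verify that $e$ is well-defined when $|v| < k-1$ (so $e$ consumes part of the right border), and check that the $\CCL_k(R)$-step applied at the chosen subterm position leaves the tail past $e$ in place, so that the rewrite yields exactly $tzv\re^{k-1}$. I would dispatch this with a short case split on whether $|v|\geq k-1$ or $|v|<k-1$; in both cases the tail of $v\re^{k-1}$ beyond $e$ has length $|v|$ and, recombined with $tze$, produces $tzv\re^{k-1}$.
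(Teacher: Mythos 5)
The paper states this proposition without any proof (it appears only as ``the purpose of this construction''), so there is no official argument to compare against; judged on its own, your proof is correct and supplies exactly the justification the construction calls for. Inducting over the five clauses of Corollary~\ref{ct:cor:main} and matching clauses 2--4 to $\Forw_k(R)$, $\Backw_k(R)$ and $\CC_k(R)$, and clause 5 to $\CCL_k(R)$ via the $\li^{k-1}$ loop-back, is evidently the intended reading. You are also right that the base case is not implied by the stated hypothesis: closure alone does not force $\rhs(R)\re^{k-1}\subseteq\Lang_{\PAL_k(T)}$ (take $T=\emptyset$, under which the closure condition holds vacuously), so the proposition tacitly assumes the seeding that Algorithm~\ref{rel-imp:alg} provides via $T\supseteq\Btiles_k(\rhs(R))$; flagging this is a genuine catch rather than a defect of your proof.

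Two smaller points. In clause 5, the step you only gesture at --- that the concatenated term $(\emptystring)tx\re^{k-1}\li^{k-1}yv\re^{k-1}$ lies in $\Lang_{\PAL_k(T)}$ --- does hold, but is worth spelling out: every length-$k$ window of that string is a window of $\li^{k-1}tx\re^{k-1}$, a window of $\re^{k-1}\li^{k-1}$ (present by Definition~\ref{rel-imp:def:pa}), or a window of $\li^{k-1}yv\re^{k-1}$, and no window can span three blocks because the marker blocks have length exactly $k-1$; hence all required transitions are defined. Your treatment of $e$ and of the case $|v|<k-1$ is fine. Finally, you silently normalise a notational inconsistency of the paper: $\Forw_k(R)$ is defined with $\re^{k}$ end markers while everything else, including $\Backw_k(R)$, uses $k-1$ markers; your reading with $\re^{k-1}$ is the only one under which the suffix-rewriting case can fire on strings of the form $s\re^{k-1}$, and is surely what is meant.
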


We transform global relative termination as follows:
\begin{algorithm}\label{rel-imp:alg}[Tiling for ROC, abbreviation $\TROC$]
  \begin{itemize}
  \item Specification:
    \begin{itemize}
    \item Input: Rewrite systems $R_1,R_2$ over $\Sigma$, number $k$
    \item Output: Rewrite systems $R_1',R_2'$ over $\Btiled_k(\Sigma)$
      such that $\SN(R_1/R_2)\iff\SN(R_1'/R_2')$.
    \end{itemize}
  \item Implementation:
    Let $R=R_1\cup R_2$.
    We call Algorithm~\ref{trs:algo:algebra} with
    \begin{enumerate}
    \item the term rewrite system
      $\CC_k(R)\cup\Forw_k(R)\cup\Backw_k(R)\cup\CCL_k(R)$,
    \item the looping $k$-shift algebra over signature $\Sigma$
    \item and the set of domain elements $\Prefix_{k-1}(T)\cup\Suffix_{k-1}(T)$,
      for  $T=\Btiles_k(\rhs(R))\cup \Tiles_k(\re^*\li^*)$.
    \end{enumerate}
    We obtain a partial algebra $\aalg=(A,\Value{\cdot})$.
    We output $(\Btiled_U(R_1),\Btiled_U(R_2))$, where $U=\Tiles(\aalg)$.
\end{itemize}
\end{algorithm}
\begin{proof} Correctness:
  $\Lang(T)$ contains $\rhs(R_1\cup R_2)$.
  
  $\PAL_k$ is a model for $\CC_k(R)\cup\Forw_k(R)\cup\Backw_k(R)\cup\CCL_k(R)$
  by construction: all rules keep the suffix of length $k-1$  intact.

  The precondition of Algorithm~\ref{trs:algo:algebra} is satisfied,
  so we get  $\aalg$ as a partial model
  for  $\CC_k(R)\cup\Forw_k(R)\cup\Backw_k(R)\cup\CCL_k(R)$
  that contains $\Lang(T)\re^{k-1}$, and thus, $\rhs(R)\re^{k-1}$.

  By Theorem~\ref{trs:thm:rel}, $\SN(R_1'/R_2')$ iff $\SN(R_1/R_2,\Lang(U))$.
  By $\ROC(R)\subseteq \Lang(U)$, this implies $\SN(R_1/R_2,\ROC(R))$.
  By Theorem~\ref{thm:rel:sn-roc}, this implies $\SN(R_1/R_2)$.
  For the other direction,  $\SN(R_1/R_2)$ implies  $\SN(R_1/R_2,L)$
  for any language $L$, in particular, for $L=\Lang(U)$.
\end{proof}

Note that the first step of this algorithm
makes no distinction between strict rules ($R_1$) and weak rules ($R_2$):
the algebra  $\aalg$ is constructed starting from $\rhs(R_1\cup R_2)$,
and closed with respect to $R_1\cup R_2$.

\begin{example}
  We illustrate Algorithm~\ref{rel-imp:alg}
  for $R_1=\{a^3\to a^2b^2a^2\}$ and $R_2=\emptyset$.
  We take $k=4$ and start with the automaton for $\rhs(R)$,
  and include the backwards path from $\re^3$ to $\li^3$
  (the solid arrows in Figure~\ref{ex6:fig}).
  
\begin{figure}[ht!]
  \begin{tikzpicture}[auto, on grid, node distance=16mm, 
  inner sep=2pt, semithick, >=stealth,  
  every state/.style={minimum size=20pt, inner sep=0pt, 
  initial distance=4mm, accepting distance=4mm}, 
  initial/.style=initial by arrow, initial text=, 
  accepting/.style=accepting by arrow]
  
  \node(lll) [state,initial] {$\li^3$} ;
  \node(lla) [state,right of=lll] {$\li^2a$} ;
  \node(laa) [state,right of=lla] {$\li a^2$} ;
  \node(aab) [state,right of=laa] {$a^2b$} ;
  \node(abb) [state,right of=aab] {$a b^2$} ;
  \node(bba) [state,right of=abb] {$b^2 a$} ;
  \node(baa) [state,right of=bba] {$b a^2$} ;
  \node(rrr) [state,accepting,right of=baa] {$\re^3$} ;
  \node(aaa) [state,above of=bba] {$a^3$} ;
  
  \path[->] (lll) edge node {$a$} (lla);
  \path[->] (lla) edge node {$a$} (laa);
  \path[->] (laa) edge node {$b$} (aab);
  \path[->] (aab) edge node {$b$} (abb);
  \path[->] (abb) edge node {$a$} (bba);
  \path[->] (bba) edge node {$a$} (baa);
  \path[->] (baa) edge node {$\re^3$} (rrr);

  \path[->] (rrr) edge [bend left, swap] node {$\li^3$} (lll);
  \path[->,dashed] (baa) edge [swap,bend right] node {$a$} (aaa);
  \path[->,dashed] (aaa) edge [swap,bend right] node {$b$} (aab);
  \path[->,dotted] (baa) edge [bend left] node {$b$} (aab);  
\end{tikzpicture}
\caption{Algorithm $\TROC_4$ on input $\{a^3\to a^2b^2a^2\}$}
\label{ex6:fig}
\end{figure}
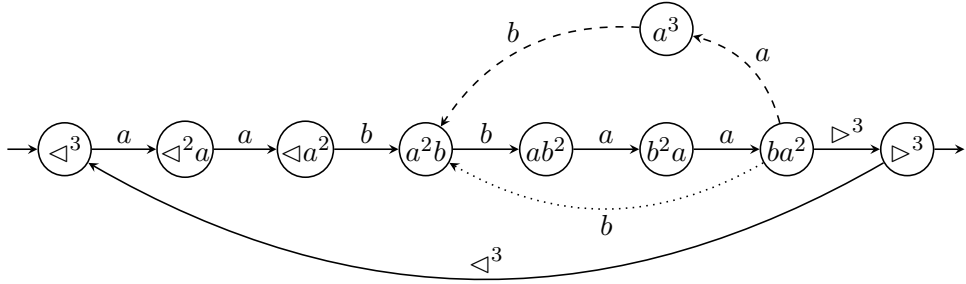
We now consider rules $(a\re^3\li^3 ae \to a^2b^2a^2e)\in \CCL(R)$.
These can only start at state $b^2a$, 
and the only choice for the right 3-context $e$ in those rules is $abb$.
The reduct path needs two fresh edges (dashed).
For rules $(a^2\re^3\li^3 ae \to a^2b^2a^2e)\in \CCL(R)$,
a redex must start in $ab^2$, and the only right 3-context $e$ is still $abb$.
The reduct path needs one extra edge (dotted).
The automaton is now closed also with respect to the other operations.%
We compute $\Btiled_T(R_1)$. There is just one $R_1$-redex, starting at $ab^2$,
with just one right extension $bba$. This creates just one labelled rule
\[
  [ abba, bbaa, baaa, aaab, aabb, abba]
  \to [abba, bbaa, baab, aabb, abba, bbaa, baab, aabb, abba].
\]
\eex
\end{example}

It is often the case that $\SN(\Btiled_T(R_1)/\Btiled_T(R_2))$
can be obtained with some easy method, e.~g., weights.

\begin{example}\label{rbeans:ex}
The \emph{bowls and beans} problem had been suggested
by Vincent van Oostrom~\cite{bowls-and-beans}.
It asks to prove termination of this relation:
\begin{quote}
\emph{If a bowl contains two or more beans, pick any two beans in it and move
one of them to the bowl on its left and the other to the bowl on its right.}
\end{quote}
In a direct model, a configuration is a function $\ZZ\to\NN$ with
finite support.
In a rewriting model, this is encoded as a string.
Several such models have been submitted to TPDB
by Hans Zantema (\verb|SRS_Standard/Zantema_06/beans[1..7]|).
We consider here a formalisation as a relative termination problem
(\verb|SRS_Relative/Waldmann_06_relative/rbeans|).
\[ \{baa \to abc, ca\to ac, cb\to ba\}/\{\epsilon\to b\} \]
Here, $a$ is a bean, $b$ separates adjacent bowls,
and $c$ transports a bean to the next bowl.
The relative rule is used to add extra bowls at either end ---
although it can be applied anywhere, meaning that any bowl
can be split in two, anytime, which does not hurt termination.
To the best of our knowledge, this benchmark problem
had never been solved in a termination competition.

We can now give a termination proof via tiling of width 3,
and using overlap closures:
\begin{align*}
(3/1,3)\TileAllROC{3}(416/144,47)\Weight(207/48,34)\Matrix{\Natural}{2}(63/48,32)\Weight(0/33,22)
\end{align*}
Here, notation $(r/s,a)$ stands for a relative
termination problem $\SN(R/S)$ where $R$ has $r$ rules,
$S$ has $s$ rules, and the alphabet has $a$ letters;
and $\TileAllROC{k}$ %
denote an application of Algorithm~\ref{rel-imp:alg},
and $\Matrix{\Natural}{d}$ denotes rule removal
by matrix interpretation of natural numbers with dimension $d$.
\eex
\end{example}

Similar to Algorithm~\ref{imp:alg:remove},
there is a variant that uses tiling to return a subset of rules.

\begin{algorithm}\label{rel-imp:alg:remove}[Tiling for ROC with Untiling,
  abbreviation $\TROCU$]
  \begin{itemize}
  \item Specification:
    \begin{itemize}
    \item Input: Rewrite systems $R,S$ over $\Sigma$, a number $k$
    \item Output: Rewrite systems $R_1\subseteq R$, $S_1\subseteq S$
      such that $\SN(R/S)\iff\SN(R_1/S_1)$.
    \end{itemize}
  \item Implementation:
    Apply the first step of Algorithm~\ref{rel-imp:alg},
    to obtain $\aalg =(A,\Value{\cdot})$.
    Output $R_1= \{ (\ell\to r) \mid (\ell\to r)\in R, \emptyset\neq \lab_\aalg(\ell\to r)\}$
    and  $S_1= \{ (\ell\to r) \mid (\ell\to r)\in S, \emptyset\neq \lab_\aalg(\ell\to r)\}$.
  \end{itemize}
\end{algorithm}
\begin{proof} Correctness:
  As before, $\SN(R/S,\Lang(T))\iff \SN(R/S)$.
  By construction, $(R\cup S)$-derivations from $\Lang(T)$
  can only use rules of $(R_1\cup S_1)$.
\end{proof}

\begin{example}\label{r4:ex}
  $\SN(ababa \to \epsilon / ab\to bbaa)$
  (\verb|SRS_Relative/Waldmann_06_relative/r4| from TPDB)
  can be solved quickly by $\TROCU(4)$.
  The set $T$ of tiles has 28 elements,
  and $baba\notin T$.
  This implies that $\Btiled_T(ababa\to\epsilon)$ is empty,
  so $\TROCU(4,ababa \to \epsilon / ab\to bbaa)$
  is $(\emptyset/ab\to bbaa)$, for which $\SN$ holds trivially.

  In the Termination Competition 2018,
  AProVE~\cite{DBLP:journals/jar/GieslABEFFHOPSS17}
  solved this benchmark with double root labelling,
  which is very similar to tiling of width 3,
  but this took more than 4 minutes.
\eex
\end{example}

The following example applies Algorithm~\ref{rel-imp:alg}
to a relative termination problem that comes from
the dependency pairs transformation~\cite{DBLP:journals/tcs/ArtsG00}.

\begin{example}\label{w16:ex}
  The system 
  $\{ abababaababa\to ababaababaabab \}$
  is part of the enumeration \verb|SRS_Standard/Wenzel_16|,  
  and it was not solved in the Termination Competition 2018.
  We obtain a termination proof with outline
  \begin{align*}
    (1,2)\Deepee(8/1,3)\TileAllROC{2}(16/8,8)\Weight(12/8,8)\\
    \TileRemoveROC{4}(12/4,6)\TileAllROC{8}(162/1782,185)\Weight(0/420,185).
  \end{align*}
  Here, $\Deepee$ stands for the dependency pairs transformation.
  There is a shorter proof with larger tiles
  \begin{align*}
    (1,2)\Deepee(8/1,3)\TileAllROC{11}(208/3952,290)\Weight(0/912,290).
  \end{align*}
  There are two more systems $\{ababaababa\to abaabababaab\}$,
  $\{abaababaab\to aababaabaabab\}$ of \verb|SRS_Standard/Wenzel_16|,
  that were unsolved in then Termination Competition 2018,
  and can now proved terminating automatically via Algorithms $\TROC$
  and $\TROCU$. Intermediate systems have up to 3940 rules.
\eex
\end{example}

\section{Experimental Evaluation}\label{sec:experiments}

Sparse tiling is implemented in the termination prover
Matchbox\footnote{\url{https://gitlab.imn.htwk-leipzig.de/waldmann/pure-matchbox}}
that won the categories \emph{SRS Standard} and \emph{SRS Relative}
in the Termination Competition 2019.
Matchbox employs a parallel proof search with a portfolio of algorithms,
including Algorithm~\ref{rel-imp:alg}. 

For relative termination, we use weights, matrix interpretations
  over the naturals,
  and tiling of widths 2, 3, 5, 8 (in parallel), cf.\ Example~\ref{rbeans:ex}.
For standard termination, we use RFC matchbounds,
  and (in parallel) the dependency pairs (DP) transformation,
  creating a relative termination problem,
  to which we apply weights, matrix interpretations
  over natural and arctic numbers, and tiling of width 3 (only).

Table~\ref{experiments:table:yes} shows performance of variants of these strategies
on SRS benchmarks of TPDB,
as measured on Starexec, under the Termination profile
(5 minutes wall clock, 20 minutes CPU clock, 128 GByte memory).
In all experiments,
we keep using weights and (for standard termination) the DP transform.
The bottom right entry of each sub-table contains
the result for the full strategy, used in competition.
\begin{table}[ht!]
  \begin{center}
  \begin{tabular}{rr|r|r}
    \multicolumn{2}{l|}{\text{SRS Relative}} & \multicolumn{2}{c}{\text{matrices}} \\ 
    \multicolumn{2}{l|}{Starexec Job 33975} & \text{no} & \text{yes} \\ \hline
    \multirow{2}{*}{tilling} & \text{no}  & 1 & 72 \\ \cline{2-4}
    & yes & 176 & 225
  \end{tabular}
  \quad
  \begin{tabular}{rr|r|r}
    \multicolumn{2}{l|}{\text{SRS Standard}} & \multicolumn{2}{c}{\text{RFC matchbounds, matrices}} \\ 
   \multicolumn{2}{l|}{Starexec Job 33976} & \text{none} & \text{both} \\ \hline
    \multirow{2}{*}{tilling} & \text{no}  & 100 & 1122 \\ \cline{2-4}
    & yes & 512 & 1133
  \end{tabular}
\end{center}
\caption{Number of termination proofs obtained by variants of Matchbox}
\label{experiments:table:yes}
\end{table}

We note a strong increase in the last column (matrices:yes) of the left sub-table.
We conclude that
sparse tiling is important for relative termination proofs.
The right sub-table shows a very weak increase in the corresponding column.
We conclude that with Matchbox' current search strategy
for standard termination, other methods overshadow tiling,
e.~g., RFC matchbounds are used in 578 proofs, and arctic matrices in 389 proofs.

For relative termination, 
the method of tiling, with weights, but without matrices,
is already quite powerful with 176 proofs,
a number between those for AProVE (163) and MultumNonMulta (192).

Table~\ref{experiments:table:widths}
shows the widths used in tiling proofs for relative SRS.
The sum of the bottom row is greater than the total number of proofs (225)
since one proof may use several widths.
\begin{table}[ht!]
\begin{center}
  \begin{tabular}{p{3cm}|r|r|r|r}
    width  & 2 & 3 & 5 & 8 \\ \hline
    proofs & 150 & 57 & 38 & 11
  \end{tabular}
\end{center}
\caption{Number of termination proofs for relative SRS, using given width of tiling}
  \label{experiments:table:widths}
\end{table}

We observe that short tiles appear more often.
We think the reason is that larger tiles
tend to create larger systems that are more costly to handle,
while resources (time and space on Starexec) are fixed.
This is also the reason for using width 3 only, for standard termination.

\section{Conclusion}\label{sec:conc}

We have presented \emph{sparse tiling},
a method to compute a regular over-approximation
of reachability sets, using sets of tiles, represented as automata,
and we applied this to the analysis of termination and relative termination.
The method is an instance of semantic labelling via a partial algebra.
Our contribution is the choice of the $k$-shift algebra.

We also provide a powerful implementation in Matchbox
that contributed to winning
the SRS categories in the Termination Competition 2019.
An exact measurement of that contribution is difficult
since termination proof search (in Matchbox)
depends on too many parameters.

Interesting open questions (that are independent of any implementation)
are about the relation between sparse tilings of different widths,
and between sparse tilings and other methods, e.~g., matchbounds.
Since our focus for the present paper is string rewriting,
we also leave open the question of whether sparse tiling would be
useful for termination of term rewriting.

\bibliographystyle{alpha}

\newcommand{\etalchar}[1]{$^{#1}$}
\begin{thebibliography}{GAB{\etalchar{+}}17}

\bibitem[AG00]{DBLP:journals/tcs/ArtsG00}
Thomas Arts and J{\"{u}}rgen Giesl.
\newblock Termination of term rewriting using dependency pairs.
\newblock {\em Theor. Comput. Sci.}, 236(1-2):133--178, 2000.

\bibitem[BBH{\etalchar{+}}85]{DBLP:journals/tcs/BlumerBHECS85}
Anselm Blumer, J.~Blumer, David Haussler, Andrzej Ehrenfeucht, M.~T. Chen, and
  Joel~I. Seiferas.
\newblock The smallest automaton recognizing the subwords of a text.
\newblock {\em Theor. Comput. Sci.}, 40:31--55, 1985.

\bibitem[BO93]{BookOtto}
Ronald~V. Book and Friedrich Otto.
\newblock {\em String-rewriting systems}.
\newblock Texts and Monographs in Computer Science. Springer, New York, 1993.

\bibitem[Der81]{DBLP:conf/icalp/Dershowitz81}
Nachum Dershowitz.
\newblock Termination of linear rewriting systems.
\newblock In Shimon Even and Oded Kariv, editors, {\em Automata, Languages and
  Programming, 8th Colloquium, Acre (Akko), Israel, July 13-17, 1981,
  Proceedings}, volume 115 of {\em {LNCS}}, pages 448--458. Springer, 1981.

\bibitem[EdVW10]{DBLP:journals/corr/abs-1006-4955}
J{\"{o}}rg Endrullis, Roel~C. de~Vrijer, and Johannes Waldmann.
\newblock Local termination: theory and practice.
\newblock {\em Logical Methods in Computer Science}, 6(3), 2010.

\bibitem[End06]{jambox}
Jörg Endrullis.
\newblock The jambox termination prover.
\newblock \url{http://joerg.endrullis.de/research/termination/}, 2006.

\bibitem[FT17]{DBLP:journals/iandc/FelgenhauerT17}
Bertram Felgenhauer and Ren{\'{e}} Thiemann.
\newblock Reachability, confluence, and termination analysis with
  state-compatible automata.
\newblock {\em Inf. Comput.}, 253:467--483, 2017.

\bibitem[GAB{\etalchar{+}}17]{DBLP:journals/jar/GieslABEFFHOPSS17}
J{\"{u}}rgen Giesl, Cornelius Aschermann, Marc Brockschmidt, Fabian Emmes,
  Florian Frohn, Carsten Fuhs, Jera Hensel, Carsten Otto, Martin Pl{\"{u}}cker,
  Peter Schneider{-}Kamp, Thomas Str{\"{o}}der, Stephanie Swiderski, and
  Ren{\'{e}} Thiemann.
\newblock Analyzing program termination and complexity automatically with
  aprove.
\newblock {\em J. Autom. Reasoning}, 58(1):3--31, 2017.

\bibitem[Gen98]{DBLP:conf/rta/Genet98}
Thomas Genet.
\newblock Decidable approximations of sets of descendants and sets of normal
  forms.
\newblock In Tobias Nipkow, editor, {\em Rewriting Techniques and Applications,
  9th International Conference, RTA-98, Tsukuba, Japan, March 30 - April 1,
  1998, Proceedings}, volume 1379 of {\em {LNCS}}, pages 151--165. Springer,
  1998.

\bibitem[GH17]{GenovaHoogeboom17}
Daniela Genova and Hendrik~Jan Hoogeboom.
\newblock Finite language forbidding-enforcing systems.
\newblock In Jarkko Kari, Florin Manea, and Ion Petre, editors, {\em Unveiling
  Dynamics and Complexity - 13th Conference on Computability in Europe, CiE
  2017, Turku, Finland, June 12-16, 2017, Proceedings}, volume 10307 of {\em
  Lecture Notes in Computer Science}, pages 258--269. Springer, 2017.

\bibitem[GHW04]{DBLP:journals/aaecc/GeserHW04}
Alfons Geser, Dieter Hofbauer, and Johannes Waldmann.
\newblock Match-bounded string rewriting systems.
\newblock {\em Appl. Algebra Eng. Commun. Comput.}, 15(3-4):149--171, 2004.

\bibitem[GHW19]{DBLP:conf/rta/GeserHW19}
Alfons Geser, Dieter Hofbauer, and Johannes Waldmann.
\newblock Sparse tiling through overlap closures for termination of string
  rewriting.
\newblock In Herman Geuvers, editor, {\em 4th International Conference on
  Formal Structures for Computation and Deduction, {FSCD} 2019, June 24-30,
  2019, Dortmund, Germany}, volume 131 of {\em LIPIcs}, pages 21:1--21:21.
  Schloss Dagstuhl - Leibniz-Zentrum f{\"{u}}r Informatik, 2019.

\bibitem[GKM83]{DBLP:journals/siamcomp/GuttagKM83}
John~V. Guttag, Deepak Kapur, and David~R. Musser.
\newblock On proving uniform termination and restricted termination of
  rewriting systems.
\newblock {\em {SIAM} J. Comput.}, 12(1):189--214, 1983.

\bibitem[GR97]{hfl-2dl}
Dora Giammaresi and Antonio Restivo.
\newblock Two-dimensional languages.
\newblock In Arto Salomaa and Grzegorz Rozenberg, editors, {\em Handbook of
  Formal Languages}, volume~3, pages 215--267. Springer, 1997.

\bibitem[GZ99]{DBLP:journals/ita/GeserZ99}
Alfons Geser and Hans Zantema.
\newblock Non-looping string rewriting.
\newblock {\em {ITA}}, 33(3):279--302, 1999.

\bibitem[Her94]{Her-Habil}
Miki Hermann.
\newblock {\em Divergence des syst{\`e}mes de r{\'e}{\'e}criture et
  sch{\'e}matisation des ensembles infinis de termes}.
\newblock Habilitation, Universit{\'e} de Nancy, France, March 1994.

\bibitem[Hof16]{Hofbauer2016}
Dieter Hofbauer.
\newblock {System description: MultumNonMulta}.
\newblock In A.~Middeldorp and R.~Thiemann, editors, {\em 15th Intl.\ Workshop
  on Termination, {WST} 2016, Obergurgl, Austria, 2016, Proceedings}, page~90,
  2016.

\bibitem[Hof18]{Hofbauer2018}
Dieter Hofbauer.
\newblock {MultumNonMulta at TermComp 2018}.
\newblock In S.~Lucas, editor, {\em 16th Intl.\ Workshop on Termination, {WST}
  2018, Oxford, U.~K., 2018, Proceedings}, page~80, 2018.

\bibitem[HW10]{HW10}
Dieter Hofbauer and Johannes Waldmann.
\newblock Match-bounds for relative termination.
\newblock In P.\ Schneider-Kamp, editor, {\em 11th Intl.\ Workshop on
  Termination, {WST} 2010, Edinburgh, U.~K., 2010}, 2010.

\bibitem[KSZM09]{DBLP:conf/rta/KorpSZM09}
Martin Korp, Christian Sternagel, Harald Zankl, and Aart Middeldorp.
\newblock Tyrolean termination tool 2.
\newblock In Ralf Treinen, editor, {\em Rewriting Techniques and Applications,
  20th International Conference, {RTA} 2009, Bras{\'{\i}}lia, Brazil, June 29 -
  July 1, 2009, Proceedings}, volume 5595 of {\em {LNCS}}, pages 295--304.
  Springer, 2009.

\bibitem[LM78]{LankMuss-78}
Dallas~S. Lankford and D.~R. Musser.
\newblock A finite termination criterion.
\newblock Technical report, Information Sciences Institute, Univ. of Southern
  California, Marina-del-Rey, CA, 1978.

\bibitem[MOZ96]{MiddeldorpOhsakiZantema96}
Aart Middeldorp, Hitoshi Ohsaki, and Hans Zantema.
\newblock Transforming termination by self-labelling.
\newblock In Michael~A. McRobbie and John~K. Slaney, editors, {\em Automated
  Deduction - CADE-13, 13th International Conference on Automated Deduction,
  New Brunswick, NJ, USA, July 30 - August 3, 1996, Proceedings}, volume 1104
  of {\em {LNCS}}, pages 373--387. Springer, 1996.

\bibitem[MP71]{CFA}
Robert McNaughton and Seymour Papert.
\newblock {\em Counter-Free Automata}.
\newblock MIT Press, 1971.

\bibitem[Per90]{Perrin90}
Dominique Perrin.
\newblock Finite automata.
\newblock In {\em Handbook of Theoretical Computer Science, Volume {B:} Formal
  Models and Sematics {(B)}}, pages 1--57. 1990.

\bibitem[Sak03]{ETA}
Jacques Sakarovitch.
\newblock {\em \'El\'ements the th\'eorie des automates}.
\newblock Vuibert Informatique, 2003.

\bibitem[SM08]{DBLP:conf/rta/SternagelM08}
Christian Sternagel and Aart Middeldorp.
\newblock Root-labeling.
\newblock In Andrei Voronkov, editor, {\em Rewriting Techniques and
  Applications, 19th International Conference, {RTA} 2008, Hagenberg, Austria,
  July 15-17, 2008, Proceedings}, volume 5117 of {\em {LNCS}}, pages 336--350.
  Springer, 2008.

\bibitem[ST11]{DBLP:conf/rta/SternagelT11}
Christian Sternagel and Ren{\'{e}} Thiemann.
\newblock Modular and certified semantic labeling and unlabeling.
\newblock In Manfred Schmidt{-}Schau{\ss}, editor, {\em Proceedings of the 22nd
  International Conference on Rewriting Techniques and Applications, {RTA}
  2011, May 30 - June 1, 2011, Novi Sad, Serbia}, volume~10 of {\em LIPIcs},
  pages 329--344. Schloss Dagstuhl - Leibniz-Zentrum für Informatik, 2011.

\bibitem[vO04]{bowls-and-beans}
Vincent van Oostrom.
\newblock Bowls and beans.
\newblock CWI puzzle,
  \url{http://www.phil.uu.nl/~oostrom/publication/misc.html}, accessible via
  \url{https://web.archive.org/}, 2004.

\bibitem[Wal04]{DBLP:conf/rta/Waldmann04}
Johannes Waldmann.
\newblock Matchbox: {A} tool for match-bounded string rewriting.
\newblock In Vincent van Oostrom, editor, {\em Rewriting Techniques and
  Applications, 15th International Conference, {RTA} 2004, Aachen, Germany,
  June 3-5, 2004, Proceedings}, volume 3091 of {\em Lecture Notes in Computer
  Science}, pages 85--94. Springer, 2004.

\bibitem[Wal16]{DBLP:journals/corr/Waldmann16}
Johannes Waldmann.
\newblock Efficient completion of weighted automata.
\newblock In Andrea Corradini and Hans Zantema, editors, {\em Proceedings 9th
  International Workshop on Computing with Terms and Graphs, {TERMGRAPH} 2016,
  Eindhoven, The Netherlands, April 8, 2016.}, volume 225 of {\em {EPTCS}},
  pages 55--62, 2016.

\bibitem[Zal72]{ZALCSTEIN1972151}
Yechezkel Zalcstein.
\newblock Locally testable languages.
\newblock {\em Journal of Computer and System Sciences}, 6(2):151 -- 167, 1972.

\bibitem[Zan95]{DBLP:journals/fuin/Zantema95}
Hans Zantema.
\newblock {Termination of Term Rewriting by Semantic Labelling}.
\newblock {\em Fundam. Inform.}, 24(1/2):89--105, 1995.

\bibitem[Zan03]{terese:termination}
Hans Zantema.
\newblock Termination.
\newblock In Terese, editor, {\em Term Rewriting Systems}, pages 181--259.
  Cambrigde Univ. Press, 2003.

\bibitem[Zan05]{DBLP:journals/jar/Zantema05}
Hans Zantema.
\newblock Termination of string rewriting proved automatically.
\newblock {\em J. Autom. Reasoning}, 34(2):105--139, 2005.

\end{thebibliography}
\newcommand{\etalchar}[1]{$^{#1}$}

\appendix
\section{Composition Trees of Overlap Closures}

In this section we derive a left-recursive characterization of overlap closures
in string rewriting. By left-recursive, we mean that the recursive descent
takes place only in the left partners.
The definition of overlap closures recurses in both arguments as 
we always overlap a closure with a closure.
(Note: Since $R$ is fixed throughout this section, we simply write $\OC$ instead
of $\OC(R)$, analogously for other operators.)

\begin{definition}\label{app:def:oc}\cite{DBLP:journals/siamcomp/GuttagKM83}
For a rewrite system $R$, the set $\OC$ is defined as the least set such that
\begin{enumerate}
\item[(1)] $R\subseteq \OC$,
\item[(2)] if $(s,tx)\in \OC$ and $(xu,v)\in \OC$
  for some $t,x,u\neq \emptystring$ then $(su,tv)\in \OC$;
\item[(2')] if $(s,xt)\in \OC$ and $(ux,v)\in \OC$
  for some $t,x,u\neq \emptystring$ then $(us,vt)\in \OC$;
\item[(3)] if $(s,tut')\in \OC$ and $(u,v)\in \OC$
  then $(s,tvt')\in \OC$;
\item[(3')]
  if $(u,v)\in\OC$ and $(svs',t)\in \OC$
  then $(sus',t)\in \OC$.
\end{enumerate}
\end{definition}

The following recursive definition is left-recursive
(we overlap a closure with a rule).
We need an extra rule (Item~4)
and drop a rule (Item~3'),
the others correspond to Definition~\ref{app:def:oc}.
\begin{definition}\label{app:def:oc'}
For a rewrite system $R$, the set $\OC'$ is defined as the least set such that
\begin{enumerate}
\item[(1)] $R\subseteq \OC'$,
\item[(2)] if $(s,tx)\in \OC'$ and $(xu,v)\in R$
for some $t,x,u\neq \emptystring$ then $(su,tv)\in \OC'$;
\item[(2')] if $(s,xt)\in \OC'$ and $(ux,v)\in R$
for some $t,x,u\neq \emptystring$ then $(us,vt)\in \OC'$;
\item[(3)] if $(s,tut')\in \OC'$ and $(u,v)\in R$
then $(s,tvt')\in \OC'$;
\item[(4)]
  if $(s,tx)\in \OC'$ and $(u,yv)\in \OC'$ and $(xwy,z)\in R$
for some $t,x,y,v\neq \emptystring$ then $(swu,tzv)\in \OC'$.
\end{enumerate}
\end{definition}

The main result of this Appendix is
that the set $\OC'$ covers the overlap closures up to inverse rewriting
of left hand sides.
Let $\OC_N := \{(s,t) \mid  s\to_R^* s' \land (s',t)\in \OC'\}$.
\begin{theorem}\label{thm:oc}
$\OC = \OC_N$.
\end{theorem}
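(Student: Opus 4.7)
The plan is to prove the two inclusions separately.

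For $\OC_N \subseteq \OC$, I would first show $\OC' \subseteq \OC$ by induction on the derivation in $\OC'$. Items (1)--(3) of Definition~\ref{app:def:oc'} match items (1)--(3) of Definition~\ref{app:def:oc} verbatim since $R \subseteq \OC$. Item~(4) is derivable in $\OC$ by two successive overlap compositions: starting from $(xwy,z) \in R \subseteq \OC$, combine with $(u, yv) \in \OC$ via item~(2') (overlap $y$) to obtain $(xwu, zv) \in \OC$, then combine with $(s, tx) \in \OC$ via item~(2) (overlap $x$) to obtain $(swu, tzv) \in \OC$. Once $\OC' \subseteq \OC$ is known, $\OC_N \subseteq \OC$ follows by iterating item~(3') of $\OC$: each step $s \to_R s'$ factors as $s = \alpha\ell\beta$, $s' = \alpha r\beta$ with $(\ell, r) \in R \subseteq \OC$, so item~(3') lifts $(s', t) \in \OC$ to $(s, t) \in \OC$.

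For $\OC \subseteq \OC_N$, I would induct on the derivation of $(s,t) \in \OC$, using as a standing lemma that every $\OC'$-pair $(s',t')$ satisfies $s' \to_R^+ t'$ (a short induction on $\OC'$). The easy cases are item (1), where $(s,t) \in R \subseteq \OC'$ and zero left-rewrites suffice; item (3'), where the lemma applied to the IH-pair for $(u,v)$ gives $u \to_R^+ v$, hence $sus' \to_R^+ svs' \to_R^* w$ for the $w$ with $(w,t) \in \OC'$ produced by the IH on $(svs', t)$; and item~(3), where the IH on $(u,v)$ plus the lemma yields a concrete derivation $u \to_R^+ v$ along which item~(3) of $\OC'$ can be applied step by step to transform $(s_1, tut') \in \OC'$ into $(s_1, tvt') \in \OC'$.

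The main obstacle is cases (2) and (2') of Definition~\ref{app:def:oc}. Given $(s, tx), (xu, v) \in \OC$ with overlap $x$, the IH supplies $s \to_R^* s_1$, $(s_1, tx) \in \OC'$ and $xu \to_R^* p$, $(p, v) \in \OC'$, but there is no direct way to glue them since neither $\OC'$ construction (2)--(3) handles a closure-closure overlap. My intended device is item~(4) of Definition~\ref{app:def:oc'}, which is the only $\OC'$-rule that combines two arbitrary $\OC'$-closures around a rule. To feed it, I plan a secondary induction on the derivation of $(xu, v) \in \OC$ that exposes a rule $(x'wy, z) \in R$ at the apex where the overlap is consumed, decomposing $(xu,v)$ (up to some initial $R$-rewrites of $xu$) as a composition of two sub-closures of $\OC'$ (by IH) in the pattern required by item~(4). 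Item~(4) then yields an $\OC'$-pair of the form $(s_1 w u_1, tzv_1)$, and the accumulated $R$-rewrites on the left, together with those from the IH, witness $(su, tv) \in \OC_N$. The delicate point will be to keep the non-emptiness conditions on the overlaps $x$ and $y$ satisfied in every sub-case and to verify that the two prefix rewrite sequences concatenate correctly into a derivation from $su$; here the composition-tree viewpoint announced in the section title should make the bookkeeping manageable.
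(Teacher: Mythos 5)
Your easy direction ($\OC_N\subseteq\OC$) is fine and matches what the paper gets for free from $\CT\supseteq\CT_N$: item~(4) of $\OC'$ is indeed simulated by one application of (2') followed by one of (2), and the prefix rewrite steps are absorbed by iterating (3'). The cases (1), (3), (3') of the hard direction are also handled correctly.

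The gap is in your treatment of cases (2) and (2'), and it is precisely the crux of the theorem. Your plan is a nested induction: outer on the derivation of $(s,t)\in\OC$, inner ("secondary") on the derivation of the right partner $(xu,v)\in\OC$, decomposing the right partner until a rule is exposed at the apex and then recombining via item~(4) of $\OC'$. The problem is that this recombination does not stay within the scope of either induction hypothesis. Concretely: to turn $2(c_1,2(c_2,c_3))$ into the left-recursive shape $2(2(c_1,c_2),c_3)$, you must first establish that the \emph{new} left partner $2(c_1,c_2)$ — a closure freshly assembled from the outer left partner and a piece of the inner right partner — is itself (after prefix rewriting) in $\OC'$. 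This closure is not a sub-derivation of $(s,t)$ nor of $(xu,v)$, and under any naive size or structural measure it is as large as what you started with, so neither your outer nor your inner induction hypothesis applies to it. The same phenomenon occurs in every rotation, and worse in the three-way cases where the overlap of the outer closure lands inside one of the constituents of an inner overlap (the paper's Lemma~\ref{lem:one-step} needs six sub-cases, by relative lengths of the overlapped segments, just for $4(c_1,c_1',4(c_2,c_2',c_3))$). The paper escapes this by making the recombination a \emph{rewrite system} $Q$ on composition trees and proving $Q$ terminating with a lexicographic combination of two interpretations — the essential one being $\rho$ with $\rho(2(c_1,c_2))=\rho(c_1)+2\rho(c_2)$, i.e.\ right arguments weighted double, which strictly decreases under every rotation even though the new left partner grows; the second interpretation $\sigma$ handles the bubbling of $3'$. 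The induction for $\OC\subseteq\OC_N$ is then well-founded induction on composition trees ordered by this reduction order, combined with the one-step semantic coverage lemma. Your proposal names the right local moves but supplies no global well-founded measure, and without one the argument does not close; finding that measure is the substantive content you are missing.
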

Since we are interested in right-hand sides of closures,
the extra rewrite steps in $\OC_N$ do not hurt.

In order to prove Theorem~\ref{thm:oc},
it is useful to represent a closure by a tree that describes the way
the closure is formed: the composition tree of the closure.
Each node of a composition tree denotes an application
of one of the inference rules of
Definitions~\ref{app:def:oc} and~\ref{app:def:oc'}.
An extra node type 3' denotes
an $\to_R$-step as seen in Theorem~\ref{thm:oc}.

\newcommand{\CT}{\textsf{CT}}

\begin{definition}[\cite{DBLP:journals/ita/GeserZ99}]
  Define the signature
  $\Omega = \{1, 2, 2', 3, 3', 4\}$,
  where $1$ is unary, $4$ is ternary, and the other symbols are binary. 
  The set $\CT$ of \emph{composition trees} is defined as the set of ground
  terms over $\Omega$.
\end{definition}

\begin{definition}
A composition tree represents a set of string pairs, as follows:
\begin{align*}
\sem{1} &= \{(\ell,r) \mid (\ell\to r)\in R\},\\
\sem{2(c_1,c_2)} &=
  \{(su,tv) \mid (s,tx)\in \sem{c_1}, (xu,v)\in \sem{c_2},
  t,x,u\neq \emptystring\},\\
\sem{2'(c_1,c_2)} &=
  \{(us,vt) \mid (s,xt)\in \sem{c_1}, (ux,v)\in \sem{c_2},
  t,x,u\neq \emptystring\},\\
\sem{3(c_1,c_2)} &=
  \{(s,tvt') \mid (s,tut')\in \sem{c_1}, (u,v)\in \sem{c_2}\},\\
\sem{3'(c_1,c_2)} &=
  \{(sus',t) \mid (svs',t)\in \sem{c_1}, (u,v)\in \sem{c_2}\},\\
\sem{4(c_1,c_2,c_3)} &=
  \{(swu,tzv) \mid (s,tx)\in \sem{c_1}, (u,yv)\in \sem{c_2},\\
&\quad \phantom{\{(swu,tzv) \mid } \quad
  (xwy,z)\in \sem{c_3}, t,x,y,v\neq \emptystring\} \enspace .
\end{align*}
This is conveniently extended to sets $S$ of composition trees:
\[
\sem{S} = \bigcup_{c\in S} \sem{c}.
\]
\end{definition}

\begin{example}
  The composition tree $4(1,2(1,1),3'(1,1))$ denotes all pairs
  obtained by the following overlaps of rewrite steps.
  Times flows from top to bottom. Each of the rectangles
  of height 1 is a step,
  corresponding to a 1 node in the tree.
  The grey rectangle in the top right is $2(1,1)$,
  the grey rectangle in the bottom is $3'(1,1)$.
\[
  \begin{tikzpicture}[scale=.35]
    \filldraw (5,2) [fill=black!10,draw=black!20] rectangle (9,4);
    \filldraw (1,0) [fill=black!10,draw=black!20] rectangle (7,2);
    \draw (0,2) rectangle (3,3);
    \draw (1,0) rectangle (7,1);
    \draw (2,1) rectangle (4,2);
    \draw (5,3) rectangle (8,4);
    \draw (6,2) rectangle (9,3);
  \end{tikzpicture}
  \]
\end{example}

Let $\CT_0$ denote the composition trees that do not contain
the function symbol~$4$. By construction we have:
\begin{lemma}
$\OC = \sem{\CT_0}$.
\end{lemma}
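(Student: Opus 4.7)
The plan is to prove the lemma by two mutual inclusions, each by a routine structural induction that exploits the deliberate one-to-one correspondence between the constructors in $\CT_0$ and the inference clauses of Definition~\ref{app:def:oc}.

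For $\sem{\CT_0}\subseteq \OC$, I would induct on the structure of $c\in\CT_0$. In the base case $c=1$ we have $\sem{1}=R\subseteq\OC$ by clause~(1) of Definition~\ref{app:def:oc}. In each inductive case $c\in\{2(c_1,c_2),\,2'(c_1,c_2),\,3(c_1,c_2),\,3'(c_1,c_2)\}$, the induction hypothesis gives $\sem{c_i}\subseteq\OC$ for $i=1,2$, and the very shape of $\sem{c}$ matches the side-conditions of clause~(2), (2'), (3), (3') of Definition~\ref{app:def:oc} verbatim, so the conclusion lands in $\OC$. Since $c\in\CT_0$, the symbol $4$ never occurs and we need not consider the ternary constructor.

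For $\OC\subseteq\sem{\CT_0}$, I would induct on the definition of $\OC$ as a least fixed point (equivalently, on the derivation of membership). If $(s,t)\in R$, take the tree $1$. If $(s,t)$ is obtained from $(s_1,t_1),(s_2,t_2)\in\OC$ by clause~($\kappa$) with $\kappa\in\{2,2',3,3'\}$, apply the induction hypothesis to get $c_1,c_2\in\CT_0$ with $(s_i,t_i)\in\sem{c_i}$, and take $c=\kappa(c_1,c_2)$; by the definition of $\sem{\kappa(\cdot,\cdot)}$, $(s,t)\in\sem{c}$, and $c\in\CT_0$ since $c_1,c_2$ are.

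There is no real obstacle here: the lemma is essentially a bookkeeping statement asserting that the tree syntax restricted to $\Omega\setminus\{4\}$ is just a faithful notation for derivations in the inductive definition of $\OC$. The only thing worth spelling out carefully is that the non-emptiness side-conditions on $t,x,u$ (and $t,x,u,v$ for clause~(2')) in Definition~\ref{app:def:oc} appear in the very same form in the definition of $\sem{2(\cdot,\cdot)}$ and $\sem{2'(\cdot,\cdot)}$, so they transfer in both directions of the induction without any extra argument.
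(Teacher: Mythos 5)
Your proof is correct and is exactly the routine double induction (structural induction on trees for $\sem{\CT_0}\subseteq\OC$, fixed-point induction on the derivation for $\OC\subseteq\sem{\CT_0}$) that the paper compresses into the phrase ``by construction''; the constructors of $\CT_0$ and the clauses of Definition~\ref{app:def:oc} do correspond verbatim, side conditions included. One immaterial slip: the non-emptiness condition in clause~(2') is on $t,x,u$ only, not on $v$, but this does not affect the argument.
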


Adding symbols~$4$ does not increase expressiveness,
since $\sem{4(c_1,c_2,c_3)} \subseteq \sem{2(c_1,2'(c_2,c_3))}$.
\begin{lemma}\label{lem:oc-ct}
$\OC = \sem{\CT}$.
\end{lemma}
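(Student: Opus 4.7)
The plan is structural induction on composition trees, leveraging the close correspondence between node types of $\CT$ and the inference rules of $\OC$ in Definition~\ref{app:def:oc}. The easy direction $\OC \subseteq \sem{\CT}$ follows immediately from the preceding lemma $\OC = \sem{\CT_0}$ together with $\CT_0 \subseteq \CT$.

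For the reverse inclusion $\sem{\CT} \subseteq \OC$, I would induct on $c \in \CT$. The base case $c = 1$ gives $\sem{1} = R \subseteq \OC$ by rule~(1) of Definition~\ref{app:def:oc}. For $c = i(c_1, c_2)$ with $i \in \{2, 2', 3, 3'\}$, the induction hypothesis together with the corresponding inference rule of Definition~\ref{app:def:oc} immediately yields $\sem{c} \subseteq \OC$, since the semantics of these $\CT$-nodes is defined precisely to mirror those rules.

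The only interesting case is $c = 4(c_1, c_2, c_3)$, which I would handle via the inclusion hinted at in the text, $\sem{4(c_1,c_2,c_3)} \subseteq \sem{2(c_1, 2'(c_2, c_3))}$. Concretely, given witnesses $(s, tx) \in \sem{c_1}$, $(u, yv) \in \sem{c_2}$, $(xwy, z) \in \sem{c_3}$ with $t, x, y, v \neq \emptystring$, one first produces $(xwu, zv) \in \sem{2'(c_2, c_3)}$ by splitting $yv = y \cdot v$ and $xwy = (xw) \cdot y$, where $xw \neq \emptystring$ since $x \neq \emptystring$. Then one applies the $2$-semantics to $(s, tx)$ and $(x \cdot wu, zv)$ to obtain $(swu, tzv) \in \sem{2(c_1, 2'(c_2, c_3))}$. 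The resulting tree lies in $\CT_0$, so by the already-proven cases for $2$ and $2'$, its semantics is contained in $\OC$.

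The main obstacle I anticipate is the degenerate case $w = u = \emptystring$, where the side condition $wu \neq \emptystring$ of the outer $2$-rule fails. There the target pair collapses to $(s, tzv)$, which I would instead produce by the alternative $\CT_0$-tree $3(3(c_1, c_2), c_3)$: a first application of rule~(3) uses $(\emptystring, yv) \in \sem{c_2}$ to insert $yv$ at the right end of $tx$, yielding $(s, txyv) \in \sem{3(c_1, c_2)}$; a second application uses $(xy, z) \in \sem{c_3}$ to rewrite the factor $xy$ inside $txyv$ to $z$, yielding $(s, tzv)$. This completes the induction and hence the lemma.
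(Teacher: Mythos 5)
Your proof is correct and follows essentially the same route as the paper, which justifies this lemma by the single remark that $\sem{4(c_1,c_2,c_3)} \subseteq \sem{2(c_1,2'(c_2,c_3))}$ together with the (implicit) structural induction that you spell out. You are in fact slightly more careful than the paper: that inclusion fails verbatim when $w=u=\emptystring$, since the outer $2$-node's side condition requires a nonempty overhang $wu$, and your fallback tree $3(3(c_1,c_2),c_3)$ correctly covers this degenerate case (your only slip is the claim that the witnessing tree lies in $\CT_0$ --- it need not, but the induction hypothesis on $c_1,c_2,c_3$ plus the closure rules of $\OC$ give the same conclusion).
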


In the remainder of this section,
we give a semantics-preserving transformation
from $\CT$ (arbitrary composition trees)
to a subset that describes the right-hand side of Theorem~\ref{thm:oc}.
Let us first characterize the goal precisely.
\begin{definition}
  The set $\CT_N$ is given by the regular tree grammar
  with variables $T,D$ (top, deep), start variable $T$, and rules
\begin{align*}
  T  \to 3'(1,T) \mid D, %
  \qquad
  D  \to 1 \mid 2(D,1) \mid 2'(D,1) \mid 3(D,1) \mid 4(D,D,1).
\end{align*}
\end{definition}

  Rules for $D$ correspond to the rules of Definition~\ref{app:def:oc'},
  creating $(s',t)\in\OC'$. Rules for $T$ correspond to the
  initial derivation $s\to_R^* s'$. Therefore,
\begin{lemma}\label{lem:oc-nf-neu}
  $\sem{\CT_N} = \OC_N$.
\end{lemma}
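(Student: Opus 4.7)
The plan is to decouple the two nonterminals of the grammar and match each to one layer of the definition of $\OC_N$. Concretely, I would prove the two equalities
\[
\sem{D}=\OC' \qquad\text{and}\qquad \sem{T}=\{(s,t)\mid \exists s'\colon s\to_R^* s'\wedge(s',t)\in\sem{D}\},
\]
from which $\sem{\CT_N}=\sem{T}=\OC_N$ follows immediately by Definition of $\OC_N$.

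For the first equality, note that the five productions of $D$ correspond one-to-one with the five clauses of Definition~\ref{app:def:oc'}: the base case $1$ matches $R\subseteq\OC'$ via $\sem{1}=R$; the binary constructors $2(D,1)$, $2'(D,1)$, $3(D,1)$ match clauses~(2),~(2'),~(3) respectively, noting that in each case the ``right partner'' has to be a rule of $R$, which is exactly what the constant $1$ in the second argument delivers; and $4(D,D,1)$ matches clause~(4), with the third argument a rule of $R$. I would prove $\sem{D}\subseteq\OC'$ by structural induction on $D$-trees (each constructor step is justified by the corresponding clause applied to the inductively obtained premises), and the reverse inclusion $\OC'\subseteq\sem{D}$ by induction on the inductive definition of $\OC'$ (each clause supplies the composition tree to be used).

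For the second equality, I would first observe that, by unrolling $T\to 3'(1,T)\mid D$, every $T$-tree has the shape $3'(1,3'(1,\dots 3'(1,d)\dots))$ with $k\ge 0$ outer applications of $3'(1,\cdot)$ and a $D$-tree $d$ at the bottom. Using the semantics
\[
\sem{3'(c_1,c_2)}=\{(sus',t)\mid (svs',t)\in\sem{c_1},\,(u,v)\in\sem{c_2}\},
\]
each prefix $3'(1,\cdot)$ contributes exactly one $R$-step on the left, since $\sem{1}=R$. A short induction on $k$ then gives $(s,t)\in\sem{T}$ iff there is some $s'$ with $s\to_R^k s'$ and $(s',t)\in\sem{d}\subseteq\sem{D}$. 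Combining with the first equality, this yields $\sem{T}=\OC_N$.

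The proof is essentially bookkeeping, and I do not expect any serious obstacle: the grammar was designed to match Definition~\ref{app:def:oc'} verbatim, and the role of $T$ is precisely to absorb the initial rewrite segment $s\to_R^* s'$ appearing in the definition of $\OC_N$. The one point that deserves explicit mention is that the outer layer $3'(1,T)$ genuinely corresponds to a single forward $R$-step on the left (not on the right, and not several at once), which is exactly what the asymmetric semantics of $3'$ together with $\sem{1}=R$ guarantees.
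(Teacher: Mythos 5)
Your decomposition --- proving $\sem{D}=\OC'$ and then analysing the spine of $3'(1,\cdot)$ nodes --- is exactly the argument the paper intends (its own proof is a two-sentence version of the same plan), and the first half is unproblematic: the five $D$-productions match the five clauses of Definition~\ref{app:def:oc'} one-to-one, and the two inductions you describe go through.

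The gap is in the second half, at precisely the point you single out as the crux. With the semantics of $3'$ as you quote it, $\sem{3'(c_1,c_2)}=\{(sus',t)\mid (svs',t)\in\sem{c_1},\,(u,v)\in\sem{c_2}\}$, the constant $1$ in $3'(1,T)$ occupies the position of $c_1$, i.e.\ it supplies the pair $(svs',t)$. Hence $\sem{3'(1,\tau)}=\{(sus',t)\mid (svs',t)\in R,\ (u,v)\in\sem{\tau}\}$: the rule is applied to the \emph{whole} current string at the \emph{end} of the derivation, and it is the inner tree $\tau$ that supplies the prepended part. Your induction on $k$ then does not yield ``$s\to_R^k s'$ and $(s',t)\in\sem{d}$''; instead, every element of $\sem{3'(1,\tau)}$ has an element of $\rhs(R)$ as its second component, which is false for $\OC_N$ in general. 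Concretely, for $R=\{a\to b,\ cbc\to ded,\ e\to f\}$ we have $(cac,dfd)\in\OC_N$ (via $cac\to_R cbc$ and $(cbc,dfd)\in\OC'$), but $dfd\notin\rhs(R)$ and $(cac,dfd)\notin\OC'$, so under the quoted semantics $(cac,dfd)\notin\sem{\CT_N}$ and the lemma would be false. The resolution is that the displayed clause for $3'$ has $c_1$ and $c_2$ interchanged relative to item~(3') of Definition~\ref{app:def:oc} and to the rewrite system $Q$ (for instance, Rule~\ref{rewr:3'2} preserves semantics only when $c_1$ supplies $(u,v)$): the intended reading is $\sem{3'(c_1,c_2)}=\{(sus',t)\mid (u,v)\in\sem{c_1},\,(svs',t)\in\sem{c_2}\}$. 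With that correction, the outer $1$ supplies the rule $(u,v)$, each $3'(1,\cdot)$ really does prepend a single $R$-step to the left component, and your induction works verbatim. So your proof is right in intent, but the property of $3'(1,\cdot)$ on which it pivots does not follow from the formula you invoke; a complete proof must notice and repair this mismatch.
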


We are going to construct a term rewrite system $Q$ on $\Omega$
that has $\CT_N$ as its set of normal forms. It must remove
all non-1 symbols from the left argument of $3'$,
and remove all non-1 symbols from the rightmost argument of $2,2',3,$ and $4$.
Also, it must remove all $3'$ that are below some non-$3'$.
These conditions already determine the set of left-hand sides of $Q$.

For each left-hand side $\ell$, the set of right-hand sides
must cover $\ell$ semantically:
\[
  \forall \ell\in\lhs(Q): \sem{\ell}\subseteq \bigcup_{(\ell,r)\in Q}\sem{r}.
\]

A term rewrite system $Q$ over signature $\Omega$
with the desired properties is defined
in Table~\ref{tab:comp-trs}.
We bubble-up 3' symbols,
e.~g., $2(3'(c_1,c_2),c_3)\to 3'(c_1,2(c_2,c_3))$ (Rule~\ref{rewr:23'*}),
and we rotate to move non-1 symbols,
e.~g.,  $2(c_1,2(c_2,c_3))\to 2(2(c_1,c_2),c_3)$ (Rule~\ref{rewr:22a}).
Rotation below $3'$ goes from left to right,
all other rotations go from right to left.
Rules~\ref{rewr:22'a} and~\ref{rewr:2'2a} show that symbol $4$
cannot be avoided.

\begin{table}[p]
\begin{minipage}[t]{68mm}
{\footnotesize
\begin{align}
2(c_1,2(c_2,c_3)) &\to 2(2(c_1,c_2),c_3)\label{rewr:22a}\\
2(c_1,2(c_2,c_3)) &\to 2(3(c_1,c_2),c_3)\label{rewr:22b}\\
2(c_1,2'(c_2,c_3)) &\to 4(c_1,c_2,c_3)\label{rewr:22'a}\\
2(c_1,2'(c_2,c_3)) &\to 3(2(c_1,c_2),c_3)\label{rewr:22'b}\\
2(c_1,3(c_2,c_3)) &\to 3(2(c_1,c_2),c_3)\label{rewr:23}\\
2(c_1,3'(c_2,c_3)) &\to 3'(c_2,2(c_1,c_3))\label{rewr:23'a}\\
2(c_1,3'(c_2,c_3)) &\to 2(2(c_1,c_2),c_3))\label{rewr:23'b}\\
2(c_1,3'(c_2,c_3)) &\to 2(3(c_1,c_2),c_3))\label{rewr:23'c}\\
2(3'(c_1,c_2),c_3) &\to 3'(c_1,2(c_2,c_3))\label{rewr:23'*}\\
2(c_1,4(c_2,c_2',c_3)) &\to 4(2(c_1,c_2),c_2',c_3)\label{rewr:24a}\\
2(c_1,4(c_2,c_2',c_3)) &\to 4(3(c_1,c_2),c_2',c_3)\label{rewr:24b}\\
2(c_1,4(c_2,c_2',c_3)) &\to 3(3(2(c_1,c_2'),c_2),c_3)\label{rewr:24c}\\
2'(c_1,2(c_2,c_3)) &\to 4(c_1,c_2,c_3)\label{rewr:2'2a}\\
2'(c_1,2(c_2,c_3)) &\to 3(2'(c_1,c_2),c_3)\label{rewr:2'2b}\\
2'(c_1,2'(c_2,c_3)) &\to 2'(2'(c_1,c_2),c_3)\label{rewr:2'2'a}\\
2'(c_1,2'(c_2,c_3)) &\to 2'(3(c_1,c_2),c_3)\label{rewr:2'2'b}\\
2'(c_1,3(c_2,c_3)) &\to 3(2'(c_1,c_2),c_3)\label{rewr:2'3}\\
2'(c_1,3'(c_2,c_3)) &\to 3'(c_2,2'(c_1,c_3))\label{rewr:2'3'a}\\
2'(c_1,3'(c_2,c_3)) &\to 2'(2'(c_1,c_2),c_3)\label{rewr:2'3'b}\\
2'(c_1,4(c_2,c_2',c_3)) &\to 4(c_2,2'(c_1,c_2'),c_3)\label{rewr:2'4a}\\
2'(c_1,4(c_2,c_2',c_3)) &\to 4(c_2,3(c_1,c_2'),c_3)\label{rewr:2'4b}\\
2'(c_1,4(c_2,c_2',c_3)) &\to 3(3(2'(c_1,c_2),c_2'),c_3)\label{rewr:2'4c}\\
2'(3'(c_1,c_2),c_3) &\to 3'(c_1,2'(c_2,c_3)),\\ %
3(c_1,2(c_2,c_3)) &\to 3(3(c_1,c_2),c_3)\label{rewr:32}\\
3(c_1,2'(c_2,c_3)) &\to 3(3(c_1,c_2),c_3)\label{rewr:32'}\\
3(c_1,3(c_2,c_3)) &\to 3(3(c_1,c_2),c_3)\label{rewr:33}\\
3(c_1,3'(c_2,c_3)) &\to 3(3(c_1,c_2),c_3)\label{rewr:33'}\\
3(3'(c_1,c_2),c_3) &\to 3'(c_1,3(c_2,c_3))\label{rewr:33'*}
\end{align}
}
\end{minipage}
\begin{minipage}[t]{75mm}
{\footnotesize
\begin{align}
3(c_1,4(c_2,c_2',c_3)) &\to 3(3(3(c_1,c_2),c_2'),c_3)\label{rewr:34}\\
3'(2(c_1,c_2),c_3) &\to 3'(c_1,3'(c_2,c_3))\label{rewr:3'2}\\
3'(2'(c_1,c_2),c_3) &\to 3'(c_1,3'(c_2,c_3))\label{rewr:3'2'}\\
3'(3(c_1,c_2),c_3) &\to 3'(c_1,3'(c_2,c_3))\label{rewr:3'3}\\
3'(3'(c_1,c_2),c_3) &\to 3'(c_1,3'(c_2,c_3))\label{rewr:3'3'}\\
3'(4(c_1,c_1',c_2),c_3) &\to 3'(c_1,3'(c_1',3'(c_2,c_3)))\label{rewr:3'4}\\
4(c_1,c_1',2(c_2,c_3)) &\to 4(2(c_1,c_2),c_1',c_3)\label{rewr:42a}\\
4(c_1,c_1',2(c_2,c_3)) &\to 3(4(c_1,c_1',c_2),c_3)\label{rewr:42b}\\
4(c_1,c_1',2(c_2,c_3)) &\to 4(3(c_1,c_2),c_1',c_3)\label{rewr:42c}\\
4(c_1,c_1',2'(c_2,c_3)) &\to 3(4(c_1,c_1',c_2),c_3)\label{rewr:42'a}\\
4(c_1,c_1',2'(c_2,c_3)) &\to 4(c_1,2(c_1',c_2),c_3)\label{rewr:42'b}\\
4(c_1,c_1',2'(c_2,c_3)) &\to 4(c_1,3(c_1',c_2),c_3)\label{rewr:42'c}\\
4(c_1,c_1',3(c_2,c_3)) &\to 3(4(c_1,c_1',c_2),c_3)\label{rewr:43}\\
4(c_1,c_1',3'(c_2,c_3)) &\to 3'(c_2,4(c_1,c_1',c_3))\label{rewr:43'a}\\
4(c_1,c_1',3'(c_2,c_3)) &\to 4(2(c_1,c_2),c_1',c_3)\label{rewr:43'b}\\
4(c_1,c_1',3'(c_2,c_3)) &\to 4(c_1,2'(c_1',c_2),c_3)\label{rewr:43'c}\\
4(c_1,c_1',3'(c_2,c_3)) &\to 3(4(c_1,c_1',c_2),c_3)\label{rewr:43'd}\\
4(c_1,c_1',3'(c_2,c_3)) &\to 4(3(c_1,c_2),c_1',c_3)\label{rewr:43'e}\\
4(c_1,c_1',3'(c_2,c_3)) &\to 4(c_1,3(c_1',c_2),c_3)\label{rewr:43'f}\\
4(3'(c_1,c_2),c_1',c_3) &\to 3'(c_1,4(c_2,c_1',c_3))\label{rewr:43'*a}\\
4(c_1,3'(c_1',c_2),c_3) &\to 3'(c_1',4(c_1,c_2,c_3))\label{rewr:43'*b}\\
4(c_1,c_1',4(c_2,c_2',c_3)) &\to 4(2(c_1,c_2),2'(c_1',c_2'),c_3)\label{rewr:44a}\\
4(c_1,c_1',4(c_2,c_2',c_3)) &\to 4(3(c_1,c_2),2'(c_1',c_2'),c_3)\label{rewr:44b}\\
4(c_1,c_1',4(c_2,c_2',c_3)) &\to 4(2(c_1,c_2),3(c_1',c_2'),c_3)\label{rewr:44c}\\
4(c_1,c_1',4(c_2,c_2',c_3)) &\to 4(3(c_1,c_2),3(c_1',c_2'),c_3)\label{rewr:44d}\\
4(c_1,c_1',4(c_2,c_2',c_3)) &\to 3(3(4(c_1,c_1',c_2'),c_2),c_3)\label{rewr:44e}\\
4(c_1,c_1',4(c_2,c_2',c_3)) &\to 3(3(4(c_1,c_1',c_2),c_2'),c_3)\label{rewr:44f}
\end{align}
}
\end{minipage}
\caption{The term rewrite system $Q$ for composition trees}\label{tab:comp-trs}
\end{table}

\clearpage

Termination of $Q$ follows from a lexicographic combination
of an interpretation $\rho$ that decreases under rotation,
and an interpretation $\sigma$ that decreases under bubbling.

\begin{lemma}\label{lem:r-term}
$Q$ terminates.
\end{lemma}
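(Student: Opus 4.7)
The plan is to exhibit a well-founded reduction order on $\CT$ that is strictly compatible with every rule of $Q$, built as a lexicographic product $(\rho, \sigma)$ of two components as the hint suggests. For $\rho$, I would use a polynomial interpretation over $\NN$: take $[\![1]\!]=1$, $[\![f]\!](x,y)=x+2y+1$ for $f\in\{2,2',3\}$, $[\![3']\!](x,y)=2x+y+1$, and $[\![4]\!](x,y,z)=x+y+2z+1$. Each interpretation is strictly monotone in every argument, so the induced order is closed under contexts; the asymmetry (right-heavy for $2,2',3$ and for the third argument of $4$, left-heavy for $3'$) reflects the fact that the rotations among $2,2',3$ and $4$ orient right-to-left, while the rotations among $3'$ orient left-to-right.

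With this $\rho$, a case analysis over the 49 rules of $Q$ shows $\rho(\ell)\geq\rho(r)$ throughout, and $\rho(\ell)>\rho(r)$ except for a handful of ``left-bubbling'' rules in which a $3'$ is lifted out of the leftmost subterm of the outer symbol---namely \eqref{rewr:23'*}, its $2'$-analogue, \eqref{rewr:33'*}, \eqref{rewr:43'*a}, and \eqref{rewr:43'*b}. For these five, both sides evaluate to the same polynomial: for instance, for \eqref{rewr:23'*}, both $[\![2(3'(c_1,c_2),c_3)]\!]=(2x_1+x_2+1)+2x_3+1$ and $[\![3'(c_1,2(c_2,c_3))]\!]=2x_1+(x_2+2x_3+1)+1$ simplify to $2x_1+x_2+2x_3+2$.

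For $\sigma$, I would take $\sigma(t)$ to be the number of ordered pairs of subterms $(u,v)$ of $t$ such that $v$ is a strict subterm of $u$, the head of $u$ lies in $\{2,2',3,4\}$, and the head of $v$ is $3'$. For each of the five exceptional left-bubbling rules, a direct count at the redex gives $\sigma(\ell)-\sigma(r)=1+n(s)\geq 1$, where $n(s)$ counts the $3'$-heads in the left subterm $s$ of the bubbled $3'$: the outer $\{2,2',3,4\}$-head of the LHS no longer dominates either the bubbled $3'$ itself or the $3'$-descendants hiding inside $s$. Crucially, each of these rules preserves the total number of $3'$-heads in the term, so the local $\sigma$-drop is not offset by cross-redex pair contributions and therefore lifts to arbitrary contexts. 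The lexicographic pair $(\rho,\sigma)$ with values in $\NN\times\NN$ is well-founded, so by construction every rule of $Q$ strictly decreases $(\rho,\sigma)$ in any context, yielding termination of $Q$. The main obstacle is not conceptual but administrative---verifying the polynomial inequality for each of the 49 rules, with the rules involving the ternary symbol $4$ requiring the most care since each bubbling or rotation at a $4$-node splits into several variants depending on which argument receives the incoming subterm.
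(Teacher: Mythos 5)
Your proposal is correct, and it follows the same overall strategy as the paper's proof: a lexicographic measure whose first component is a linear interpretation, right-heavy on $2,2',3$ and on the third argument of $4$, left-heavy on $3'$, so that all rotation rules strictly decrease and exactly the five bubbling rules (Rules~\ref{rewr:23'*}, \ref{rewr:33'*}, \ref{rewr:43'*a}, \ref{rewr:43'*b}, and the unlabelled $2'$-analogue of Rule~\ref{rewr:23'*}) are $\rho$-invariant. Your $\rho$ is the paper's $\rho$ up to an inessential affine shift ($1\mapsto 1$ plus a constant $+1$ per symbol, versus $1\mapsto 2$ with no constants). Where you genuinely diverge is the tie-breaker: the paper takes a second \emph{strictly monotone polynomial interpretation} $\sigma$ (pure products, with $3'$ interpreted as $xy+1$), so the lexicographic pair is a reduction order and closure under contexts is automatic; you instead take a combinatorial dominance count, which is not compositional, and so you must argue closure under contexts by hand. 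Your key observation --- that each bubbling rule preserves the number of $3'$-occurrences, hence the cross-redex pairs contributed by ancestors in the context are unchanged and the local drop $\sigma(\ell)-\sigma(r)=1+n(s)\ge 1$ survives in every context --- is exactly the right invariant, and the argument goes through. Two small points of hygiene: your count must range over subterm \emph{occurrences} (positions) rather than subterms as terms, otherwise repeated subterms break the claimed identity; and $Q$ has 55 rules rather than 49 (incidentally, you correctly include the unlabelled $2'$-bubbling rule among the $\rho$-invariant exceptions, which the paper's own proof text omits from its list).
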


\begin{proof}
Let the two interpretations $\rho$ and $\sigma$ on natural numbers be defined by
\begin{align*}
\rho(1) &= 2,\\
\rho(2(c_1,c_2)) &= \rho(2'(c_1,c_2)) = \rho(3(c_1,c_2)) = \rho(c_1)+2\rho(c_2),\\
\rho(3'(c_1,c_2)) &= 2\rho(c_1)+\rho(c_2),\\
\rho(4(c_1,c_2,c_3)) &= \rho(c_1)+\rho(c_2)+2\rho(c_3),\\[6pt]
\sigma(1) &= 2,\\
\sigma(2(c_1,c_2)) &= \sigma(2'(c_1,c_2)) = \sigma(3(c_1,c_2)) = \sigma(c_1)\cdot \sigma(c_2),\\
\sigma(3'(c_1,c_2)) &= \sigma(c_1)\cdot \sigma(c_2) + 1,\\
\sigma(4(c_1,c_2,c_3)) &= \sigma(c_1)\cdot \sigma(c_2)\cdot \sigma(c_3) \enspace .
\end{align*}
The order $>$ on terms defined by $s > t$ if $\rho(s) > \rho(t)$ or
$\rho(s) = \rho(t)$ and $\sigma(s) > \sigma(t)$ is a reduction order.
With this, the rules $\ell\to r$
in \ref{rewr:23'*}, \ref{rewr:33'*}, \ref{rewr:43'*a}, and \ref{rewr:43'*b} 
satisfy $\rho(\ell) = \rho(r)$ and  $\sigma(\ell) > \sigma(r)$.
For instance, Rule~\ref{rewr:43'*b} satisfies
$\rho(\ell) = \rho(r) = \rho(c_1)+\rho(c_1')+\rho(c_2)+2\rho(c_3)$ and
$\sigma(\ell) = (\sigma(c_1)\sigma(c_2) + 1)\sigma(c_1')\sigma(c_3) >
\sigma(c_1)\sigma(c_2)\sigma(c_1')\sigma(c_3) + 1 = \sigma(r)$.
All other rules $\ell\to r$ in $Q$ satisfy $\rho(\ell) > \rho(r)$.
For instance, Rule~\ref{rewr:44e} satisfies
$\rho(\ell) = \rho(c_1)+\rho(c_1')+2\rho(c_2)+2\rho(c_2')+4\rho(c_3) >
\rho(c_1)+\rho(c_1')+2\rho(c_2)+2\rho(c_2')+2\rho(c_3)
= \rho(r)$.
So $Q$ is ordered by the reduction order $>$, and so $Q$ terminates.
\end{proof}

The following lemma takes care of the semantic coverage property:
\begin{lemma}\label{lem:one-step}
For every composition tree $c$ that admits a $Q$ rewrite step,
and for every $(s,t)\in\sem{c}$ there
is a composition tree $c'$ such that both $c\to_Q c'$ and
$(s,t)\in\sem{c'}$.
\end{lemma}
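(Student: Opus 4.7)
The plan is to proceed by induction on the position of the $Q$-redex in $c$. The semantics $\sem{\cdot}$ is defined compositionally: $\sem{f(c_1,\ldots,c_n)}$ is expressed purely in terms of $\sem{c_i}$. Consequently, if $c'$ is obtained from $c$ by replacing a subtree $d$ by $d'$, and $\sem{d}\subseteq \sem{d'}$, then $\sem{c}\subseteq\sem{c'}$. Thus it suffices to establish the lemma when the redex is at the root of $c$; the general case follows by applying the root case to the subtree containing the redex, noting that the resulting inclusion at the subtree level lifts to an inclusion at the root, so the given pair $(s,t)$ is preserved.

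For the root case, I would group the rules of $Q$ by their common left-hand side $\ell$ and show that the union $\bigcup_{(\ell\to r)\in Q}\sem{r}$ covers $\sem{\ell}$. For each such group, I would unpack the definition of $\sem{\ell}$ to obtain $(s,t)\in\sem{\ell}$ together with all the witnesses of its nested decompositions (the intermediate strings produced by the inner applications of clauses 2, 2', 3, 3', 4 and the various nonemptiness constraints). Then I would perform a case analysis on the ways the witnessing strings from the inner and outer decomposition can overlap — the typical split being whether one of two adjacent pieces is a prefix/suffix of the other, or whether they overlap strictly. Each subcase corresponds by construction to one of the right-hand sides in the group, and in that subcase a direct unfolding of $\sem{\cdot}$ on that right-hand side produces the same $(s,t)$.

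The main obstacle is the volume of the case analysis, especially for the rules around symbol $4$ (namely \ref{rewr:44a}--\ref{rewr:44f} and \ref{rewr:2'4a}--\ref{rewr:2'4c}, \ref{rewr:24a}--\ref{rewr:24c}, \ref{rewr:43'a}--\ref{rewr:43'f}), where the three-way decomposition of $4$ combined with another compound operator yields several distinct overlap configurations, each of which must be matched to exactly one right-hand side. A subtler point is the rules that \emph{introduce} $4$ on the right, such as \ref{rewr:22'a} and \ref{rewr:2'2a}: there the case split on the inner $2'$ (resp.\ $2$) overlap has one case in which the two middle pieces interleave too deeply to be rewritten as a composition of binary operators, and this is precisely the case that the ternary $4$ is designed to capture; the alternative right-hand sides \ref{rewr:22'b} and \ref{rewr:2'2b} cover the complementary, shallow-overlap cases.

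Finally, I would note that the clauses $c_i$ of each rule are metavariables and the right-hand sides use exactly the same $c_i$, so in every case the chosen $c'=r[c_1,\ldots,c_n]$ is a valid $Q$-rewrite of $c=\ell[c_1,\ldots,c_n]$; only the semantic coverage requires the case analysis, and termination of this analysis is immediate since the list of rules sharing each left-hand side is finite and the number of overlap subcases is bounded.
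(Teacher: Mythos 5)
Your proposal matches the paper's proof: a case analysis over the left-hand sides of $Q$, unpacking $\sem{\ell}$ into its witnessing decomposition and splitting on the relative lengths of the overlapping pieces, with each subcase selecting the corresponding right-hand side (the paper works out only the case $4(c_1,c_1',4(c_2,c_2',c_3))$, whose six length configurations match Rules~\ref{rewr:44a}--\ref{rewr:44f}). Your explicit reduction to the root redex is sound and slightly more careful than the paper's implicit treatment, though note that the root case yields a per-pair choice of right-hand side rather than a uniform inclusion $\sem{d}\subseteq\sem{d'}$, so the lifting should be phrased as preserving the particular witness pair contributed by the subtree --- which is exactly how you conclude.
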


\begin{proof}
The proof is done by a case analysis over all left hand sides of $Q$.
We show only one particularly complex case; the other cases work similarly.

Let $c = 4(c_1,c_1',4(c_2,c_2',c_3))$. By definition of $\sem{\cdot}$,
we get $s = \hat{s}wu$, $t = \hat{t}zv$,
$(\hat{s},\hat{t}x)\in \sem{c_1}$, $(u,yv)\in \sem{c_1'}$,
$(xwy,z)\in \sem{4(c_2,c_2',c_3)}$ for some $\hat{t},x,y,v\neq \emptystring$
and some $\hat{s}$.
Again, we get $xwy = s'w'u'$, $z = t'z'v'$,
$(s',t'x')\in \sem{c_2}$, $(u',y'v')\in \sem{c_2'}$,
$(x'w'y',z')\in \sem{c_3}$ for some $t',x',y',v'\neq \emptystring$
and some $s'$.
We distinguish cases according to the relative lengths:
\begin{enumerate}
\item $|x| < |s'|$, $|y| < |u'|$.
Then $(\hat{s}s'',\hat{t}t'x')\in \sem{2(c_1,c_2)}$
where $s''\neq \emptystring$ is defined by $s' = xs''$.
Next, $(u''u,y'v'v)\in \sem{2'(c_1',c_2')}$
where $u''\neq \emptystring$ is defined by $u' = u''y$.
Finally, $(s,t) = (\hat{s}s''w'u''u,\hat{t}t'z'v'v)\in
\sem{4(2(c_1,c_2),2'(c_1',c_2'),c_3)}$,
and we choose $c\to c' = 4(2(c_1,c_2),2'(c_1',c_2'),c_3)$
by Rule~\ref{rewr:44a}.
\item $|s'| \leq |x| < |s'w'|$, $|y| < |u'|$.
Then $(\hat{s},\hat{t}t'x'')\in \sem{3(c_1,c_2)}$
where $x''$ is defined by $x = s'x''$.
Next, $(u''u,y'v'v)\in \sem{2'(c_1',c_2')}$
where $u''\neq \emptystring$ is defined by $u' = u''y$.
Finally, $(s,t) = (\hat{s}w''u''u,\hat{t}t'z'v'v)\in
\sem{4(3(c_1,c_2),2'(c_1',c_2'),c_3)}$,
where $w''$ is defined by $w' = x''w''$,
and we choose $c\to c' = 4(3(c_1,c_2),2'(c_1',c_2'),c_3)$
by Rule~\ref{rewr:44b}.
\item $|x| < |s'|$, $|u'| \leq |y| < |w'u'|$.
This case is symmetric to Case~2. We use Rule~\ref{rewr:44c}.
\item $|s'| \leq |x|$, $|u'| \leq |y|$.
Then $(\hat{s},\hat{t}t'x'')\in \sem{3(c_1,c_2)}$
where $x''$ is defined by $x = s'x''$.
Next, $(u,y''v'v)\in \sem{3(c_1',c_2')}$
where $y''$ is defined by $y = y''u'$.
Finally, $(s,t) = (\hat{s}w''u,\hat{t}t'z'v'v)\in
\sem{4(3(c_1,c_2),3(c_1',c_2'),c_3)}$,
where $w''$ is defined by $w' = x''w''y''$,
and we choose $c\to c' = 4(3(c_1,c_2),3(c_1',c_2'),c_3)$
by Rule~\ref{rewr:44d}.
\item $|s'w'| \leq |x|$, $|y| < |u'|$.
Then $(\hat{s}u''u,\hat{t}y'v')\in \sem{4(c_1,c_2,c_2')}$
where $x''$ is defined by $x = s'w'x''$,
and $u''$ is defined by $u' = x''u''$.
Next, $(\hat{s}u''u,\hat{t}t'x'w'y'v'v)\in \sem{3(4(c_1,c_2,c_2'),c_1')}$.
Finally, $(s,t) = (\hat{s}u''u,\hat{t}t'z'v'v)\in
\sem{3(3(4(c_1,c_2,c_2'),c_1'),c_3)}$,
by Rule~\ref{rewr:44e}.
\item $|x| < |s'|$, $|w'u'| \leq |y|$.
This case is symmetric to Case~5. We use Rule~\ref{rewr:44f}.
\qed
\end{enumerate}
\renewcommand\qed{}
\end{proof}

Now we are ready to prove Theorem~\ref{thm:oc}.
\begin{proof}[Proof of Theorem~\ref{thm:oc}]
For ``$\supseteq$'', we observe that $\CT \supseteq \CT_N$ and hence
$\OC = \sem{\CT} \supseteq \sem{\CT_N} = \OC_N$ by Lemmata~\ref{lem:oc-ct}
and~\ref{lem:oc-nf-neu}.
For ``$\subseteq$'',
we prove that $c\in \OC$ and $(s,t)\in \sem{c}$ implies
$(s,t)\in \OC_N$. We do so by induction on $c$,
ordered by $>$. If $c$ admits a $Q$ rewrite step then
by Lemma~\ref{lem:one-step} there is a composition tree $c'$ such that
both $c\to_Q c'$ and $(s,t)\in\sem{c'}$. Because $c > c'$, the claim follows
by inductive hypothesis for $c'$. Now suppose that $c$ is in $Q$-normal form.
Then $c\in\CT_N$, and so $\sem{c}\subseteq\OC_N$ by Lemma~\ref{lem:oc-nf-neu}.
\end{proof}

From Theorem~\ref{thm:oc}, we immediately get:
\begin{corollary}
$\rhs(\OC) = \rhs(\OC')$.
\end{corollary}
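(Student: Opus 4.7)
The plan is to derive this directly from Theorem~\ref{thm:oc}, since $\OC = \OC_N$ and the extra leading $R$-rewrite steps in the definition of $\OC_N$ only affect left-hand sides, leaving right-hand sides untouched. So the proof is essentially a one-line consequence, and the work has already been done in proving the theorem.

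First I would establish $\rhs(\OC') \subseteq \rhs(\OC)$. Given $t \in \rhs(\OC')$, pick $s'$ with $(s',t) \in \OC'$. Taking the empty derivation $s' \to_R^* s'$, we get $(s',t) \in \OC_N$. By Theorem~\ref{thm:oc}, $\OC_N = \OC$, so $(s',t) \in \OC$ and thus $t \in \rhs(\OC)$.

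Next I would establish the reverse inclusion $\rhs(\OC) \subseteq \rhs(\OC')$. Given $t \in \rhs(\OC)$, pick $s$ with $(s,t) \in \OC = \OC_N$ (again by Theorem~\ref{thm:oc}). By definition of $\OC_N$, there exists $s'$ with $s \to_R^* s'$ and $(s',t) \in \OC'$, hence $t \in \rhs(\OC')$.

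There is no real obstacle here: the only subtle point is recognizing that the definition $\OC_N = \{(s,t) \mid s \to_R^* s' \land (s',t) \in \OC'\}$ is precisely designed so that taking right-hand sides collapses the extra rewriting and leaves $\rhs(\OC_N) = \rhs(\OC')$. All the genuine content has been absorbed into Theorem~\ref{thm:oc} (which in turn relies on Lemmata \ref{lem:oc-ct}, \ref{lem:oc-nf-neu}, \ref{lem:r-term}, and \ref{lem:one-step}).
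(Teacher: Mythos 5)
Your proof is correct and matches the paper exactly: the paper states this corollary as an immediate consequence of Theorem~\ref{thm:oc}, and your two inclusions (the empty derivation for $\OC'\subseteq\OC_N$, and reading off $s'$ from the definition of $\OC_N$ for the converse) are precisely the intended one-line argument.
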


Because $\OC'$ is left-recursive,
we can derive a recursive characterization of the set of right hand sides
of overlap closures:

\begin{corollary}\label{cta:cor:main} (This is Corollary~\ref{ct:cor:main})
$\rhs(\OC)$ is the least set $S$ such that
\begin{enumerate}
\item \label{cta:cor:main:1} $\rhs(R)\subseteq S$,
\item \label{cta:cor:main:2} if $tx\in S$ and $(xu,v)\in R$
for some $t,x,u\neq \emptystring$ then $tv\in S$;
\item \label{cta:cor:main:3} if $xt\in S$ and $(ux,v)\in R$
for some $t,x,u\neq \emptystring$ then $vt\in S$;
\item \label{cta:cor:main:4} if $tut'\in S$ and $(u,v)\in R$
then $tvt'\in S$;
\item \label{cta:cor:main:5} if $tx\in S$ and $yv\in S$ and $(xwy,z)\in R$
for some $t,x,y,v\neq \emptystring$ then $tzv\in S$.
\end{enumerate}
\end{corollary}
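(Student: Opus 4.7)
The plan is to derive this from Theorem~\ref{thm:oc} (which asserts $\OC = \OC_N$) by passing to right-hand sides, and then matching Definition~\ref{app:def:oc'} against the five clauses of the statement almost verbatim.

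First, I would observe that since $\OC_N := \{(s,t) \mid s\to_R^* s' \land (s',t) \in \OC'\}$ differs from $\OC'$ only in the left component, we have $\rhs(\OC_N) = \rhs(\OC')$. Combining with Theorem~\ref{thm:oc} gives $\rhs(\OC) = \rhs(\OC_N) = \rhs(\OC')$. So it suffices to show that $\rhs(\OC')$ is the least set satisfying the five closure conditions.

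For the ``closure'' direction, I would verify that $S := \rhs(\OC')$ satisfies each of (1)--(5). Each follows by applying the correspondingly-numbered clause of Definition~\ref{app:def:oc'}: if $(s,r) \in \OC'$ witnesses $r \in S$ and $r$ matches the hypothesis of a clause (e.g.\ $r = tx$ and $(xu,v) \in R$), then the corresponding $\OC'$-closure rule (clause~(2) of Definition~\ref{app:def:oc'}) produces $(su, tv) \in \OC'$, so $tv \in S$; clauses (2'), (3), (4) map to (3), (4), (5) of the statement analogously. This part is essentially a direct translation and presents no obstacle.

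For the ``leastness'' direction, let $S$ be any set satisfying (1)--(5). I would show by induction on the derivation of $(s,t) \in \OC'$ that $t \in S$. The base case uses clause (1) of Definition~\ref{app:def:oc'} and condition~(1) on $S$. Each inductive step cites the matching condition on $S$: clause (2) of the definition yields the hypotheses of condition~(2) (with $r' = tx \in S$ by induction), and analogously for (2'), (3), and (4). The key observation is that in clauses (2), (2'), (3), (4) of Definition~\ref{app:def:oc'}, the right-hand side of the conclusion depends only on the right-hand sides of the premises, so the induction closes at the level of right-hand sides.

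I do not expect a real obstacle here, because all the structural work is already done by Theorem~\ref{thm:oc} (which handles the genuinely subtle direction: reducing arbitrary overlap closures to the left-recursive form via the composition-tree rewriting system $Q$). The corollary itself is essentially a bookkeeping step: project Definition~\ref{app:def:oc'} onto its right-component and note that each inference rule there becomes one of the five conditions. The only point to be careful about is that clause (4) of Definition~\ref{app:def:oc'} has two recursive premises in $\OC'$, so both inductive hypotheses must be invoked when matching condition~(5); no new issue arises.
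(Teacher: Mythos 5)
Your proposal is correct and matches the paper's route: the paper likewise derives $\rhs(\OC)=\rhs(\OC')$ from Theorem~\ref{thm:oc} and then reads off the five conditions from Definition~\ref{app:def:oc'}, relying on exactly the observation you make explicit, namely that in the left-recursive system each conclusion's right-hand side depends only on the premises' right-hand sides. The paper leaves this projection implicit ("Because $\OC'$ is left-recursive, we can derive\dots"), while you spell out the two inclusions; the content is the same.
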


\end{document}